\documentclass{article}

\usepackage[english]{babel}

\usepackage[letterpaper,top=2cm,bottom=2cm,left=2.5cm,right=2.5cm,marginparwidth=1.75cm]{geometry}

\usepackage{amsmath}
\usepackage{amssymb}
\usepackage{amsthm}   

\numberwithin{equation}{section} 

\newtheorem{theorem}{Theorem}[section]
\newtheorem{proposition}[theorem]{Proposition}
\newtheorem{lemma}[theorem]{Lemma}
\newtheorem{corollary}[theorem]{Corollary}

\newtheorem{remark}[theorem]{Remark}

\usepackage{graphicx}
\usepackage[colorlinks=true, allcolors=blue]{hyperref}
\usepackage{xcolor}

\definecolor{darkgreen}{RGB}{0,100,0}  

\usepackage{subfig} 
\graphicspath{{images/}} 
\usepackage{todonotes}

\def\R{\mathbb{R}}
\def\C{\mathbb{C}}
\def\eps{\varepsilon}
\def\rmd{\mathrm{d}}
\def\rmi{\mathrm{i}}
\def\rme{\mathrm{e}}
\def\calO{\mathcal{O}}

\definecolor{colorJRblue}{rgb}{0.,0.,1.}

\definecolor{colorLMred}{rgb}{1.,0.,0.}

\usepackage{authblk}

\title{Rolls and Snaking in a Swift-Hohenberg Equation\\ with Non-smooth Nonlinearity}
\author{L\"utfiye Masur} 
\author{Jens D. M. Rademacher} 

\affil{Department of Mathematics, University of Hamburg, 20146 Hamburg, Germany\\ \texttt{luetfiye.masur@uni-hamburg.de}\\ \texttt{jens.rademacher@uni-hamburg.de}}

\begin{document}
\maketitle

\begin{abstract}
We study rolls and homoclinic snaking  in a variation of the one-dimensional Swift-Hohenberg equation, whose standard forms are prototypical order-parameter models for pattern formation in the sciences. 
Motivated by classes of differential equation models that involve continuous non-smooth  
low order nonlinear terms, we replace the standard quadratic-cubic nonlinearity by 
$\nu |u|^\alpha-u^3$, $\alpha \in [1,2]$ with $\nu > 0$.  
In the vicinity of zero, for $\alpha<2$ this nonlinearity falls outside the scope of classical Taylor expansion and bifurcation analysis. 
Our partially analytical and partially numerical results highlight that the non-smooth term modifies the criticality of pattern-forming bifurcations and alters the associated branches of solutions. In particular, $\alpha\in(1,2)$ implies subcriticality of roll bifurcations for any $\nu>0$. At $\alpha=1$ differentiability is lost, which has a strong impact on the bifurcations of sign-changing rolls including the disappearance of homoclinic snaking. Homoclinic snaking thus emerges non-smoothly as $\alpha$ increases from $\alpha=1$, and persists when retaining an additional, e.g., quadratic term.

\end{abstract}

\noindent
\textbf{Keywords:} continuous piecewise linear vector field, pattern formation, numerical continuation, Lyapunov-Schmidt reduction, two-point boundary value problem, energy estimates

\tableofcontents

\section{Introduction}\label{sec:intro}
Motivated by the appearance of low order non-smooth nonlinearities in various modelling contexts, in this paper we study Swift-Hohenberg-type equations for an order parameter $u = u(x, t) \in \mathbb{R}$, $x \in \mathbb{R}$ with such terms.  
Our focus is on the impact on roll bifurcation criticality thus leading to the form
\begin{equation} \label{nonsmooth SHE}
	\partial_t u= -(1+\partial_x^2)^2 u + \mu u + \nu |u|^{\alpha} -u^3,  \quad \alpha \in [1,2],
\end{equation}
with fixed parameter $\nu \in\mathbb{R}$ and bifurcation parameter $\mu \in \mathbb{R}$. The classical Swift-Hohenberg equation in \cite{SwiftHohenberg1977} has $\nu=0$ and was originally introduced in the study of thermal fluctuations in a fluid near the Rayleigh-B\'enard instability. Here, the linear operator at $\mu=0$ selects a preferred (critical) wavelength and the cubic term provides nonlinear saturation, thereby bounding the amplitude of emerging solutions. A prominent generalisation features quadratic-cubic nonlinearity in the form
\begin{equation}\label{smooth SH}
	\partial_t u = -(1 + \partial_x^2)^2 u + \mu u + \nu u^2 -u^3, 
\end{equation}
where the quadratic nonlinearity breaks the reflection symmetry $u \rightarrow -u$ and via the interaction of harmonics  
can produce hysteresis and coexistence of stable patterned and homogeneous states.

Indeed, the generalised Swift-Hohenberg equation \eqref{smooth SH} serves as a prototypical model for pattern formation and has been extensively studied, especially concerning spatially periodic and localized steady states. In this broader context, it qualitatively models nonlinear pattern formation in a wide range of systems, including fluid convection, chemical reactions, biological systems, and nonlinear optics, e.g., \cite{CrossHohenberg1993,CrossGreenside2009}. It is well known that $\nu^2>\tfrac{27}{38}$ yields a subcritical regime, admitting bistability of the zero state and $2\pi$-periodic rolls, which leads to spatially localized states that consist of finite patches of rolls embedded in a homogeneous background. This can be understood via front pinning near Maxwell points, where homogeneous and patterned states have equal energy, which gives rise to localized states organized by homoclinic snaking \cite{BurkeKnobloch2006,BurkeKnobloch2007}. A geometric explanation of this phenomenon from the spatial ODE is provided by the theory of heteroclinic tangles and invariant manifold intersections developed in \cite{WoodsChampneys1999}, which describes how snaking arises from the unfolding of a degenerate reversible Hamiltonian--Hopf bifurcation. The resulting snakes-and-ladders bifurcation structure describes the existence and stability of localized patterns \cite{Beck2009}.

\medskip
Our motivation for the variant \eqref{nonsmooth SHE} stems from models that involve non-smooth nonlinearities, in particular with low order. In quantum droplet models, nonlinear terms of the form $|\psi|\psi$ arise naturally \cite{KatsimigaMistakidis2023,SaqlainMithun2023} and in hydrodynamic drag laws, where quadratic bottom drag in fact yields a Swift--Hohenberg-type equation with a $|\psi|\psi$ nonlinearity  \cite{PruggerRademacherYang2023}. More generally, drag forces in systems are modelled by terms of the form $v|u|$ for velocity components $u,v$, e.g.\ \cite{ZazoRdemacher2020,ZazoRademacher2023} and the references therein. While these nonlinearities are quadratic, the latter type includes piecewise linear behaviour. Also reaction--diffusion models feature piecewise-linear terms, e.g. \cite{ZemskovTsyganov2019,Bakanas2003}, where, however, the considered solutions do not pass through non-smooth parts. Generalised Swift--Hohenberg equations can be viewed as simplified models for reaction-diffusion systems with Turing instability, and thus provide a natural simplified context for investigating the impact of such nonlinearities on pattern formation. From an energetic viewpoint, a Swift-Hohenberg-type equation with nonlinearity $F':\R\to\R$ is an $L^2$-gradient flows with energy density $\big( (1 + \partial_x^2) u \big)^2 - \mu u^2 -F(u)$. Hence, another motivation for a term $|u|^\alpha$ in $F'$ is that it results from generalised potentials $F$ that contain the term $-u|u|^{\alpha}/(1+\alpha)$.

\medskip
Classical bifurcation theory relies on smooth nonlinearities and the validity of a Taylor expansion. For nonlinearities of the form $|u|^\alpha$, smoothness at $u=0$ is lost for $\alpha<2$, while for $\alpha=1$ it is piecewise linear and only Lipschitz continuous. In this paper study bifurcation phenomena of rolls and related localised steady states, in particular for the transition to $\alpha=1$. The differentiability for $\alpha\in(1,2]$ indeed admits a rigorous analysis of rolls bifurcating from $u=0$ using suitable estimates adapted to the non-smooth setting (Theorem~\ref{thm:rolls1}):  the bifurcation is subcritical for \emph{any} $\nu>0$ with convex, increasingly flat branch as $\alpha$ decreases below $\alpha=3/2$. See Figure~\ref{fig:intro1}(a). Numerically, this yields a widening bistable region, but we find that the pinning region of snaking in fact shrinks for decreasing $\alpha$, and the amplitude of energetically compatible rolls decreases. See Figure~\ref{fig:intro1}(b). (We remark that an absence of a threshold for subcriticality also occurs in the smooth cubic-quintic Swift-Hohenberg equation \cite{BurkeKnobloch2007}.) 

		\begin{figure}
			\centering
			\begin{tabular}{cc}
			\includegraphics[width=0.45\textwidth]{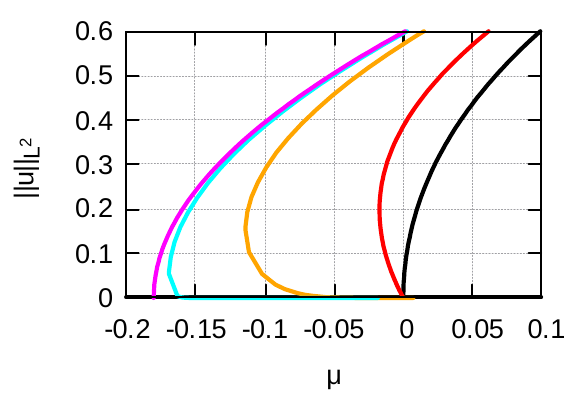} &
			\includegraphics[width=0.41\linewidth]{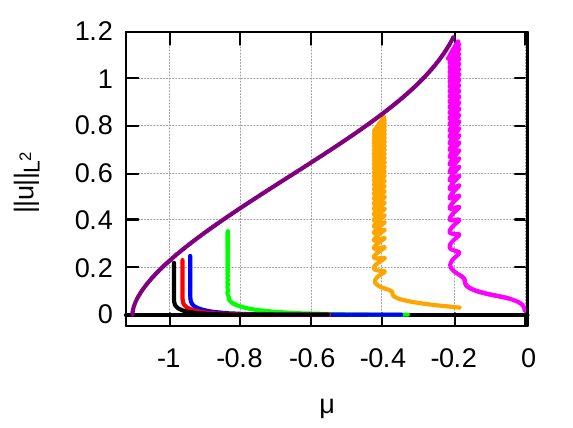}  \\
			(a) & (b)
			\end{tabular}
\caption{(a) Branches of $2\pi$-periodic rolls of \eqref{nonsmooth SHE} for $\nu=0.5$ at $\alpha = 2$ (black), $\alpha = 1.5$ (red), $\alpha = 1.1$ (orange), $\alpha = 1.01$ (cyan), $\alpha =1$ (magenta). (b) Snaking branches  of \eqref{nonsmooth SHE} for $\nu = 1.6$ and $\alpha = 2$ (magenta), $1.5$ (orange), $1.1$ (green), $1.05$ (blue), $1.04126$ (red), $1.03242$ (black) and the scaled $L^2$-norm of the rolls at the Maxwell points $\mu=\mu_M(\alpha,1.6)$ (purple) for $\alpha \in [1,2]$.}
			\label{fig:intro1}
		\end{figure}

\medskip
The limiting value $\alpha=1$ does not admit a linearisation at $u=0$ and thus even the location of bifurcation points is unclear. For merely Lipschitz ODE local invariant manifold theory still (abstractly) applies to some extent \cite{AulbachWanner1996}, and more specifically for piecewise linear systems in combination with Hopf-bifurcation analysis \cite{Hosham2018,PonceRosVela2022,KuepperHoshamWeiss2013,KuepperHosham2011}; see also \cite{ZazoRdemacher2020,ZazoRademacher2023}. However, these results for certain (non-reversible) Hopf bifurcations do not seem to constructively help our specific case, although our analysis shares similarities. For Hopf bifurcations in piecewise smooth planar systems we also refer to \cite{Simpson2022}, and more broadly for non-smooth dynamics and bifurcations to \cite{diBernardo2007}, as well as the more recent review \cite{BelykhKuskePorfiriSimson2023}. For perturbations from smooth systems, reduced dynamics can be studied via so-called spectral submanifolds \cite{BettiniCenedeseHaller2024}, which relates in our case to $\nu\approx 0$. 

Rather than building on abstract results, for $\alpha=1$ on the one hand we use energy estimates, and on the other hand analyse the bifurcation of sign-changing spatially periodic solutions directly by using linearity for $u>0$ and $u<0$, and a suitable boundary value problem. From energy estimates it is straightforward to rule out bifurcations for $\mu\leq -\nu$ (Lemma~\ref{lem:globaldecay}) and infer instability of the zero state for $\mu> 0$ (Remark~\ref{rem:energynondecay}). Indeed, we find that the primary bifurcation points $\mu^*(\nu)$ of sign-changing rolls occur for $-\nu< \mu < 0$. Here we use that for $\mu>-\nu$ the boundary value problem can be reduced to a system of two transcendental algebraic equations that differ for each configuration of characteristic roots for $u>0$ and $u<0$ (cf.\ Theorem~\ref{theorem: reduced system nu>1/2}). 

We numerically study the algebraic systems and find a unique branch of rolls in this range, proving this rigorously for $\nu\approx 0$ (Theorem~\ref{thm:smallnu}). Here we omit the cubic term for simplicity and expect the branch perturbs when including higher order terms analogous to \cite{BettiniCenedeseHaller2024}. Note that for $\nu |\cdot|$ any convex smoothening has vanishing derivative at $u=0$ and thus incorrectly yields the primary roll bifurcation at $\mu=0$. 

Interesting  effects occur at $\alpha=1$ concerning the periods of bifurcating rolls: Although there is no linearization that readily justifies a critical wavenumber, we numerically find that when increasing $\mu$ the primary rolls to bifurcate are $2\pi$-periodic, as suggested by the maxima of the Fourier-symbol of $-(1+\partial_x^2)^2$, albeit with unclear $\mu$-value (\S\ref{subsec: different periods}). However, in contrast to the case $\alpha>1$, where periodic solutions bifurcate continuously in terms of $\mu$ after the $2\pi$-periodic solution, for $\alpha=1$ the possible periods of the rolls are divided into two regions of $\mu$-values: in one region, the period is bounded from below by $\sqrt{2}\pi$, while in the other, it is bounded from above by $\sqrt{2}\pi$. At the critical period $\sqrt{2}\pi$, the rolls transition from sign-changing to non-sign-changing solutions (Theorem~\ref{thm:lowl}).

Overall, our results provide a first systematic examination on the effect of the regularity parameter $\alpha\in[1,2]$ in \eqref{nonsmooth SHE}, with particular focus on how it modifies the fundamental bifurcation mechanisms of rolls and homoclinic snaking.

\medskip
This paper is organised as follows. In \S\ref{sec:homrolls}, we study the spatially homogeneous steady states associated with the nonlinearity $\nu |u|^\alpha-u^3$, and, for $\alpha\in(1,2]$, the corresponding spatial eigenvalues and their relation to periodic roll solutions (\S\ref{sec:homostatesevals}). For $\alpha=1$ this includes rolls that do not change sign. In \S\ref{sec:rolls}, we  investigate the bifurcation of sign-changing rolls.  
In \S\ref{subsec: Bifurcation of periodic rolls} we consider $\alpha=1$ and perform the aforementioned reduction to algebraic equations, complemented by numerical computations. In \S\ref{subsec: different periods}, we analogously investigate roll solutions with different wavenumbers for $\alpha=1$.

Finally, in \S\ref{sec:snaking} we study homoclinic snaking. \S\ref{subsec: homoclinic snaking} examines this through the associated energy functional and Maxwell point analysis. In contrast to the smooth case, snaking appears to be absent when $\alpha=1$. To further investigate this phenomenon, \S\ref{subsec: homoclinic snaking at 1} concerns a modified model with an additional term designed to restore subcriticality and facilitate the existence of localized states at $\alpha=1$.

\section{Spatial ODE, homogeneous states and accompanying rolls}\label{sec:homrolls}
	
	Since \eqref{nonsmooth SHE} is invariant under the reflection symmetry $(\nu,u) \rightarrow (-\nu,-u)$, it suffices to restrict attention to the case $\nu >0$. The spatial profiles of steady states solves the so-called spatial ODE given by \eqref{nonsmooth SHE} with zero left hand side,
\begin{equation} \label{e:spatialNSSHE}
 -(1+\partial_x^2)^2 u + \mu u + \nu |u|^{\alpha} -u^3 =0.
\end{equation}
It is well known that 
 the stationary Swift-Hohenberg equation can be formulated as a Hamiltonian system in space, see \eqref{eq:energy_func}. In particular,  \eqref{e:spatialNSSHE} is a reversible ODE \cite{MR402815} through the reflection induced by $x\to-x$.

\subsection{Spatially Homogeneous Steady States}\label{sec:homostates}
			
		We first consider spatially homogeneous steady states of \eqref{nonsmooth SHE}, i.e., equilibria of \eqref{e:spatialNSSHE}.   
		In the classical case $\alpha = 2$ these are
		\begin{equation}\label{e:homeq}
			 u_0 = 0 \text{ and }  u_{\pm 0} = \frac{\nu}{2} \pm \frac{1}{2} \sqrt{\nu^2 -4(1-\mu)},
		\end{equation}
		which undergo,  at $\mu = 1$, a transcritical bifurcation and at $\mu = 1-\frac{\nu^2}{4}$ the nonzero branch undergoes a fold bifurcation. See Figure~\ref{figure: homogeneous states}(a).		
		\begin{figure}
			\centering
			\begin{tabular}{cc}
				\includegraphics[width=0.5\linewidth]{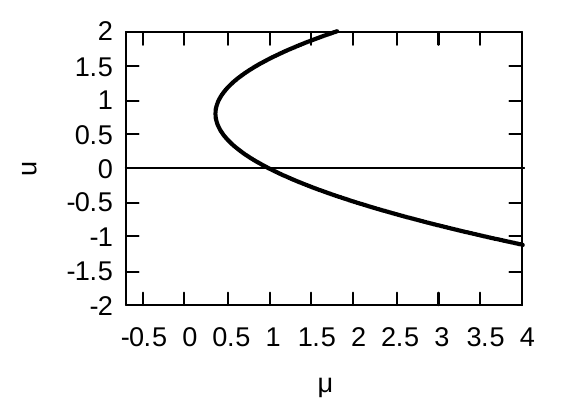} & \includegraphics[width=0.49\linewidth]{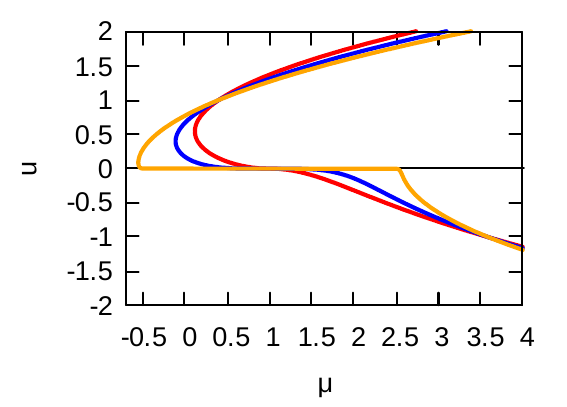} \\
				(a) & (b) 
			\end{tabular}
			\caption{ For $\nu = 1.6$: (a) Branches of spatially homogeneous equilibria in \eqref{e:homeq} for $\alpha=2$ . (b) Branches of spatially homogeneous equilibria of \eqref{eq: homeq_for_alpha} for $\alpha = 1.5 \text{ (red), } 1.25 \text{ (blue), } 1.01 \text{ (orange) }$.}
			\label{figure: homogeneous states}
		\end{figure}
	
\paragraph{The case $\alpha\in(1,2)$.} The spatially homogeneous steady states of \eqref{nonsmooth SHE} for $\alpha \in (1,2)$ 
satisfy
		\begin{equation}\label{eq: homeq_for_alpha}
			0 = -1 + \mu + \nu \text{sgn}(u) |u|^{\alpha-1}-u^2, 
		\end{equation} 
and it is convenient to solve this as 
		\begin{equation}
			\mu = \mu (u) = \begin{cases}
				1 - \nu u^{\alpha-1} + u^2, & u>0 \\
				1 + \nu (-u)^{\alpha-1} + u^2, & u<0 .
			\end{cases} 
		\end{equation} 
		In particular, $\mu (0) =1$, so the bifurcation point for the trivial state lies at $\mu = 1$ for all $\alpha \in (1,2)$. 
		
\noindent For $u>0$, differentiation gives
		\begin{equation}
			\mu '(u) = - \nu (\alpha-1) u^{\alpha-2} + 2u,
		\end{equation}
		and there is a unique critical point $\mu '(u) = 0$ at 
		\begin{equation}\label{e:homeqcrit}
			u_* = \biggl( \frac{\nu(\alpha-1)}{2} \biggr) ^{\frac{1}{3-\alpha}} .
		\end{equation}
		The corresponding parameter value gives a fold point (also for $\alpha=2$) at
		\begin{equation}
			\mu (u_*) = 1- \nu \biggl( \frac{\nu(\alpha-1)}{2} \biggr)^{\frac{\alpha-1}{3-\alpha}} + \biggl( \frac{\nu(\alpha-1)}{2} \biggr)^{\frac{2}{3-\alpha}} .
		\end{equation}   
		See Figure \ref{figure: homogeneous states}(b). For $\alpha=2$ this is the fold in Figure~\ref{figure: homogeneous states}(a) and the limiting value for $\alpha\to 1$ is given by
		\begin{equation}
			\lim_{\alpha\rightarrow 1} \mu(u_*) = 1- \nu \text{ with } \lim_{\alpha\rightarrow 1} u_* = 0 ,
		\end{equation}
		and corresponds to the left bifurcation point in Figure~\ref{figure: homogeneous for alpha 1}. \\
		\noindent For $u<0$, there are no fold points, but  
		\begin{equation}
			\mu ''(u)  =\nu (\alpha -1)(\alpha-2)(-u)^{\alpha-3} +2, 
		\end{equation}
		which has a root at 
		\begin{equation}
			u_*^- = - \biggl( - \frac{\nu (\alpha-1)(\alpha-2)}{2} \biggr) ^{\frac{1}{3-\alpha}} 
		\end{equation}
		with corresponding parameter value 
		\begin{equation}
			\mu (u_*^-) = 1+ \nu \biggl( - \frac{\nu(\alpha-1)(\alpha-2)}{2}\biggr)^{\frac{\alpha-1}{3-\alpha}} + \biggl( - \frac{\nu (\alpha-1)(\alpha-2)}{2} \biggr)^{\frac{2}{3-\alpha}} .
		\end{equation} 
		For $\alpha=2$ this lies at at the transcritical bifurcation point in Figure~\ref{figure: homogeneous states}(a), and the limit for $\alpha\to1$, 
		\begin{equation}
			\lim_{\alpha\rightarrow 1} \mu (u_*^-) = 1 + \nu \text{ with } \lim_{\alpha\rightarrow 1} u_*^- = 0 ,
		\end{equation}
		corresponds to the right bifurcation point in Figure~\ref{figure: homogeneous for alpha 1}. 

\paragraph{The case $\alpha=1$.}
		
The spatially homogeneous steady states for $\alpha = 1$ solve 
		\begin{equation}\label{e:homeqal1a}
			0 = - u + \big(\mu + \text{sgn(u)}\nu \big) u -u^3. 
		\end{equation}
		The term $\text{sgn}(u)\nu u$ breaks the symmetry so that one branch with $u>0$ emerges at $\mu = 1-\nu$ and another with $u<0$ at $\mu = 1+\nu$. These are explicitly given by 
		\begin{equation}\label{e:homeqal1}
			u_0 = 0, \quad u_{+0} = \sqrt{\mu + \nu -1} \text{ for } u>0, \quad u_{-0} = -\sqrt{\mu - \nu -1} \text{ for } u<0,
		\end{equation}
		so there is one supercritical `half-pitchfork' bifurcation at $\mu = 1-\nu$ and one at $\mu = 1+\nu$. See  Figure~\ref{figure: homogeneous for alpha 1}.

\subsection{Spatial eigenvalues and related roll solutions}\label{sec:homostatesevals}

\begin{figure}
	\centering
	\includegraphics[width=0.5\textwidth]{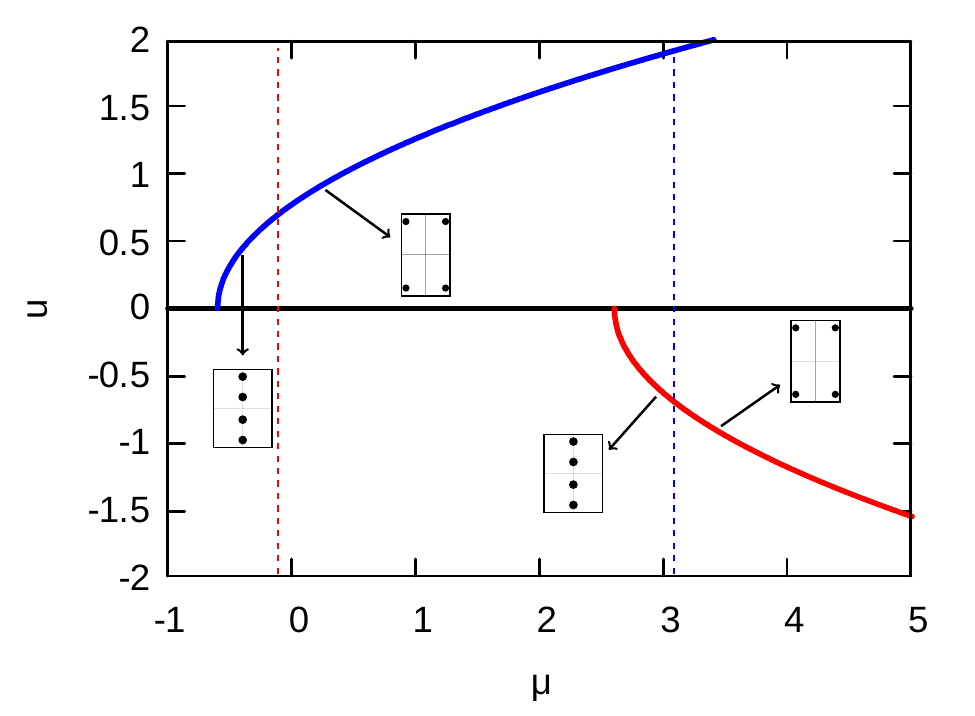}
	\caption{Spatially homogeneous solutions of \eqref{nonsmooth SHE} at $\alpha=1$ together with the spatial eigenvalues of the branches of $u_{+0}$ and $u_{-0}$ for $\nu = 1.6$, which are given in \eqref{e:homeqal1}. The red and blue dashed lines indicate the parameter values $3/2-\nu$ and $3/2+\nu$, where $u_{\pm0}$ undergo a reversible Hopf bifurcation, respectively. 
		Roll solutions accompany $u_\pm$  in \eqref{eq:u+-sol} , where the eigenvalues are purely imaginary (at least when non-resonant).}
	\label{figure: homogeneous for alpha 1}
\end{figure}

For $\alpha\in(1,2]$ the linearisation of \eqref{e:spatialNSSHE} in a spatially homogeneous equilibrium $u$ reads 
\begin{equation*}
	\mu v - (1+\partial_x^2)^2 v + \alpha\nu\, \mathrm{sgn}(u) |u|^{\alpha-1}v - 3u^2 v =0,
\end{equation*} 
whose characteristic equation in terms of $k\in\C$ can be written as
\begin{equation}\label{e:spatialNSSHEchar}
	\mu - (1-k^2)^2 + \alpha\nu\, \mathrm{sgn}(u_0) |u|^{\alpha-1} - 3u^2=0.
\end{equation}
The solutions are given by the two sign choices and the negative values in
\begin{equation}\label{e:kpm}
k_\pm = \sqrt{1 \pm \sqrt{\mu+ \alpha\nu\, \mathrm{sgn}(u) |u|^{\alpha-1} - 3u^2}}
\end{equation}
which are the complex frequencies of the spatial eigenvalues $\pm\rmi k_\pm$ of the linearized spatial ODE. It is well known that if $k_+$ is real and $k_-$ purely imaginary, then the reversible Lyapunov centre theorem \cite{MR402815} applied to \eqref{e:spatialNSSHE} implies the existence of a one-parameter family of periodic orbits to \eqref{e:spatialNSSHE} in a smooth two-dimensional manifold including $u$. Its wavenumber converges to $k_+$ as $u$ is approached within the family. If both $k_\pm$ are real and non-resonant, then in addition such a family exists whose wavenumber converges to $k_-$. See also \cite[\S 4.3.4]{HaragusIooss}. For $k_\pm\notin\R$ there are no such accompanying periodic orbits. The transition from non-zero real to complex spatial frequencies is a reversible Hopf-bifurcation for the spatial ODE, cf.\ e.g.\ \cite[\S 4.3.3]{HaragusIooss} with more complex bifurcations. It is a Turing-instability/-bifurcation for the PDE and any spatially periodic orbits correspond to roll solutions in \eqref{nonsmooth SHE}. The transition of one frequency through zero, while another remains non-zero is a reversible Hopf-zero bifurcation, here in 4D cf.\ e.g.\ \cite[\S 4.3.1]{HaragusIooss}, also with more complex bifurcations. For comparison we explicitly note the well-known situation away from these complications for $\alpha=2$, cf.\ \eqref{e:homeq}, which is qualitatively analogous for all $\alpha\in(1,2)$:
\begin{itemize}
\item At $u=0$ we have $k_\pm\notin \R$ for $\mu<0$, $k_\pm\in\R$ for $\mu\in(0,1)$, and $k_+\in\R$, $k_-\in\rmi\R$ for $\mu>1$. Hence, $u=0$ is accompanied by rolls for $\mu>0$: two families for $\mu\in(0,1]$ (at least whenever $k_\pm$ are non-resonant) and one for $\mu> 1$. 

\item At $u=u_{+0}$ for $u_{+0}\in (u_*,  \frac{\nu + \sqrt{\nu^2 + 8}}{4})$, cf.\ \eqref{e:homeqcrit}, we have $k_\pm\in\R$, and for $u>\frac{\nu + \sqrt{\nu^2 + 8}}{4}$ we have $k_\pm\notin\R$. Hence, $u=u_{+0}$ is accompanied by rolls for $u_{+0}\in (u_*,\frac{\nu + \sqrt{\nu^2 + 8}}{4})$ only, and then two families (at least whenever $k_\pm$ are non-resonant). 

\item At $u=u_{-0}$ for $u_{-0}\in(0,u_*)$ we have $k_+\in\R$, $k_-\in\rmi\R$, for $u_{-0}\in\left(\frac{\nu - \sqrt{\nu^2 + 8}}{4},0\right)$ we have $k_\pm\in\R$, and for $u_{-0}<\frac{\nu - \sqrt{\nu^2 + 8}}{4}$ we have $k_\pm\notin \R$. Hence, $u=u_{-0}$ is accompanied by rolls for 
$u_{-0}\in(\frac{\nu - \sqrt{\nu^2 + 8}}{4},u_*)$ only: one family for $u_{-0}\in(0,u_*)$ and two otherwise (at least whenever $k_\pm$ are non-resonant).

(The ranges in terms of $\mu$ for latter two can be computed as $\mu\in(\mu(u_*),1)$, and $(1,\frac{12-\nu^2+\nu\sqrt{\nu^2+8}}{8})\big)$.)
\end{itemize}
\medskip
For $\alpha=1$, due to the lack of differentiability in $u=0$, one cannot linearize and there are no (unique) spatial eigenvalues for $u=0$ so that the Lyapunov centre theorem cannot be applied. We discuss this issue in more detail in \S\ref{subsec: Bifurcation of periodic rolls}. 
However, sufficiently close to the branches $u_{\pm0}$ from \eqref{e:homeqal1} such that $u\neq 0$ has fixed sign, this problem does not arise. 
In these cases \eqref{nonsmooth SHE} with $\tilde{\mu} = \mu+\nu$ for $u>0$ and $\tilde{\mu}=\mu-\nu$  for $u<0$ reads
\begin{equation}\label{eq:SHcubic}
	\partial_t u = -(1+\partial_x^2)^2 u + \tilde{\mu} u - u^3.
\end{equation} 
Hence, up to a shift in the bifurcation parameter, the situation is that of the classical cubic SHE. Without loss of generality we consider $u_{+0}= \sqrt{\tilde{\mu} -1}$ for $\tilde{\mu}>1$ and obtain linearized equation 
		\begin{equation*}
			\partial_t v = \tilde{\mu} v - (1+\partial_x^2)^2 v - 3(u_{+0})^2 v  = - (1+\partial_x^2)^2 v - (2\tilde{\mu}- 3)v.
		\end{equation*} 
 The corresponding characteristic equation with $\tilde\mu=\mu+\nu$ is
		
		\begin{equation}\label{eq:u_+0 char eq}
			\mu = -\frac{1}{2} (1-k^2)^2 + \frac{3}{2} - \nu
			\end{equation}
which yields the spatial eigenvalues $\rmi k$ with the complex wavenumbers $k$ 
\begin{equation}
k = \pm \tilde k_\pm, \quad \tilde k_\pm:=\sqrt{1 \pm \sqrt{3-2\mu-2\nu} }.
\end{equation} 
These are real valued for $1-\nu < \mu \leq 3/2-\nu$ and $\tilde k_\pm\notin\R$ for $\mu> 3/2-\nu$; recall $\mu>1-\nu$ for $u_{+0}$ to exist. Figure~\ref{figure: homogeneous for alpha 1} displays the corresponding spatial eigenvalues along the $u_{\pm0}$ branches. 
Hence, as above, for each parameter value $1-\nu < \mu < 3/2-\nu$ at which $k_\pm$ are rationally independent, $u_{+0}$ is accompanied by roll solutions in $\{u>0\}$. We note that at $\mu = 3/2-\nu$ a reversible Hopf bifurcation occurs with critical wavenumber $k_c=1$.

When the $u_{+0}$-branch connects with $u=0$ at $\tilde\mu=1$, an aforementioned reversible Hopf-zero bifurcation occurs in \eqref{eq:SHcubic}. This admits a branch of rolls that switch from sign-changing to non-sign-changing \cite[\S 4.3.1]{HaragusIooss} and thus yields a branch of admissible rolls for $\mu>1-\nu$; in \S\ref{subsec: different periods} we discuss its connection with an admissible branch for $\mu<1-\nu$.

We analogously obtain that roll solutions in $\{u<0\}$ accompany the $u_{-0}$ branch for $\mu \in (1+\nu,3/2+\nu)$ (at least where $\tilde k_\pm$ are non-resonant).

\section{Sign changing roll solutions}\label{sec:rolls}
	
In this section we examine roll solutions near $u=0$ in more detail. In \S\ref{sec:homostatesevals} we already noted their existence for $\alpha\in (1,2]$ as a consequence of spatial eigenvalues and here we discuss their bifurcations with focus on wavenumber $k=1$. In Figure~\ref{figure: periodic for different alphas}(a) we plot the spatial eigenvalues at $u=0$. When increasing $\mu$, spatial eigenvalues become purely imaginary at $\mu=0$ with wavenumber $1$, which corresponds to a Turing instability and thus we expect the emergence of $2\pi$-periodic rolls. In Figure~\ref{figure: periodic for different alphas}(a) we additionally plot, for $\alpha=2$, branches of homogeneous equilibria and $2\pi$-periodic rolls computed by numerical continuation. This highlights the well-known subcriticality of the bifurcation to rolls in the quadratic-cubic SHE if $\nu^2 >\frac{27}{38}$; here we take $\nu=1.6$. In \S\ref{s:rollsalphabigger1} we analyse the criticality of this bifurcation for $\alpha\in(1,2]$. 

In Figure~\ref{figure: periodic for different alphas}(b) we plot numerical continuations of these branches outside the scope of weakly nonlinear analysis for decreasing values of $\alpha$ and also for $\alpha=1$. We analyse the corresponding bifurcation to rolls in the most degenerate case $\alpha=1$ in \S\ref{subsec: Bifurcation of periodic rolls}.
		
		\begin{figure}
			\centering
			\begin{tabular}{cc}
				\includegraphics[width=0.47\linewidth]{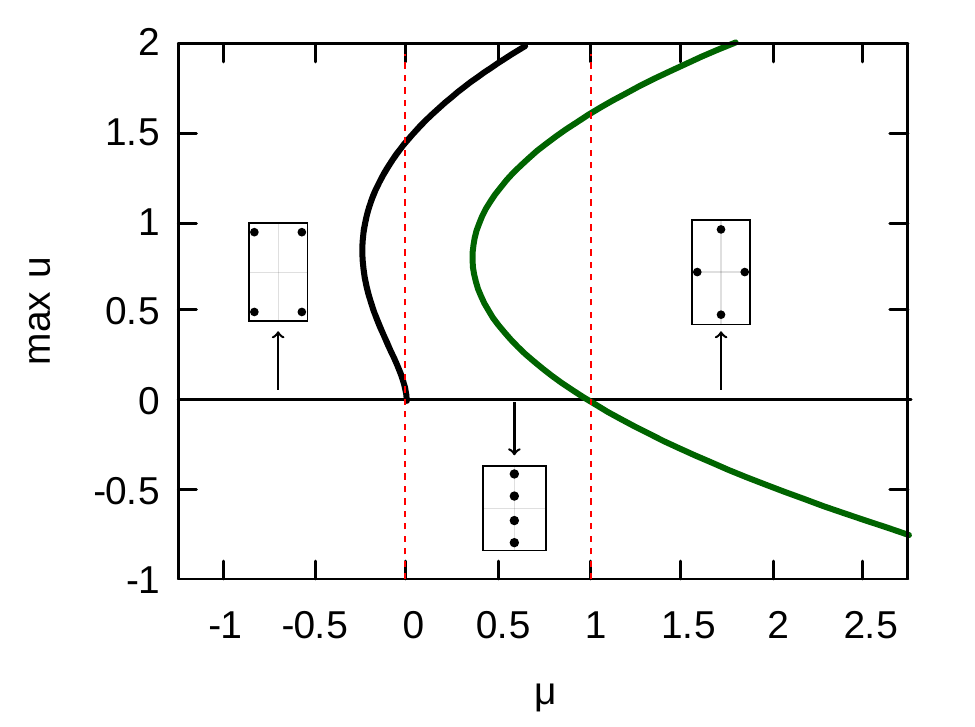} & \includegraphics[width=0.5\linewidth]{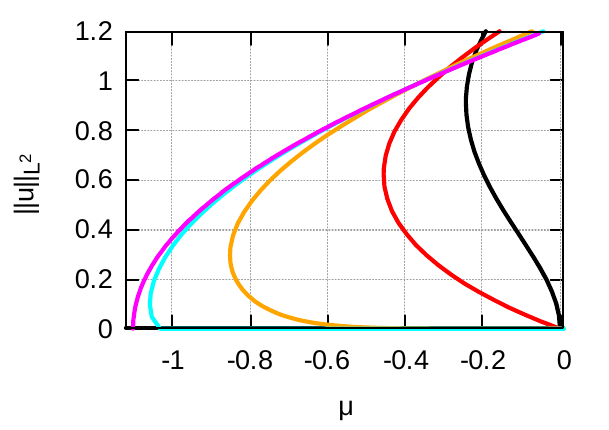} \\
				(a) & (b)
			\end{tabular}
			\caption{ For $\nu = 1.6$ in \eqref{e:spatialNSSHE}: (a) Branches of spatially homogeneous equilibria (green) and $2\pi$-periodic rolls (black) for $\alpha=2$; insets illustrate the spatial eigenvalue configurations at $u=0$ for $\alpha\in(1,2]$. (b) $2\pi$-periodic branches from numerical continuation with \textsc{Auto} \cite{DoedelOldeman2021} for several values of $\alpha$:  $\alpha = 2$ (black), $\alpha = 1.5$ (red), $\alpha = 1.1$ (yellow),  $\alpha = 1.01$ (cyan), and $\alpha =1$ (magenta). For all $\alpha>1$ these branches emanate from $\mu=0$.}
			\label{figure: periodic for different alphas}
		\end{figure}

	\subsection{Bifurcation of rolls from $u=0$ for $\alpha>1$}\label{s:rollsalphabigger1}
		
As a first step to analytically corroborate the numerical results shown in Figure~\ref{figure: periodic for different alphas}(b), we derive an amplitude equation for bifurcating stationary rolls of \eqref{nonsmooth SHE}. These are periodic orbits of \eqref{e:spatialNSSHE} and we first focus on period $2\pi$. 
For this purpose we set $\mu = \sigma \varepsilon^s$, where $\sigma = O(1)$ with respect to $\eps$ and $s$ is to be determined, and seek  solutions to the resulting equation
\begin{equation}\label{pde}
0 = -(1+\partial^2_x)^2 u + \sigma \varepsilon^s u + \nu |u|^{\alpha} -u^3, \; u(x+2\pi)=u(x)
\end{equation}
of the form
\begin{equation}\label{ansatz}
u(x) = \varepsilon A_1\cos(x) + \varepsilon^{1+r} \big(A_0 + A_2\cos(2x)+ E\big), \quad
E = E_\mathrm{high} + O\bigl(\varepsilon^{m}\bigr),
\end{equation}
where $A_j\in\R$ and $r>0$ are to be determined, $m>0$, and $E_{\mathrm{high}}$ denotes terms involving only $\cos(\ell x)$ with $\ell \geq 3$; by spatial translation symmetry we can assume $A_1\geq 0$. 

Inserting the ansatz \eqref{ansatz} into \eqref{pde} and projecting onto cosines, we obtain the amplitude / bifurcation equation, whose expansion allows to determine the nature of the bifurcation. However, standard Taylor expansion cannot be used due to the term $u\mapsto |u|^\alpha$ being $C^1$, but not $C^2$. The following estimate can be used as a substitute for our purposes. 
\begin{proposition} \label{prop:replacement of Taylor}
			Let $v,w\in\R$, $\eps\geq 0$. For any $\alpha \in (1,2]$ we have
			\begin{equation}\label{prop eq 1}
				|v + \eps w|^\alpha = |v|^\alpha + \alpha \eps w\,\mathrm{sgn}(v)
				  |v|^{\alpha-1} + R(v,w,\eps) 
				\;, \quad |R(v,w,\eps)| \leq \alpha \eps^\alpha |w|^\alpha
			\end{equation}
			where  
			$\operatorname{sgn}(\cdot)$ denotes the sign function. 
\end{proposition}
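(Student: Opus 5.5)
The plan is to reduce the estimate to a one-variable inequality by homogeneity and then treat the position of $v$ relative to $v+\eps w$ case by case. Set $h:=\eps w$ and $f(t):=|t|^\alpha$, so that $f'(t)=\alpha\,\mathrm{sgn}(t)|t|^{\alpha-1}$ and
\[
R=f(v+h)-f(v)-f'(v)h .
\]
The claim is $|R|\le\alpha|h|^\alpha$. This is trivial for $h=0$, so assume $h\neq0$ and put $t:=v/h$. The homogeneity relations $f(\lambda x)=|\lambda|^\alpha f(x)$ and $f'(\lambda x)=\mathrm{sgn}(\lambda)|\lambda|^{\alpha-1}f'(x)$ give
\[
R=|h|^\alpha\varphi(t),\qquad \varphi(t):=|t+1|^\alpha-|t|^\alpha-\alpha\,\mathrm{sgn}(t)|t|^{\alpha-1}.
\]
It therefore suffices to prove $0\le\varphi(t)\le\alpha$ for all $t\in\R$. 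The lower bound is immediate because $f$ is convex and $C^1$ for $\alpha>1$, so $f$ lies above its tangent lines.

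For the upper bound I will split into three cases. In each case I write $\varphi(t)=f'(\xi)-f'(t)$ for some $\xi$ between $t$ and $t+1$, by the mean value theorem.
\begin{itemize}
\item If $t\ge0$, the points $t$ and $\xi$ are both nonnegative. Since $\beta:=\alpha-1\in(0,1]$, the map $x\mapsto x^\beta$ is $\beta$-Hölder with constant $1$ on $[0,\infty)$. Hence $\varphi(t)\le\alpha\big((t+1)^{\beta}-t^\beta\big)\le\alpha$.
\item If $t\le-1$, both points are nonpositive, and the same argument gives $\varphi(t)\le\alpha\big(|t|^\beta-|t+1|^\beta\big)\le\alpha$.
\item If $t\in(-1,0)$, the segment crosses zero, and this is where the real work lies.
\end{itemize}

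The main obstacle is the crossing case $t\in(-1,0)$. Here the crude Hölder bound for $\mathrm{sgn}(x)|x|^\beta$ only yields the constant $\alpha\,2^{1-\beta}=\alpha\,2^{2-\alpha}$, which exceeds $\alpha$. So I will not use the mean value theorem here and will instead compute directly. With $s:=-t\in(0,1)$ we have
\[
\varphi=g(s):=(1-s)^\alpha-s^\alpha+\alpha s^{\alpha-1},
\]
and we must show $g(s)\le\alpha$ on $(0,1)$. The endpoint values are $g(0^+)=1\le\alpha$ and $g(1^-)=\alpha-1$.

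My first attempt is the bound $(1-s)^\alpha\le 1-s$. Then it suffices to show
\[
1-s+s^{\alpha-1}(\alpha-s)\le\alpha .
\]
I would try to get this from the concavity estimate $s^{\alpha-1}\le 1-(\alpha-1)(1-s)$ combined with elementary calculus in $s$; the expected extreme cases are $\alpha=2$ (where $g\equiv1$) and $\alpha\to1$ (where $g\to1$). If that is too lossy, I would instead locate the interior critical points of $g$ through $g'(s)=-\alpha(1-s)^{\alpha-1}-\alpha s^{\alpha-1}+\alpha(\alpha-1)s^{\alpha-2}$ and estimate $g$ there directly. As a fallback, the splitting at $s=\tfrac12$ lets us use $(1-s)^{\alpha}\le1$ on one half and $s^{\alpha-1}\le 1$ on the other.

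Finally, undoing the scaling gives $|R|=|h|^\alpha\varphi(t)\le\alpha\eps^\alpha|w|^\alpha$, which is \eqref{prop eq 1}.
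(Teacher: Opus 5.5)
You have a genuine gap in the crossing case, and it cannot be closed. Your homogeneity reduction to $0\le\varphi\le\alpha$ and your two same-sign cases $t\ge0$, $t\le-1$ are correct; they are the paper's same-sign argument (Taylor remainder plus the H\"older bound $|x^{\alpha-1}-y^{\alpha-1}|\le|x-y|^{\alpha-1}$) in rescaled form. The problem is that the inequality $g(s)\le\alpha$ on $(0,1)$, which you set out to prove, is false for $\alpha$ near $1$.

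Your expectation that $g\to1$ as $\alpha\searrow1$ holds only at the endpoint $s=0^+$. For each fixed $s\in(0,1)$ one has $s^{\alpha-1}\to1$, hence $g(s)\to2-2s$, so $\sup_{(0,1)}g$ tends to $2$ while $\alpha\to1$. Concretely, for $\alpha=1.1$ and $s=1/4$,
\[
g(1/4)=(3/4)^{1.1}-(1/4)^{1.1}+1.1\,(1/4)^{0.1}\approx0.729-0.218+0.958\approx1.47>1.1 .
\]
In the original variables, $v=-1$, $\eps=1$, $w=4$ give $R=3^{1.1}+3.4\approx6.75$, whereas $\alpha\eps^\alpha|w|^\alpha=1.1\cdot4^{1.1}\approx5.05$. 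So none of your three routes can succeed: the bound $(1-s)^\alpha\le1-s$, locating the critical points of $g$, and splitting at $s=1/2$ all fail. The proposition as stated is false for $\alpha$ close to $1$ (numerically for $\alpha$ up to about $1.39$). This already includes the value $\alpha=1.1$ used in the paper's computations.

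What is true, and all that is used later, is the estimate with a larger constant in the crossing case; Theorem~\ref{thm:rolls1} only needs the order of the remainder. In the crossing case $|v|,|v+\eps w|<\eps|w|$ and $\mathrm{sgn}(v)\,\eps w=-\eps|w|$, so
\[
0\le R=|v+\eps w|^\alpha-|v|^\alpha+\alpha\eps|w|\,|v|^{\alpha-1}\le(1+\alpha)\eps^\alpha|w|^\alpha ,
\]
that is, $g\le1+\alpha$. Combined with your same-sign cases this proves $|R|\le(1+\alpha)\eps^\alpha|w|^\alpha$ in general, and $|R|\le\alpha\eps^\alpha|w|^\alpha$ whenever $v(v+\eps w)\ge0$.

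The paper's own proof does no better. It dispatches the crossing case by observing that both sides are ``of order $\eps^\alpha|w|^\alpha$'', which gives exactly this weaker constant, not $\alpha$. The right repair is therefore to correct the constant in the statement, not to look for a sharper estimate of $g$.
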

Although here $\eps$ is not necessarily small, it is convenient as a parameter in the proof and we will use the proposition for such values below.
\begin{proof}
First note that \eqref{prop eq 1} holds for $v=0$ or $v+ \eps w=0$ (or $w=0$) so that it remains to consider nonzero $v$ and $v+\eps w$. If $\mathrm{sgn}(v)\neq \mathrm{sgn}(v+\eps w)$ it follows that  $|v|<\eps |w|$ and $|v+\eps w|<\eps |w|$  so that left and right hand side of \eqref{prop eq 1} are indeed both order $\eps^\alpha |w|^\alpha$ as claimed. It thus remains to consider $v,v+\eps w$ of the same sign. 

In case $v,v+\eps w>0$ we have $v+s w>0$ for $s\in(0,\eps)$. Setting $g(\eps):=(v+\eps w)^\alpha$ we have $g'(\eps)=\alpha w (v+\eps w)^{\alpha-1}$ and $g''(\eps) = \alpha(\alpha-1) w^2(v+\eps w)^{\alpha-2}$. 
		By Taylor's theorem
		\[
		g(\eps) = g(0) + \eps g'(0) + \int_0^\eps g''(s) (\eps-s) ds
		= v^\alpha + \eps \alpha w v^{\alpha-1} + R_+(v,w,\eps), 
		\]
		where 
		\[
		R_+(v,w,\eps) = \alpha(\alpha-1)w^2 \int_0^\eps (v+sw)^{\alpha-2}(\eps-s) ds.
		\]
		Since $0\leq \eps-s \leq \eps$ in the integral, due to positivity of the terms we have
		\begin{align}\label{eq:R upper bound}
			\begin{split}
				R_+(v,w,\eps) &\leq \eps \alpha (\alpha-1)w^2 \int_0^\eps (v+sw)^{\alpha-2} ds\\
				&= \eps \alpha w \int_0^\eps \frac{\rmd}{\rmd s}\left((v+sw)^{\alpha-1}\right) ds = \eps \alpha w \left((v+\eps w)^{\alpha-1} - v^{\alpha-1}\right).
			\end{split}
		\end{align}
Since the function \(v \mapsto |v|^{\alpha-1}\) is H\"older continuous of order \(\alpha-1 \in (0,1]\), we have 
\begin{equation}\label{eq: hölder}
\left||v+\eps w|^{\alpha-1} - |v|^{\alpha-1}\right| \leq \tilde C |\varepsilon w|^{\alpha-1},
\end{equation}
for a constant $\tilde C>0$. In fact, we can take $\tilde C=1$ in \eqref{eq: hölder} as shown next in lack of a reference. Replacing $|v+\eps w|$ and $|v|$ by $x$ and $y$, respectively, we need to show $|x^{\alpha-1} - y^{\alpha-1}| \leq |x-y|^{\alpha-1}$. Without loss of generality, assume that $x > y >0$, and define $f(y) = (x-y)^{\alpha-1} - x^{\alpha-1} + y^{\alpha-1}$ for $y \in [0,x]$. We have  $f(0) = f(x) =0$ and from $f'(y) = (\alpha-1)\big(y^{\alpha-2} - (x-y)^{\alpha-2}\big)$ we have (1) $f'(y) =0$ if and only if $y=x/2$, (2) $f'(y) \rightarrow \infty \text{ as } y \rightarrow 0$ (since $\alpha-2<0$) as well as (3) $f'(y) \rightarrow - \infty$ as $y \rightarrow x$. This implies $f(y)>0$ for $0<y \ll 1$ and for $0 < x-y \ll 1$, so that $f$ cannot have zeros in $(0,y)$ since this would require at least two critical points. Therefore $f(y) > 0$ on $(0,x)$ as claimed.

Now, from \eqref{eq: hölder} with $\tilde C=1$ and \eqref {eq:R upper bound} we infer that
\[
R_+(v,w,\eps)  \leq \alpha\eps^\alpha |w|^{\alpha}.
\]	
In the final case $v, v+\eps w<0$ we replace $g$ by $(-v-\eps w)^\alpha$ whose derivative yields a negative factor and the same bound holds for the analogous remainder term. This gives the factor $\mathrm{sgn}(v)$ in the expansion and proves \eqref{prop eq 1}. 
\end{proof}
		
\begin{theorem}\label{thm:rolls1}
Let $\alpha\in(1,2]$. There are $A_*,\mu_*>0$ and an open neighbourhood $U\subset \R^4$ of the origin such that for all $|\mu|<\mu_*$ the following holds. $2\pi$-periodic roll solutions of \eqref{nonsmooth SHE}, whose trajectory lies in $U$ are, up to translation, in 1-to-1 correspondence with solutions $A_1\in[0,A_*)$ to the bifurcation equation 
\begin{equation}\label{eq:genbifeq}
0 = \sigma A_1 + \nu^2 A_1^{2\alpha -1}c_1(\alpha) - \frac{3}{4} \varepsilon^{4-2\alpha} A_1^3 + \calO(\eps^{(\alpha-1)^2}A_1^{\alpha^2}),
\end{equation}
where $\mu = \sigma \eps^s$, $s=2\alpha-2$ and $c_1(\alpha)=\alpha \left(2 c_0(\alpha)^2 + \frac1 9 c_2(\alpha)^2 \right)>0$. These rolls are of the form \eqref{ansatz} with $r=\alpha-1$ and $A_0, A_1, E$ of order $\calO(A_1^\alpha)$. The exact form of $A_0$, $A_2$ is given in \eqref{eq: A0} and \eqref{eq: A2} below.
\end{theorem}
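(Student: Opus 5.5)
This is a Lyapunov--Schmidt reduction on the space of even $2\pi$-periodic functions; even profiles suffice after a translation. Proposition~\ref{prop:replacement of Taylor} replaces the second-order Taylor expansion of $|u|^\alpha$ wherever it is needed.

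\textbf{Setup.} Let $L=-(1+\partial_x^2)^2$ act on even $2\pi$-periodic functions, for instance in $H^4_{\mathrm{per}}\to L^2$. Its kernel is spanned by $\cos x$. Let $P$ be the $L^2$-orthogonal projection onto $\mathrm{span}\{\cos x\}$ and $Q=I-P$. On $\mathrm{ran}\,Q$ the operator $L$ is invertible: $L\cos(\ell x)=-(1-\ell^2)^2\cos(\ell x)$, so in particular $L^{-1}$ acts as $-1$ on constants and as $-1/9$ on $\cos 2x$.

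Write $u=a\cos x+w$ with $w\in\mathrm{ran}\,Q$ and $a=\eps A_1$. The range equation is
\[
w=-L^{-1}Q\big(\mu u+\nu|u|^\alpha-u^3\big).
\]
The map $u\mapsto |u|^\alpha$ is Lipschitz in $L^\infty$ on bounded sets with constant $\calO(\|u\|_\infty^{\alpha-1})$, because $\alpha>1$ and $s\mapsto|s|^\alpha$ is $C^1$. Hence the right-hand side is a contraction in $w$ on a small ball in $C^0$, or in $H^4\subset C^0$, uniformly for $|\mu|<\mu_*$ and small $a$. This gives a unique $w=w(a,\mu)$ with $\|w\|=\calO(|a|^\alpha+|\mu||a|)$, and the uniqueness yields the claimed 1-to-1 correspondence inside a neighbourhood $U$.

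\textbf{Leading part of $w$.} Apply Proposition~\ref{prop:replacement of Taylor} with $v=a\cos x$ and perturbation $w$. This gives
\[
|u|^\alpha=|a|^\alpha|\cos x|^\alpha+\calO(|a|^{\alpha-1}\|w\|),
\]
so $w=\nu|a|^\alpha\,(-L^{-1}Q|\cos x|^\alpha)+\text{h.o.t.}$ Expand $|\cos x|^\alpha=c_0(\alpha)+c_2(\alpha)\cos 2x+\dots$; this is a Fourier series in even harmonics only. Then
\[
A_0=\nu c_0 A_1^\alpha, \qquad A_2=\tfrac{\nu}{9}c_2A_1^\alpha,
\]
at scale $\eps^{\alpha}$, which fixes $r=\alpha-1$ and gives \eqref{eq: A0} and \eqref{eq: A2}.

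\textbf{Bifurcation equation.} Project onto $\cos x$. The term $P(\nu|a\cos x|^\alpha)$ vanishes because $|\cos x|^\alpha\cos x$ is odd under $x\mapsto x+\pi$. So the first nonzero nonlinear contribution is the linear response
\[
\alpha\nu\,\mathrm{sgn}(v)|v|^{\alpha-1}w \quad\text{with } v=a\cos x,\ w=\eps^{\alpha}(A_0+A_2\cos 2x),
\]
which comes from the expansion in Proposition~\ref{prop:replacement of Taylor}. Projecting it onto $\cos x$ uses that $\mathrm{sgn}(\cos x)|\cos x|^{\alpha-1}\cos x=|\cos x|^\alpha$. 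The projection therefore involves only $\langle |\cos x|^\alpha,1\rangle\propto c_0$ and $\langle|\cos x|^\alpha,\cos 2x\rangle\propto c_2$. After normalising the $\cos x$ coefficient, this produces $\nu^2\alpha(2c_0^2+\tfrac19c_2^2)A_1^{2\alpha-1}\eps^{2\alpha-1}$.

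The remaining terms are:
\begin{itemize}
\item $\mu a=\sigma\eps^{s+1}A_1$;
\item the cubic term, $-\tfrac34\eps^3A_1^3$;
\item the remainder, bounded by $\alpha\|w\|^\alpha$ through Proposition~\ref{prop:replacement of Taylor}, together with the higher corrections in $w$.
\end{itemize}
The remainder is $\calO(\eps^{\alpha\cdot\alpha}A_1^{\alpha^2})$. Choosing $s=2\alpha-2$ and dividing by $\eps^{2\alpha-1}$ gives \eqref{eq:genbifeq}, with error $\eps^{\alpha^2-2\alpha+1}=\eps^{(\alpha-1)^2}$.

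\textbf{Sign of $c_1$.} Positivity of $c_1$ is immediate, since it is a sum of squares with $c_0>0$.

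\textbf{Main obstacle.} The hardest step is the error bookkeeping in the reduced equation. The nonlinearity is only $C^{1,\alpha-1}$, so the fixed-point solution $w$ must be inserted back into $|u|^\alpha$ with sharp control, without any second-order Taylor expansion. The plan is to apply Proposition~\ref{prop:replacement of Taylor} pointwise in $x$ and then integrate. The $\mu$-dependent part of $w$ and the $E_{\mathrm{high}}$ part are handled the same way, using $|\sigma|\eps^{s}\ll1$. One also has to check that these corrections are dominated by the stated $\calO(\eps^{(\alpha-1)^2}A_1^{\alpha^2})$ term. If they are not, they must be absorbed into $\sigma A_1$, which requires verifying that they carry an extra factor of $\mu$ or $\eps^{\alpha-1}$.
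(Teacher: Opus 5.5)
Your route is the paper's route. It uses a Lyapunov--Schmidt reduction, Proposition~\ref{prop:replacement of Taylor} in place of a second-order Taylor expansion, and projections onto $1$, $\cos 2x$ and $\cos x$. Your contraction argument for the range equation on even functions is an acceptable self-contained replacement for the cited differentiability of the Nemytskii operator. The restriction to even profiles needs one extra line: once the kernel component is $a\cos x$, reflection equivariance together with uniqueness of $w$ forces $w$ to be even.

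The genuine gap is in the bifurcation equation. You insert only $\eps^{\alpha}(A_0+A_2\cos 2x)$ into the linear-response term, and in your last paragraph you say the $E_{\mathrm{high}}$ part is controlled ``using $|\sigma|\eps^s\ll1$''. That is false. By your own computation the leading part of $w$ is
\[
\nu a^\alpha(-L^{-1}Q)|\cos x|^\alpha=\nu a^\alpha\Big(c_0+\sum_{k\ge1}\frac{\hat c_{2k}}{(4k^2-1)^2}\cos(2kx)\Big),\qquad \hat c_{2k}=\frac1\pi\int_{-\pi}^{\pi}|\cos x|^\alpha\cos(2kx)\,dx ,
\]
with $\hat c_2=c_2$. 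The modes with $k\ge2$ in this leading part do not depend on $\mu$, and they have the same size $a^\alpha$ as $A_0$ and $A_2$. Since $\mathrm{sgn}(\cos x)|\cos x|^{\alpha-1}\cos x=|\cos x|^\alpha$, they add
\[
\alpha\nu^2\pi a^{2\alpha-1}\sum_{k\ge2}\frac{\hat c_{2k}^2}{(4k^2-1)^2}
\]
to the $\cos x$-projection. That is exactly the order $\eps^{2\alpha-1}A_1^{2\alpha-1}$ of the terms you keep. It is not contained in $\calO(\eps^{(\alpha-1)^2}A_1^{\alpha^2})$, and it carries no factor of $\mu$ or $\eps^{\alpha-1}$. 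So the check you flag at the end fails for this term.

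Done correctly, the argument gives the coefficient $\alpha\big(2c_0^2+\sum_{k\ge1}\hat c_{2k}^2/(4k^2-1)^2\big)$ in front of $\nu^2A_1^{2\alpha-1}$. This matches the stated $c_1(\alpha)$ only at $\alpha=2$, where $|\cos x|^2=\tfrac12+\tfrac12\cos 2x$. For $\alpha\in(1,2)$ the term $\hat c_4$ is nonzero, because
\[
\int_0^{\pi/2}\cos^\alpha x\cos(4x)\,dx=\frac{\pi\,\Gamma(\alpha+1)}{2^{\alpha+1}\,\Gamma(3+\frac\alpha2)\,\Gamma(\frac\alpha2-1)}
\]
is finite and nonzero there.

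The paper's proof has the same omission. In \eqref{projection on cosx} it drops $E_{\mathrm{high}}$ from the integrand, even though the theorem itself asserts that $E$ is of order $A_1^\alpha$ at scale $\eps^\alpha$. So the closed form of $c_1$ in the statement, and its Gamma-function form in Lemma~\ref{lemma c_i's}, is a truncation. The missing terms are nonnegative and numerically small, so the true coefficient is still positive and Corollary~\ref{cor:rollsubcrit} is unaffected.

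To close your proof, keep the full leading-order $w$ in the linear-response term. The coefficient is then
\[
\frac{\alpha}{\pi}\int_{-\pi}^{\pi}|\cos x|^\alpha\,\big((-L^{-1}Q)|\cos x|^\alpha\big)\,dx ,
\]
not $\alpha\big(2c_0^2+\tfrac19c_2^2\big)$.
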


\begin{proof}
The bifurcation equation is obtained from Lyapunov-Schmidt reduction and expansion, e.g. on $L^2([0,2\pi)]$, where the domain of the fourth order symmetric differential operator $-(1+\partial^2_x)^2$ is given by $H^4([0,2\pi])$ and its kernel is spanned by $\cos$ and $\sin$ that are related by translation. This method  requires differentiability of the Nemitsky operator of the nonlinear term, which is standard for the cubic term and for $u\mapsto |u|^\alpha$, $\alpha>1$,  it follows from Theorem 2.7 in \cite{AmbrosettiProdi1993}. This framework justifies the derivation of the bifurcation equation based on substituting \eqref{ansatz} into \eqref{pde} presented next and yields the statement about the relation of rolls and bifurcation equation. 

For the cubic in \eqref{pde}, all terms involving $E$ that results from \eqref{ansatz} are of order $\eps^{r+2}$ or higher and can  be discarded for the leading order analysis. The same holds for the $\calO(\eps^{m)}$-part of $E$ in $|u|^\alpha$. 
Hence, we apply Proposition~\ref{prop:replacement of Taylor} to the leading order part $u(x) = \eps A_1 \cos x + \eps^{r+1}(A_0 + A_2 \cos 2x)$ from \eqref{ansatz}, with $v=A_1 \cos x$, $w=(A_0 + A_2 \cos 2x)$. Then \eqref{prop eq 1} (with $\eps$ replaced by $\eps^r$) gives
		\begin{align}\label{e:ansatzest}
			\begin{split}
				|u(x)|^\alpha =& \varepsilon^\alpha |A_1 \cos x + \varepsilon^r(A_0 + A_2 \cos 2x +E_{\mathrm{high}} )|^\alpha \\
				=&\varepsilon^\alpha |A_1 \cos x|^\alpha + \varepsilon^{\alpha} \mathrm{sgn}( A_1 \cos x ) \alpha \varepsilon^r (A_0 + A_2 \cos 2x +E_{\mathrm{high}} ) |A_1 \cos x|^{\alpha-1} + \calO(\eps^{\alpha(r+1)}) .
			\end{split}
		\end{align}
 
		Substituting $u(x)$ into (\ref{pde}) and using \eqref{e:ansatzest} yields
		\begin{align}\label{u in pde}
			\begin{split}
				0 =& -\varepsilon^{1+r} A_0 - 9\varepsilon^{1+r} A_2 \cos 2x + \sigma \varepsilon^{s+1} A_1 \cos x 
				 + \sigma \varepsilon^{s+1+r} (A_0 + A_2 \cos 2x +E_{\mathrm{high}} ) + \nu \varepsilon^\alpha |A_1 \cos x|^\alpha \\
				& + \nu \varepsilon^{\alpha + r}\mathrm{sgn}( A_1 \cos x ) \alpha (A_0 + A_2 \cos 2x +E_{\mathrm{high}} ) |A_1 \cos x|^{\alpha-1} 
				 -\varepsilon^3 A_1^3 \cos^3x +  \calO(\eps^\rho) 
			\end{split}
		\end{align}
		where  
		$\rho = \min\{ \alpha(r+1), r+3 \}>0$. 
		We 
		next compute the canonical orthogonal projections of \eqref{u in pde} onto the relevant Fourier modes. 
		Projecting onto the constant mode $\cos(0 \, x)=1$,  
		and ignoring the terms in the second line of \eqref{u in pde} for the moment, we obtain
		\begin{equation}\label{eq: 62}
			0 = -\varepsilon^{1+r}A_0 2\pi + \sigma \varepsilon^{s+1+r}A_0 2 \pi + \nu \varepsilon^\alpha \int_{-\pi}^{\pi} |A_1\cos x|^\alpha \,dx, 
		\end{equation} 
		and with $s>0$ this yields to leading order 
		\begin{equation}
			0= -\varepsilon^{1+r}A_0 2\pi + \nu \varepsilon^\alpha A_1^\alpha \int_{-\pi}^{\pi} |\cos x|^\alpha \,dx. 
		\end{equation}
This implies $r=\alpha-1$, which means the leading order in \eqref{u in pde} is $\eps^\alpha$ and the terms in the second line of \eqref{u in pde} are indeed higher order. Also the leading order solution to \eqref{eq: 62} is
		\begin{equation} \label{eq: A0}
			A_0 = \nu A_1^\alpha c_0(\alpha) \;,\quad\text{ where } c_0(\alpha) = \frac{1}{2\pi} \int_{-\pi}^{\pi}|\cos x|^\alpha \,dx >0.
		\end{equation}
		Next, projection of the leading order terms in \eqref{u in pde} onto $\cos (2x)$ yields
		\begin{equation}
			0 = -9 \varepsilon^\alpha A_2 \pi + 
			 \nu \varepsilon^\alpha A_1^\alpha \int_{-\pi}^{\pi} |\cos x|^\alpha \cos 2x \,dx
		\end{equation}
		which gives 
		\begin{equation} \label{eq: A2}
			A_2 = \nu A_1^\alpha \frac{c_2(\alpha)}{9} \;,\quad\text{ where } c_2(\alpha) = \frac{1}{\pi} \int_{-\pi}^{\pi} |\cos x|^\alpha \cos 2x \,dx.
		\end{equation}
		Combining \eqref{eq: A0} and \eqref{eq: A2} we have
		\begin{equation}\label{e:A0A2}
		A_0 +A_2 \cos 2x = \nu A_1^\alpha \left(c_0(\alpha) + \frac{c_2(\alpha)}{9}\cos 2x\right).
		\end{equation}
Finally, projecting \eqref{u in pde} onto $\cos(x)$ and omitting the known higher order terms gives
		\begin{equation} \label{projection on cosx}
			0 = \sigma \varepsilon^{s+1} A_1 \pi - \varepsilon^3  A_1^3 \int_{-\pi}^{\pi}  \cos ^4x \,dx + \nu \int_{-\pi}^{\pi} |\varepsilon A_1 \cos x + \varepsilon^{\alpha} (A_0 + A_2 \cos 2x)|^\alpha \cos x \,dx,
		\end{equation}
		where the first integral on the right hand side is $\frac 3 4 \pi$. For the second integral using \eqref{prop eq 1} 
		yields
		\begin{align}\label{prop integral}
			\begin{split}	
				&\int_{-\pi}^{\pi} |A_1 \cos x + \varepsilon^{\alpha-1} (A_0 + A_2 \cos 2x)|^\alpha \cos x \,dx = \\
				& \qquad A_1^\alpha \int_{-\pi}^{\pi} |\cos x|^\alpha \cos x \,dx 
				+ \varepsilon^{\alpha-1} \alpha A_1^{\alpha-1} \int_{-\pi}^{\pi}  (A_0 + A_2 \cos 2x) |\cos x|^{\alpha} \,dx 
+ \calO(\varepsilon^{\alpha(\alpha-1)}A_1^{\alpha^2}),
			\end{split}
		\end{align}
where $A_1^{\alpha^2}$ in the higher order term is  
due to \eqref{e:A0A2}. In \eqref{prop integral} the first integral on the right hand side vanishes since the $2\pi$-periodic integrand is an odd function with respect to $x=\pi/2$.  		
		Inserting \eqref{prop integral} into \eqref{projection on cosx} and using \eqref{e:A0A2} we find
		\begin{align}
			\begin{split}\label{last equation}
				&0 = \sigma \varepsilon^{s+1} A_1 \pi - \varepsilon^3 A_1^3 \frac{3\pi}{4} 
				+ \nu^2 \varepsilon^{2\alpha-1} \alpha A_1^{2\alpha -1} \int_{-\pi}^{\pi} \biggl(c_0(\alpha) + \frac{c_2(\alpha)}{9} \cos 2x \biggl) |\cos x|^\alpha \,dx + \calO(\varepsilon^{\alpha^2}A_1^{\alpha^2}). 
			\end{split}
		\end{align} 
		Balancing terms of lowest order implies the scaling 
		\begin{equation}
			s = 2\alpha -2.
		\end{equation}		
		Hence, we obtain from \eqref{last equation} the general bifurcation equation \eqref{eq:genbifeq} as claimed, 
		where, using \eqref{eq: A0} and \eqref{eq: A2}, 
		\begin{equation} \label{c_1}
			c_1(\alpha) := \frac{\alpha}{\pi} \int_{-\pi}^{\pi} \biggl( c_0(\alpha) + \frac{c_2(\alpha)}{9} \cos 2x \biggl) |\cos x|^\alpha \,dx 
			 =  \alpha \left(2 c_0(\alpha)^2 + \frac1 9 c_2(\alpha)^2 \right). 
		\end{equation}
In particular, $c_1(\alpha)>0$ for all $\alpha\in(1,2)$ since $c_0(\alpha)>0$. (In fact, also $c_2(\alpha)>0$. See Lemma \ref{lemma c_i's}.)  

 Lastly, as a rough estimate of $E$ in terms of $A_1$ we note that \eqref{prop eq 1} analogous to the estimates above implies $E$ is at least order $A_1^\alpha$, just as $A_0, A_2$; also recall that without loss $A_1\geq 0$ in \eqref{ansatz} due to translation in $x$.
\end{proof}

\begin{figure}
	\centering
	\includegraphics[width=0.7\linewidth]{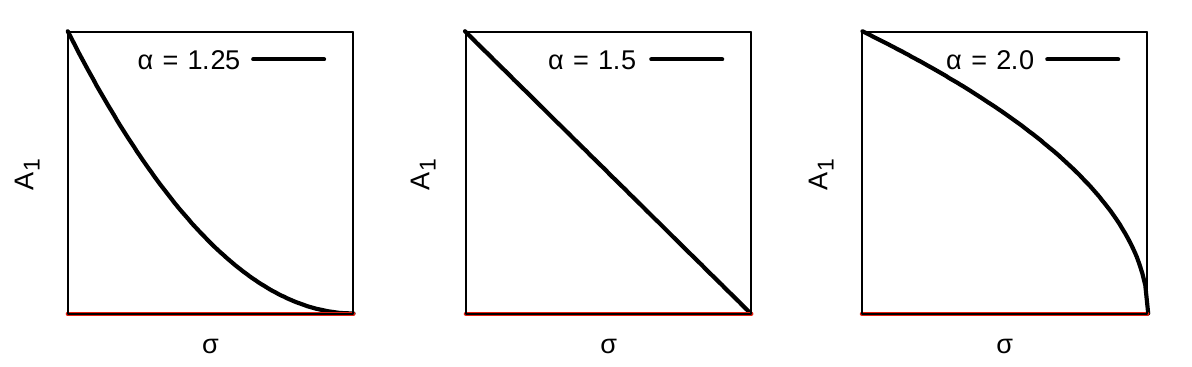}
	\caption{ Bifurcation diagrams of the bifurcation equation \eqref{eq:genbifeq} for $\alpha = 1.25$, $1.5$ and $2$ (subcritical).}
	\label{figure: A vs sigma}
\end{figure}

In particular, for $\alpha \in (1,2)$, the bifurcation equation \eqref{eq:genbifeq} can be written as  
\begin{equation}\label{eq:bifeq_al1}
0 = \sigma A_1 +\nu^2c_1(\alpha) A_1^{2\alpha -1} + O\big(\varepsilon^{{\rho}} A_1^{\tilde\rho} \big), \; {\rho} = \min \big\{ (\alpha-1)^2, 4-2\alpha \big\}>0,\;  \tilde\rho =  \min\{3,\alpha^2\}>2\alpha-1>1 .
\end{equation} 

\begin{corollary}\label{cor:rollsubcrit}
For $\alpha \in (1,2)$, the bifurcation of the $2\pi$-periodic rolls from the trivial solutions of \eqref{nonsmooth SHE} is subcritical for all $\nu\neq0$. 
\end{corollary}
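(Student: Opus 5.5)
We read off the criticality directly from the reduced bifurcation equation \eqref{eq:bifeq_al1}, which Theorem~\ref{thm:rolls1} puts in 1-to-1 correspondence with small $2\pi$-periodic rolls. Nontrivial rolls have $A_1>0$, so we divide by $A_1$ and obtain
\begin{equation*}
\sigma = -\nu^2 c_1(\alpha) A_1^{2\alpha-2} + \calO\big(\eps^{\rho} A_1^{\tilde\rho-1}\big),
\end{equation*}
with $\rho>0$ and $\tilde\rho-1>2\alpha-2>0$ from \eqref{eq:bifeq_al1}. Since $\mu=\sigma\eps^{2\alpha-2}$, it is convenient to set $a:=\eps A_1>0$, the actual leading amplitude in \eqref{ansatz}. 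The leading term then becomes $\mu\approx-\nu^2c_1(\alpha)a^{2\alpha-2}$, independent of the bookkeeping parameter $\eps$.

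The first step is to check that the remainder is genuinely of higher order in the amplitude. Undoing the scaling, the remainder terms in \eqref{eq:genbifeq}, after multiplying by $\eps^{2\alpha-2}$ and dividing by $A_1$, give the following contributions to $\mu$:
\begin{itemize}
\item from the cubic term, $\tfrac34\eps^{2}A_1^2=\tfrac34 a^2$;
\item from the Proposition~\ref{prop:replacement of Taylor} remainder, $\calO(\eps^{(\alpha-1)^2+2\alpha-2}A_1^{\alpha^2-1})=\calO(a^{\alpha^2-1})$, using the identity $(\alpha-1)^2+2\alpha-2=\alpha^2-1$.
\end{itemize}
For $\alpha\in(1,2)$ both exponents exceed $2\alpha-2$: indeed $2>2\alpha-2$, and $\alpha^2-1-(2\alpha-2)=(\alpha-1)^2>0$. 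Hence the branch satisfies
\begin{equation*}
\mu(a) = -\nu^2 c_1(\alpha)\, a^{2\alpha-2}\big(1+o(1)\big), \quad a\to0^+.
\end{equation*}
In other words, $\mu$ is a continuous function of $a$ near $0$ with $\mu(0)=0$ and $\mu<0$ for small $a>0$.

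The second step uses $c_1(\alpha)>0$, which is established in the proof of Theorem~\ref{thm:rolls1} via $c_0(\alpha)>0$, together with $\nu^2>0$ for $\nu\neq0$. With these, the branch of $2\pi$-periodic rolls emanating from $(\mu,u)=(0,0)$ exists for $\mu<0$, where the trivial state is linearly stable (there $k_\pm\notin\R$ and the Fourier symbol $\mu-(1-k^2)^2<0$). Conversely, for small $\mu>0$ there are no small-amplitude rolls near $0$: the right-hand side is strictly negative for small $a>0$, so no $a$ solves the equation with $\mu>0$. Bifurcation into the stable side of the trivial state is exactly subcriticality. The sign of $\nu$ only enters through $\nu^2$, consistent with the symmetry $(\nu,u)\to(-\nu,-u)$.

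The main obstacle is the uniformity of the $\calO$-term. It must be controlled uniformly in the amplitude for $a$ in a small interval, not merely formally in $\eps$, so that dividing by $A_1$ and taking $a\to0$ is legitimate. I would handle this by fixing $\eps$ so that $A_1$ ranges in $[0,A_*)$. The remainder estimates come from Proposition~\ref{prop:replacement of Taylor} with explicit constants $\alpha|w|^\alpha$ and from the Lyapunov--Schmidt reduction. Both give bounds of the form $C a^{\alpha^2}$ and $C a^3$ on the reduced equation for $\mu$ bounded, with $C$ independent of $a$, which yields the asymptotics above rigorously.
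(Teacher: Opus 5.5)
Your proposal is correct and follows essentially the paper's argument. Both proofs read off the sign from the leading balance $\sigma=-\nu^2c_1(\alpha)A_1^{2\alpha-2}$ in the bifurcation equation together with $c_1(\alpha)>0$, and your rescaled form in $a=\eps A_1$ is the paper's \eqref{eq:bifeq_al1_1} divided by $a$; your explicit check that the cubic and remainder contributions $\tfrac34 a^2$ and $\calO(a^{\alpha^2-1})$ are of higher order than $a^{2\alpha-2}$, with the uniformity caveat, spells out what the paper leaves implicit.
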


\begin{proof}
By Theorem~\ref{thm:rolls1}, for $\alpha\in(1,2)$ the non-trivial solutions to \eqref{eq:bifeq_al1} are to leading order 
\begin{equation}\label{eq:rollsigma}
A_1 = A(\sigma):=\biggl( \sqrt{ -\frac{\sigma}{\nu^2 c_1(\alpha)}} \biggr) ^ {\frac{1}{\alpha-1}},
\end{equation}
where $c_1(\alpha) >0$ for $\alpha \in (1,2)$. Since denominator is positive for $\nu \neq 0$ real solutions require $\sigma < 0$. Hence, the bifurcation of $2\pi$ periodic roll solutions from $\mu=0$ is always subcritical. 
\end{proof}

 This corollary is significant since it highlights the non-trivial impact of $\alpha\in(1,2)$, i.e., the non-smooth situation, on the criticality:  
For the smooth case $\alpha = 2$, i.e., nonlinearity $f(u) = \nu u^2 - u^3$, we have $c_1(2) = 19/18$ due to $c_0(2) = 1/2$ and $c_2(2) = 1/2$ in \eqref{c_1} so that the bifurcation equation gives the classical 
\begin{equation}\label{e:bifeqal2}
0 = \sigma A_1 + \nu^2 \frac{19}{18} A_1^3 - \frac{3}{4} A_1^3 + \calO(\varepsilon A_1^4 ) 
= A_1 \biggl( \sigma - \biggl( \frac{3}{4} - \nu^2 \frac{19}{18} \biggl) A_1^2 \biggl) + \calO(\varepsilon A_1^4).
\end{equation} 
Hence, the bifurcation in terms of $\sigma$ is a subcritical pitchfork if  $\frac{3}{4} - \nu^2 \frac{19}{18} < 0$, i.e., $\nu^2 >\frac{27}{38}$ and is supercritical if $\nu^2 <\frac{27}{38}$. In other words, for $\alpha=2$, when increasing $\nu$ from zero, the bifurcation is initially supercritical and switches to subcritical only beyond a threshold value. (Recall that we consider $\nu\geq 0$ without loss.) In contrast, for $1<\alpha<2$ the bifurcation is subcritical \textit{for any} $\nu\neq 0$, as shown in Corollary \ref{cor:rollsubcrit}. We note that the value $\nu=1.6$, that we use in most computations, is larger than the threshold for $\alpha=2$ and thus gives a subcritical bifurcation in that case.  
We plot such leading order branches near the origin for different values of $\alpha$ in Figure \ref{figure: A vs sigma}. For $\alpha > 3/2$ the branch is concave, for $\alpha =3/2$ it is linear, and for $\alpha< 3/2$ it is convex, leading to an increasingly flat branch as $\alpha$ approaches $1$.

In Figure \ref{figure: fold continuation} we further illustrate this in terms of $\mu=\sigma \eps^{2\alpha-2}$ and the $L^2$-norm. In particular, the $2\pi$-periodic branch bifurcates to the left and is increasingly flat, reflecting the stronger subcriticality in this regime. For the interpretation of \eqref{eq:bifeq_al1} in terms of the parameter $\mu$, we multiply \eqref{eq:bifeq_al1} by $\varepsilon ^{2\alpha -1}$, which yields
\begin{equation}\label{eq:bifeq_al1_1}
0 = \mu \cdot(\varepsilon A_1) +\nu^2 c_1(\alpha) \cdot(\varepsilon A_1)^{2\alpha -1} + O((\varepsilon A_1)^{\tilde\rho}),
\end{equation}
whose nontrivial leading order solution is given by $\varepsilon A_1 = A(\mu)$ from \eqref{eq:rollsigma}. Since $u(x) \approx \varepsilon A_1 \cos (x)$, 
 \[
\|u\|_{L^2(-\pi,\pi)} \approx \varepsilon A_1 \sqrt{\pi} =  A(\mu) \sqrt{\pi}.
\]

Beyond the local bifurcation analysis, the numerical continuation results plotted in Figure~\ref{figure: fold continuation} show that for each $\alpha\in(1,2)$ the branch features one fold point at some $\mu^*(\alpha,\nu)<0$ and connects back to $\mu=0$ at a roll with non-zero amplitude so that for $\mu\in(\mu^*(\alpha,\nu),0)$ two non-zero roll solutions coexist. Moreover, the numerical results suggest that the fold point tends to some $\mu^*(\nu)$ at $u=0$ as $\alpha\searrow 1$. It thus appears that at $\alpha=1$ there is no longer a region of coexisting non-zero roll solutions, but only a single branch of non-zero rolls remains. Furthermore, this remaining branch appears to bifurcate supercritically from $\mu= \mu^*(\nu)$, where, e.g., $\mu^*(1.6) \approx -1.10$ 
and $\mu^*(0.5) \approx -0.18$. 

In addition, the numerical results for $\nu = 0.5$ shown in Figure~\ref{figure: fold continuation}(b) corroborate the absence of a subcriticality threshold for $\alpha\in(0,2)$ as discussed above: although the branch is supercritical for $\alpha = 2$, it is subcritical for all $\alpha \in (1,2)$. 

		\begin{figure}
			\centering
			\begin{tabular}{cc}
			\includegraphics[width=0.45\textwidth]{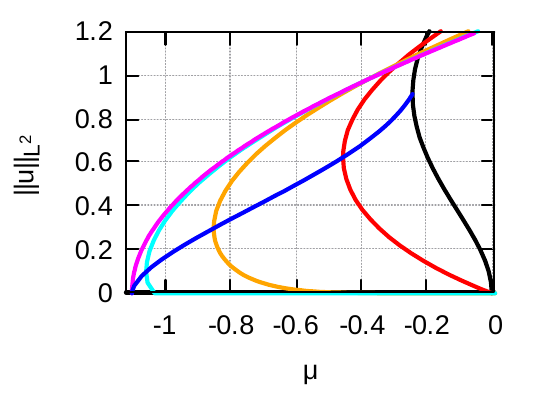} &
			\includegraphics[width=0.45\textwidth]{images/nu_0p5.pdf}	\\
			(a) & (b)
			\end{tabular}
\caption{Branches of  $2\pi$ periodic rolls in \eqref{e:spatialNSSHE} at $\alpha = 2$ (black) , $\alpha = 1.5$ (red), $\alpha = 1.1$ (orange), $\alpha = 1.01$ (cyan), $\alpha =1$ (magenta) for (a) $\nu = 1.6$, together with the locations of fold points (blue), and (b) $\nu = 0.5$ .}
			\label{figure: fold continuation}
		\end{figure}
		
\bigskip
We turn to another analytical aspect of the bifurcation from $\mu=0$ for $\alpha\in(1,2)$, namely the next order correction in \eqref{ansatz} to the pure cosine form of the bifurcating rolls. This is given by $A_0+ A_2\cos(2x)$, where $A_0, A_2$ from \eqref{eq: A0}, \eqref{eq: A2} contain the factors $c_0, c_2$, respectively, and next provide explicit formulae for these.
		
		\begin{lemma} \label{lemma c_i's}
			For $\alpha \in (1,2]$, the functions $c_0(\alpha)$, $c_2(\alpha)$, and $c_1(\alpha)$, defined in \eqref{eq: A0}, \eqref{eq: A2}, and \eqref{c_1} respectively, can be computed explicitly as the positive values
			\begin{equation*}
				c_0(\alpha) = \frac{1}{\sqrt{\pi}} \frac{\Gamma(\frac{1+\alpha}{2})}{\Gamma (1+\frac{\alpha}{2})}, \;
				c_2(\alpha) =  \frac{\alpha}{\sqrt{\pi}} \frac{\Gamma (\frac{1+\alpha}{2})}{\Gamma(2+\frac{\alpha}{2})}, \; 
				c_1(\alpha) = \frac{\alpha}{\pi} \frac{(36+36\alpha + 11\alpha^2)}{18} \frac{\Gamma^2(\frac{1+\alpha}{2})}{\Gamma^2 (2+\frac{\alpha}{2})} .
			\end{equation*}
		\end{lemma}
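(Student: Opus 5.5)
The plan is to reduce all three constants to the single integral $\int_0^{\pi/2}\cos^\beta x\,\rmd x$ and evaluate it with the Beta function. The identity
\[
\int_0^{\pi/2}\cos^{\beta}x\,\rmd x=\tfrac12 B\big(\tfrac12,\tfrac{\beta+1}{2}\big)=\frac{\sqrt{\pi}}{2}\frac{\Gamma(\frac{\beta+1}{2})}{\Gamma(\frac{\beta}{2}+1)}
\]
holds for $\beta>-1$. By the symmetries $x\mapsto -x$ and $x\mapsto \pi-x$ we have $\int_{-\pi}^{\pi}|\cos x|^\beta\,\rmd x=4\int_0^{\pi/2}\cos^\beta x\,\rmd x$. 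With $\beta=\alpha$ this gives
\[
\int_{-\pi}^{\pi}|\cos x|^\alpha\,\rmd x=2\sqrt{\pi}\,\frac{\Gamma(\frac{1+\alpha}{2})}{\Gamma(1+\frac{\alpha}{2})},
\]
and dividing by $2\pi$ as in \eqref{eq: A0} yields the claimed formula for $c_0(\alpha)$.

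For $c_2(\alpha)$ I will write $\cos 2x=2\cos^2x-1$, so that
\[
\int_{-\pi}^{\pi}|\cos x|^\alpha\cos 2x\,\rmd x = 2\int_{-\pi}^{\pi}|\cos x|^{\alpha+2}\,\rmd x-\int_{-\pi}^{\pi}|\cos x|^{\alpha}\,\rmd x .
\]
Both integrals on the right are evaluated by the formula above, with $\beta=\alpha+2$ and $\beta=\alpha$. Next I will apply $\Gamma(\frac{\alpha+3}{2})=\frac{\alpha+1}{2}\Gamma(\frac{\alpha+1}{2})$ and $\Gamma(2+\frac{\alpha}{2})=(1+\frac{\alpha}{2})\Gamma(1+\frac{\alpha}{2})$ and put everything over the common denominator $\Gamma(2+\frac\alpha2)$. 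The bracket then reduces to $2\sqrt\pi\,\Gamma(\frac{1+\alpha}{2})\big[(\alpha+1)-(1+\frac\alpha2)\big]/\Gamma(2+\frac\alpha2)$. Since $(\alpha+1)-(1+\frac\alpha2)=\frac\alpha2$, multiplying by $1/\pi$ gives $c_2(\alpha)=\frac{\alpha}{\sqrt\pi}\Gamma(\frac{1+\alpha}{2})/\Gamma(2+\frac\alpha2)$. Positivity of $c_2$ follows because $\alpha>0$, and this also confirms the parenthetical claim made in the proof of Theorem~\ref{thm:rolls1}.

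For $c_1(\alpha)$ I will use the already established identity $c_1=\alpha(2c_0^2+\frac19c_2^2)$ from \eqref{c_1}. It comes from expanding the integral in \eqref{c_1}: the first term gives $\frac1\pi c_0\cdot 2\pi c_0$ and the second gives $\frac{c_2}{9}\cdot c_2$ by the definitions of $c_0$ and $c_2$. To combine the two squares I rewrite $c_0=\frac{(1+\alpha/2)}{\sqrt\pi}\Gamma(\frac{1+\alpha}{2})/\Gamma(2+\frac\alpha2)$, so that both are multiples of $\Gamma^2(\frac{1+\alpha}{2})/(\pi\Gamma^2(2+\frac\alpha2))$. The prefactor is
\[
2(1+\tfrac\alpha2)^2+\tfrac{\alpha^2}{9}=2+2\alpha+\tfrac{\alpha^2}{2}+\tfrac{\alpha^2}{9}=\tfrac{36+36\alpha+11\alpha^2}{18},
\]
which is positive. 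This gives the stated formula for $c_1(\alpha)$.

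None of these steps is a real obstacle; the only point requiring care is the Gamma-function bookkeeping in $c_2$. There one must shift both Gamma arguments consistently so that the difference collapses to the factor $\frac\alpha2$; everything else is substitution.
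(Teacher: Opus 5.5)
Your proposal is correct and follows essentially the same route as the paper. Both reduce the integrals to $\int_0^{\pi/2}\cos^\beta x\,\rmd x$ via the Beta function, split $\cos 2x=2\cos^2x-1$ with Gamma recurrences for $c_2$, and substitute into $c_1=\alpha(2c_0^2+\frac19 c_2^2)$; the only difference is that the paper carries out the substitution $u=\sin^2 x$ explicitly, whereas you quote the Wallis-type Beta identity.
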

		
		\begin{proof}		
		We begin by evaluating the integral in \eqref{eq: A0},  
		which can be written as
		\begin{equation*}
			\int_{-\pi}^{\pi}|\cos(x)|^{\alpha} \,dx = 4 \int_{0}^{\pi/2} (\cos x)^{\alpha} \,dx
		\end{equation*}
		and, by using change of variables $u=\sin^2x$ and $du = 2 \sin x \cos x \,dx$, we get
		\begin{equation*}
			2 \int_{0}^{1} (1-u)^{\alpha/2} (u(1-u))^{-1/2} \,du = 2 \int_{0}^{1} (1-u)^{\frac{\alpha-1}{2}}u^{-\frac{1}{2}} \,du .
		\end{equation*}
		 Using the Beta function \cite{andrews1999special}  $\text{B}(m,n) = \frac{\Gamma (m) \Gamma(n)}{\Gamma(m+n)} = \int_{0}^{1} u^{m-1}(1-u)^{n-1} \,du $ for all $m,n >0$, where $\text{B}(m,n) = \text{B}(n,m)$, from $\Gamma(\frac{1}{2}) = \sqrt{\pi}$ 
we obtain 
		\begin{equation}
			\int_{-\pi}^{\pi} |\cos(x)|^\alpha \,dx =
			2 \text{B} \biggl(\frac{1}{2}, \frac{\alpha+1}{2} \biggr) = 2 \frac{\Gamma (\frac{1}{2}) \Gamma (\frac{\alpha+1}{2})}{\Gamma(z+1)} = 
			\frac{2 \sqrt{\pi} \Gamma(\frac{1+\alpha}{2})}{\Gamma(1+\frac{\alpha}{2})}
		\end{equation} 
		and thus $c_0(\alpha)$ as claimed. 
		Next, we evaluate the integral in \eqref{eq: A2}, which can be written as 
		\begin{equation}\label{eq: int}
			\int_{-\pi}^{\pi} |\cos x|^{\alpha} \cos(2x) \,dx = 2\int_{-\pi}^{\pi} |\cos x|^{\alpha} \cos^2x \,dx - \int_{-\pi}^{\pi} |\cos x|^{\alpha} \,dx .
		\end{equation} 
		From above, the second integral on right hand side is $2\text{B} \bigl(\frac{1}{2}, \frac{\alpha+1}{2} \bigr)$. The first integral on right hand side of \eqref{eq: int} can be written as
		\begin{equation*}
			2\int_{-\pi}^{\pi} |\cos x|^{\alpha} \cos^2x \,dx = 2\int_{-\pi}^{\pi} |\cos x|^{2(\alpha/2+1)} \,dx = 8\int_{0}^{\frac{\pi}{2}} (\cos x)^{2(\alpha/2+1)} \,dx
		\end{equation*}		
		and the change of variables $u=\sin^2x$ and $du = 2\sin x \cos x \,dx $ gives
		\begin{equation}
			4 \int_{0}^{1} (1-u)^{\frac{\alpha+1}{2}} (u(1-u))^{-\frac{1}{2}} \,du = 4 \int_{0}^{1} (1-u)^{\frac{\alpha+1}{2}} u^{-\frac{1}{2}} \,du = 4 \text{B} \biggl(\frac{1}{2}, \frac{\alpha+3}{2} \biggr) .
		\end{equation}
		Hence, the right hand side of \eqref{eq: int} becomes, with $z=\alpha/2$, 
		\begin{equation}
			4 \text{B} \biggl(\frac{1}{2}, z+ \frac{3}{2} \biggr) - 2\text{B} \biggl(\frac{1}{2}, z+ \frac{1}{2} \biggr) = 4\frac{\Gamma(\frac{1}{2}) \Gamma(z+\frac{3}{2})}{\Gamma(z+2)} - 2 \frac{\Gamma (\frac{1}{2}) \Gamma (z+ \frac{1}{2})}{\Gamma (z+1)}.
		\end{equation}
		From $\Gamma(z+2) = (z+1) \Gamma(z+1)$ and $\Gamma (\frac{1}{2}) = \sqrt{\pi}$ the right hand side can be written as  
		\begin{equation}
			\frac{2\sqrt{\pi}}{\Gamma\big(z+2\big)}  \left( 2\Gamma\left(z+\frac{3}{2}\right) - (z+1)\Gamma\left(z+\frac{1}{2}\right) \right), 
		\end{equation}
		and using $\Gamma\left(z+\frac{3}{2}\right) = (z+\frac{1}{2}) \Gamma(z+\frac{1}{2})$, the expression in the parentheses becomes
		\begin{equation} 
			\Gamma\left(z+\frac{1}{2}\right) \left( 2(z+\tfrac{1}{2})- (z+1) \right) = \Gamma\left(z+\frac{1}{2}\right)z.
		\end{equation}
		Hence, \eqref{eq: int} takes the value 
		\begin{equation}
			\int_{-\pi}^{\pi} |\cos x|^\alpha \cos(2x) \,dx = 2z\sqrt{\pi}\frac{ \Gamma(z+\frac{1}{2})}{\Gamma(z+2)} = \alpha \sqrt{\pi}\frac{\Gamma(\frac{1+\alpha}{2})}{\Gamma(2+\frac{\alpha}{2})},
		\end{equation}			
which gives $c_2(\alpha)$ as claimed. 

Finally, substituting the formulae for $c_0$ and $c_2$ into \eqref{c_1}  gives 
		\begin{equation*}
			c_1(\alpha)  =2 \alpha c_0^2(\alpha) + \frac{1}{9} \alpha c_2^2 (\alpha)
= \frac{\alpha}{\pi} \left( 2\frac{\Gamma^2 (\frac{1+\alpha}{2})}{\Gamma^2 (1+\frac{\alpha}{2})} + \frac{\alpha^2}{9} \frac{\Gamma^2 (\frac{1+\alpha}{2})}{\Gamma^2(2+\frac{\alpha}{2})} \right) 
				 = \frac{\alpha}{\pi} \frac{(36+36\alpha + 11\alpha^2)}{18} \frac{\Gamma^2(\frac{1+\alpha}{2})}{\Gamma^2 (2+\frac{\alpha}{2})},
		\end{equation*}
		where the third equality follows from $\Gamma(2+\frac{\alpha}{2}) = (1+\frac{\alpha}{2}) \Gamma(1+\frac{\alpha}{2})$. 		
		\end{proof}

\subsection{Bifurcation of periodic rolls at $\alpha =1$}\label{subsec: Bifurcation of periodic rolls}
	
Due to the lack of differentiability at $\alpha=1$, 
our goal is to identify values of $\mu$ at which bifurcations of rolls from $u\equiv 0$ occur. We first note that non-trivial long term dynamics is impossible for $\mu\leq -\nu$. (Recall we consider $\nu\geq 0$ without loss.)

\begin{lemma}\label{lem:globaldecay}
Consider the Swift-Hohenberg equation \eqref{nonsmooth SHE} for $\alpha=1$, i.e., 
\begin{equation} \label{SH3x}
\partial_t u = - (1+\partial_x^2)^2 u + \mu u + \nu |u| -u^3
\end{equation}
posed on $\R$ or on an interval with periodic boundary conditions. 
Then, for $\mu\leq -\nu$ all $L^2$-solutions decay to zero. For $\mu< -\nu$ this is exponential with $\|u(t)\|_2 \leq \rme^{(\mu+\nu)t}\|u(0)\|_2$.  
\end{lemma}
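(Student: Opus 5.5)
The plan is a direct $L^2$ energy estimate. I will multiply \eqref{SH3x} by $u$ and integrate over $\R$, or over one period in the periodic case. This gives
\[
\tfrac12 \tfrac{\rmd}{\rmd t}\|u\|_2^2 = -\|(1+\partial_x^2)u\|_2^2 + \mu\|u\|_2^2 + \nu\int |u|u\,\rmd x - \int u^4\,\rmd x .
\]
Here I use that $-(1+\partial_x^2)^2$ is symmetric and nonpositive, which follows by integrating by parts twice; boundary terms vanish by periodicity or decay.

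Next I bound the terms one by one:
\begin{itemize}
\item The linear term gives $-\|(1+\partial_x^2)u\|_2^2\le 0$.
\item The non-smooth term is handled pointwise: $|u|u\le u^2$ is false in general, so I would instead use $\nu |u|\,u \le \nu u^2$. This holds pointwise because $|u|u\le |u|^2=u^2$ with $\nu\ge0$.
\item The cubic term gives $-\int u^4\le 0$.
\end{itemize}
Together these yield
\[
\tfrac{\rmd}{\rmd t}\|u\|_2^2 \le 2(\mu+\nu)\|u\|_2^2 .
\]
Gronwall then gives $\|u(t)\|_2^2\le \rme^{2(\mu+\nu)t}\|u(0)\|_2^2$, which is the claimed exponential bound after taking square roots. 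For $\mu<-\nu$ this decays to zero.

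The borderline case $\mu=-\nu$ needs more than Gronwall, which only gives a nonincreasing norm. I would keep the dissipative terms and use that $\nu\int(u^2-|u|u)\ge0$ vanishes only when $u\ge0$. Then
\[
\tfrac12\tfrac{\rmd}{\rmd t}\|u\|_2^2 \le -\|(1+\partial_x^2)u\|_2^2-\int u^4 - \nu\int (u^2-|u|u).
\]
The quantity $\|u\|_2^2$ is nonincreasing and bounded below, so the right-hand side is integrable in time. Hence $\int_0^\infty \int u^4\,\rmd x\,\rmd t<\infty$.

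To upgrade this to decay, I would use that the norm converges to some limit $\ell\ge0$. In the periodic case, parabolic smoothing gives precompactness of the orbit in $L^2$, and a LaSalle-type argument shows the $\omega$-limit set lies where $\int u^4=0$, i.e. $u=0$; hence $\ell=0$. On $\R$ compactness fails. There I would use the quantitative bound $\|u\|_4^4 \ge \|u\|_2^4/|\text{supp}|$, but that is unavailable, so instead I would interpolate $\|u\|_2$ between $\|u\|_4$ and an $H^2$-type bound from $\|(1+\partial_x^2)u\|_2$ together with the Fourier-side decay away from $|k|=1$. This step is the main obstacle. If it resists, I would state the $\mu=-\nu$ case as nonincreasing norm with $\int_0^\infty\|u\|_4^4\,\rmd t<\infty$, and note that in the periodic setting this yields convergence to zero.

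The main technical point besides that is justifying the energy identity for $L^2$ solutions. I would do this by approximating with smooth solutions, noting that the nonlinearity is locally Lipschitz so mild solutions are regularized instantly, and passing to the limit.
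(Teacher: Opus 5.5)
For $\mu<-\nu$ your argument is exactly the paper's: test with $u$, use the symmetry of $1+\partial_x^2$ and $\int|u|u\,\rmd x\le\|u\|_2^2$, drop the quartic term, and apply Gronwall. Your aside that $|u|u\le u^2$ is ``false in general'' is itself wrong. The inequality holds pointwise, and it is precisely the one you then use.

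For $\mu=-\nu$ the paper only notes that the final inequality in its estimate is strict whenever $\|u\|_2>0$. You are right that this gives strict monotonicity but not $\|u(t)\|_2\to0$. Your LaSalle argument does repair this on periodic intervals.

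\textbf{The gap: $\mu=-\nu$ on $\R$.} You leave this case open, and the interpolation you sketch cannot close it. Take $u_n(x)=n^{-1/2}\chi(x/n)\cos x$ with a fixed cutoff $\chi$. Then $\|u_n\|_2$ stays of order one, while $\|u_n\|_4\to0$ and $\|(1+\partial_x^2)u_n\|_2\to0$. So on $\R$, $\|u\|_2$ is not controlled by these two quantities.

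What rules out such modulated waves is the quadratic term you kept but never used:
\[
\nu\int(u^2-|u|u)\,\rmd x=2\nu\|u_-\|_2^2,\qquad u_-=\max\{-u,0\}.
\]
The missing idea is the coercivity estimate
\[
\|(1+\partial_x^2)u\|_2^2+\|u_-\|_2^2\ \ge\ c\,\|u\|_2^2\qquad (u\in H^2),
\]
which holds both on $\R$ and on periodic intervals. A sketch:
\begin{enumerate}
\item Let $P$ be the Fourier projection onto $\{k:\bigl||k|-1\bigr|<\delta\}$ and $Q=I-P$. Then $\|Qu\|_2\le\delta^{-1}\|(1+\partial_x^2)u\|_2$.
\item Write $Pu=2\,\mathrm{Re}(A\rme^{\rmi x})$ with $\widehat A$ supported in $|k|<\delta$. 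Then $\|A'\|_2\le\delta\|A\|_2$.
\item Replacing $A$ by its averages over consecutive intervals of length $2\pi$ costs at most $2\delta\|A\|_2$, by Poincar\'e.
\item For a piecewise constant envelope, the negative part carries exactly half of the squared $L^2$-norm.
\item Since $v\mapsto v_-$ is $1$-Lipschitz in $L^2$, this gives $\|u_-\|_2\ge\frac{1-6\delta}{\sqrt2}\|Pu\|_2-\|Qu\|_2$. Choosing $\delta$ small yields the estimate.
\end{enumerate}
On a periodic interval the estimate also follows by compactness, since $\ker(1+\partial_x^2)\subset\mathrm{span}\{\cos,\sin\}$ contains no nonzero nonnegative function.

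With this, for $\nu>0$ and $\mu=-\nu$,
\[
\tfrac12\tfrac{\rmd}{\rmd t}\|u\|_2^2\le-\|(1+\partial_x^2)u\|_2^2-2\nu\|u_-\|_2^2\le-c\min\{1,2\nu\}\,\|u\|_2^2 .
\]
So the decay is in fact exponential on both domains, and it even holds for all $\mu<-\nu+c\min\{1,2\nu\}$. Neither the LaSalle/compactness argument nor the time-integrated $L^4$ bound is needed.
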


\begin{proof}
	Multiplying \eqref{SH3x} with $u$ and integrating over the domain gives 
	\begin{align}\label{e:L2decay}
		\begin{split}
			\frac{1}{2}\frac{\rmd}{\rmd t}\|u\|_2^2 &= -\int u (1+\partial_x^2)^2 u \rmd x + \mu \|u\|_2^2 + \nu \int |u|u\rmd x - \|u^2\|_2^2 \\
			& \leq -\|(1+\partial^2_x)u\|_2^2 + (\mu+\nu) \|u\|_2^2 - \|u^2\|_2^2 \ \leq (\mu+\nu) \|u\|_2^2,	
		\end{split}
	\end{align}
	where we used that $(1+\partial_x^2)$ is symmetric and $\langle |u|,u \rangle_2 \leq \|u\|_2^2$. 
	Hence, $\|u(t)\|_2^2 \leq \rme^{2(\mu+\nu)t}\|u(0)\|_2^2$, which proves the claim for $\mu<-\nu$. The claim for $\mu=-\nu$ follows from noting that the last inequality in \eqref{e:L2decay} is strict whenever $\|u\|_2>0$. 
\end{proof}

\begin{remark}
It is straightforward to derive an analogous estimate for \eqref{nonsmooth SHE} and any $\alpha\in[1,2]$ so that energy decays for $\mu\leq -\tilde\nu_\alpha$ for suitable $\tilde\nu_\alpha>0$, e.g. $\tilde\nu_1=\nu$ due to Lemma~\ref{lem:globaldecay}. For the general case one uses that $\nu |u|^\alpha u - u^4\leq \tilde\nu_\alpha u^2$ so that, for instance, $\tilde\nu_2=\tfrac{\nu^2}{4}$ can be chosen. 
\end{remark}

 \begin{remark}\label{rem:energynondecay}
 On the other hand, energy cannot decay to zero in this generality in the presence of rolls or instability of $u=0$. For $\alpha\in(1,2]$ the latter is the case whenever $\mu>0$, as can be readily determined from the spectrum. At $\mu=0$ this is more subtle, but occurs whenever the bifurcation to rolls is subcritical. At $\alpha=1$ there is no linearisation, but $u=0$ becomes unstable for some $\mu\in(-\nu,0]$. 
 Indeed, consider an initial datum $u_0(x)=a\cos(x)$ with $0<a\ll1$ and let $\hat u_1(t)$ denote the Fourier coefficient of $\cos$ for the associated solution. Since $|\cos(x)|\cos(x)$ has zero mean, we have $\tfrac{d}{dt}\hat u_1(0) \approx \mu a$, which is positive whenever $\mu>0$ (for all sufficiently small $a>0$) so that energy does not decay. In fact, this proves instability of $u=0$ for $\mu>0$ and thus any primary bifurcation occurs for some $\mu\in (-\nu,0]$.
 \end{remark}

Based on these observations, for $\alpha=1$ we expect a bifurcation to rolls at some value $\mu\in(-\nu,0]$. 
We recall the numerical observation that the fold points $\mu^*(\alpha,\nu)$ appear to limit, as $\alpha\searrow 1$, to a bifurcation point $\mu^*(\nu)$ of $2\pi$-periodic rolls for $\alpha = 1$. Lemma~\ref{lem:globaldecay} implies $\mu^*(\nu)>-\nu$ and since the subcriticality implies $\mu^*(\alpha,\nu)<0$, it is natural to expect $\mu^*(\nu)<0$. Numerical continuation of the folds confirms this as shown in Figure~\ref{figure: fold continuation}(a). However, it seems difficult to prove this or just the uniqueness of a bifurcation point due to the lack of differentiability. 

\medskip
For an analytic treatment to further corroborate the numerical results, we next characterize sign-changing rolls by an appropriate boundary value problem. First, we reduce to consider the leading order form of the Swift-Hohenberg equation \eqref{nonsmooth SHE} for small amplitude steady states by ignoring the cubic term, which does not change $\mu^*(\nu)$. Hence, for  $\alpha=1$, we consider		
		\begin{equation} \label{SH1}
			0 = - (1+\partial_x^2)^2 u + \mu u + \nu |u|, 
		\end{equation}
which features the Lipschitz nonlinearity $|u|$ and is linear for $u>0$ and for $u<0$ with switching set $\{u=0\}$. Equation \eqref{SH1} is positively homogeneous: if $u$ is a solution, then $a u$ is also a solution for any $a \geq 0$, i.e., solutions come in cones. Just as \eqref{nonsmooth SHE}, \eqref{SH1} is symmetric with respect to spatial reflection $x \mapsto -x$.

Since the absolute value function is not differentiable at zero, solutions to \eqref{SH1} (as a 4-th order equation), which cross the switching set, are generally $C^4$-smooth, but not $C^5$. 
Recall from \S\ref{sec:homostatesevals} the roll solutions accompanying the homogeneous steady states $u_{\pm 0}$ that fully lie in $\{u>0\}$ or $\{u<0\}$, respectively. Here $u_{\pm 0}$ bifurcated supercritically at half-pitchforks from $u=0$ at $\mu=1\pm\nu$, cf.\ \S\ref{sec:homostates}, as shown in Figure~\ref{figure: homogeneous for alpha 1}. For \eqref{SH1} these half-pitchforks turn into vertical branches and here we seek sign changing periodic solutions that bifurcate from the trivial state. We expect that a branch of such solutions yields a branch to \eqref{nonsmooth SHE} that is perturbed, e.g., using the results in \cite{KuepperHosham2011}.

Due to the reflection symmetry $x\to -x$ of \eqref{SH1}, we seek solutions $u(x)$ to \eqref{SH1} posed on an interval $[0,L]$ that can be extended to solutions on $[-L,L]$ by reflection, $u(x)=u(-x)$, and thus to periodic solutions on $\R$.  Hence, we consider the analogue of homogeneous Neumann boundary conditions where odd derivatives vanish, i.e. $u'(x) = u'''(x)=0$ at $x=0,L$.  
For such solutions of \eqref{SH1} that change sign once, it is convenient to shift the location of sign change from negative to positive to $x=0$ and pose \eqref{SH1} on an interval $[-L_-,L_+]$ with these boundary conditions and $L=L_++L_-$. We thus seek solutions of the form
		\begin{equation}\label{e:bvp1}
			u(x) = \begin{cases}
				u_+(x), & x \in (0,L_+]\\
				0, &x=0,\\
				u_-(x), & x \in [-L_-, 0)
			\end{cases}
		\end{equation}
where $u_+>0>u_-$ and $L_++L_-$ is the half-period of $u$ as a solution posed on $\R$.
See Figure \ref{figure: periodic solution image} for illustration and note the (at least) $C^4$-smoothness at the root of $u$. 		
		\begin{figure}
			\centering
			\includegraphics[width=0.35\linewidth]{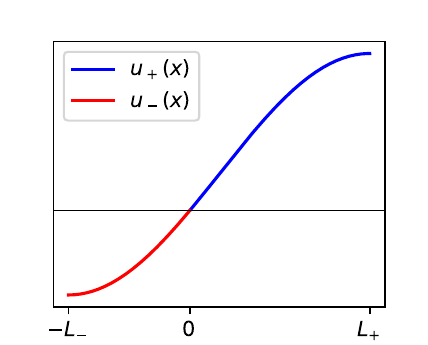}	
			\caption{Illustration of a possible (at least) $C^4$-smooth solution as in Lemma \ref{lem:symperBVP}. 
			}
			\label{figure: periodic solution image}
		\end{figure} 
		Concerning boundary conditions, as mentioned before, for reflection symmetric solutions we have 
		\begin{equation}\label{neumann conditions}
			u_{-}^{(i)} (-L_{-})=0 \text{ and } u_{+}^{(i)} (L_{+})=0 \text{ for } i=1,3,
		\end{equation}
		together with the matching conditions for continuity
		\begin{equation}\label{matching condition}
			u_{+} (0)=u_{-} (0) =0
		\end{equation}
		and smoothness 
		\begin{equation}\label{smoothness condition}
			u_{+}^{(j)} (0)=u_{-}^{(j)} (0) \text{ for } j=1,2,3 .           	
		\end{equation}
		
Equation \eqref{SH1} being linear for fixed sign of $u$, $\nu |u| = \nu u_+$ and $\nu |u| = -\nu u_-$ yields the general solutions  
		\begin{equation}\label{eq:u+-sol}
			u_\pm(x) = a_{\pm 1} e^{i \kappa_{\pm 1}x} +  a_{\pm 2} e^{-i \kappa_{\pm 1}x} +  a_{\pm 3} e^{i \kappa_{\pm 2}x} +  a_{\pm 4} e^{-i \kappa_{\pm 2}x}
		\end{equation}
		where $a_{\pm j}\in\C$. The complex frequencies $\kappa_{\pm j}$ result from the $(\nu,\mu)$-dependent solutions to the characteristic equations of \eqref{SH1} for $u>0$ and $u<0$, respectively, and are given by 
		\begin{equation}\label{eigenvalues}
			\begin{aligned}
				\kappa_{+1} = \sqrt{1 + \sqrt{\mu + \nu}} &,& \kappa_{-1} = \sqrt{1 + \rmi \sqrt{\nu - \mu}}\\
				\kappa_{+2} = \sqrt{1 - \sqrt{\mu + \nu}}  &,& \kappa_{-2} = \sqrt{1 -\rmi \sqrt{\nu - \mu}}.	
			\end{aligned}
		\end{equation} 
Figure \ref{figure: spectral plot} illustrates the loci of these in different regions of $\mu$ defined by changes in the real parts for $u>0$ and/or $u<0$. There are two different scenarios depending on whether $\nu>\frac 1 2$ or $\nu<\frac 1 2$. 

Recall from \S\ref{sec:homostatesevals} that in the smooth case the purely imaginary spatial eigenvalues determine (up to the resonant case) the existence of rolls near $u=0$ via the Lyapunov centre theorem.  We are not aware of any similar generally applicable abstract result for the present case, where the spatial eigenvalues switch. Nevertheless, the character of the spatial eigenvalues is clearly relevant for the boundary value problem corresponding to such solutions and we note the results reviewed in \cite{KuepperHoshamWeiss2013}. This relies on determining the eigenvalue of a Poincar\'e map, which, as far as we can see, is no simpler that solving the boundary value problem. Results of this form we are aware of pertain 2D or 3D, with the exception \cite{Hosham2018} that, however, does not overlap with our setting.  

By construction and uniqueness of solutions to \eqref{SH1} for given $\partial_x^ju(0)$, $j=0,1,2,3$, we have the following. 
\begin{lemma}\label{lem:symperBVP}
For each given $\nu,\mu$, the reflection symmetric $2L$-periodic solutions of \eqref{SH1} with two sign changes over one period are, via  \eqref{e:bvp1}, \eqref{eq:u+-sol}, \eqref{eigenvalues}, in 1-to-1 correspondence with $L_\pm>0$ and $a_{\pm j}\in\C$, $j=1,\ldots,4$ such that $L_++L_-=L$, $u_\pm$ satisfy \eqref{neumann conditions}-\eqref{smoothness condition} and $u$ in \eqref{e:bvp1} is real valued. 
\end{lemma}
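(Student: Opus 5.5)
We prove the two directions of the correspondence separately. Throughout we use two facts. First, \eqref{SH1} is linear with constant coefficients on each of $\{u>0\}$ and $\{u<0\}$. Second, the vector field of \eqref{SH1}, written as a first-order system in $(u,u',u'',u''')$, is globally Lipschitz, since $u\mapsto|u|$ is. Hence initial value problems have unique global $C^4$ solutions, determined by $\partial_x^ju(x_0)$, $j=0,\dots,3$.

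\emph{From solutions to data.} Let $u$ be a reflection-symmetric $2L$-periodic solution with exactly two sign changes per period. Up to translation, even symmetry lets us take the symmetry centre at $x=0$ of the original coordinates, so $u'(0)=u'''(0)=0$. By periodicity, $x=L$ is then also a symmetry centre, and there $u'=u'''=0$ as well. Because there are exactly two sign changes per period and $u$ is even, there is exactly one sign change in $(0,L)$, and the endpoints $0$ and $L$ carry opposite signs. We shift coordinates so that this root sits at $0$, with $u>0$ on $(0,L_+]$ and $u<0$ on $[-L_-,0)$; this defines $L_\pm>0$ with $L_++L_-=L$.

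On each of these open subintervals $u$ solves a constant-coefficient linear fourth-order ODE. Its characteristic roots are $\pm\rmi\kappa_{\pm j}$, as given in \eqref{eigenvalues}. So $u|_{(0,L_+]}$ has the form \eqref{eq:u+-sol} with unique coefficients $a_{+j}$, and likewise $u|_{[-L_-,0)}$ with unique $a_{-j}$. Uniqueness of the coefficients requires the four exponentials to be linearly independent. In a degenerate case with double roots, for instance $\kappa_{+2}=0$ at $\mu+\nu=1$, the basis must be replaced by the usual polynomial-times-exponential one; I would note this as a convention in \eqref{eq:u+-sol}.

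Since $u\in C^4$ and both $u_\pm$ extend analytically to $x=0$, the conditions \eqref{matching condition}-\eqref{smoothness condition} hold. The conditions \eqref{neumann conditions} are the symmetry conditions already established. Real-valuedness is immediate.

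\emph{From data to solutions.} Conversely, suppose $L_\pm$, $a_{\pm j}$ satisfy the conditions and the piecewise function $u$ in \eqref{e:bvp1} is real with $u_+>0>u_-$, as built into \eqref{e:bvp1}. Then $u$ is $C^3$ at $0$ by \eqref{smoothness condition}. On each side $u$ solves \eqref{SH1}, because there $\nu|u|=\pm\nu u$. Moreover $u''''$ matches at $0$ as well, since the equation gives $u''''$ in terms of lower derivatives and $|u|$ vanishes there. So $u$ is a $C^4$ solution on $[-L_-,L_+]$.

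Next we extend $u$ by reflecting evenly about $x=L_+$. Call the reflected function $\tilde u(x)=u(2L_+-x)$. It also solves \eqref{SH1}, because the equation contains only even derivatives. By \eqref{neumann conditions} it has the same $4$-jet as $u$ at $L_+$, so uniqueness of the initial value problem glues the two pieces into a solution. Reflecting likewise about $-L_-$ and iterating, we obtain a $2L$-periodic even solution on $\R$. Its sign pattern per period is one positive and one negative arc, so it has exactly two sign changes per period.

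\emph{Bijectivity and the main obstacle.} The two constructions are mutually inverse. Uniqueness of the coefficients gives one composition, and uniqueness of the initial value problem gives the other. The main subtlety is this bookkeeping rather than analysis. We must fix the translation normalisation that places the positive-going root at $0$, and we must ensure the exponential basis is a genuine basis, handling resonant or double characteristic roots with the standard modified basis. With these fixed, the correspondence follows as above.
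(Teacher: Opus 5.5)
Your proposal is correct and follows the paper's route. The paper justifies the lemma in one line, ``by construction and uniqueness of solutions to \eqref{SH1} for given $\partial_x^ju(0)$, $j=0,\dots,3$''. Your two directions (linearity on each sign region, plus Lipschitz IVP uniqueness for gluing at $0$ and for the even reflections about $L_+$ and $-L_-$) supply exactly the details behind that sentence. Two of your additions go beyond the paper: the caveat that \eqref{eq:u+-sol} is not a basis at the double roots $\mu=\pm\nu$, $\mu=1\pm\nu$ (relevant, e.g., at $\mu=1-\nu$ in Theorem~\ref{thm:lowl}) is a refinement the paper omits, and you are right to build the strict sign condition $u_+>0>u_-$ from \eqref{e:bvp1} into the converse direction.
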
		

		\begin{figure}
			\begin{tabular}{cc}
			\includegraphics[width=0.45\linewidth]{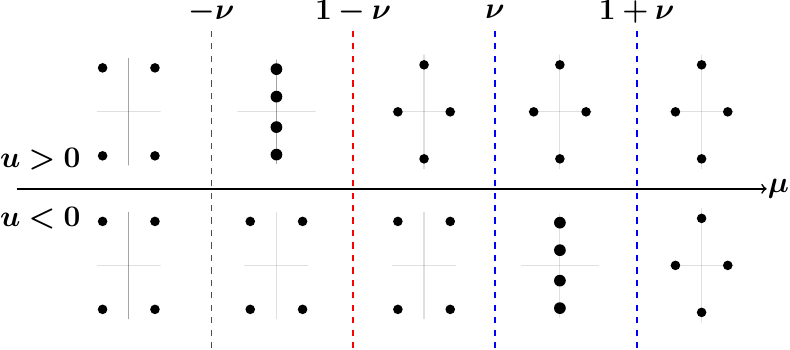}& 
			\includegraphics[width=0.45\linewidth]{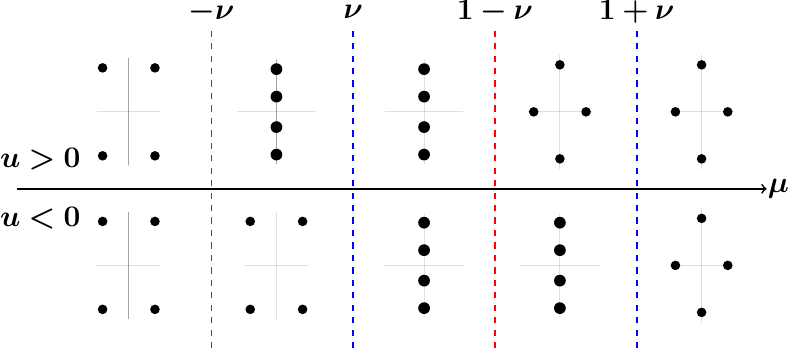}\\
			(a) & (b)
			\end{tabular}
			\caption{Plot of the qualitative arrangements of the spatial eigenvalues $\rmi\kappa_{\pm j}$, $j=1,2$ from \eqref{eigenvalues}  for $\alpha = 1$ for (a) $\nu > 1/2$ and (b) $\nu < 1/2$.}
			\label{figure: spectral plot}
		\end{figure} 
		
In order to analyze bifurcations for $\mu>-\nu$, which is the relevant range due to Lemma~\ref{lem:globaldecay}, we focus on the $2\pi$-periodic case so that $L_+ + L_- = \pi$ and first consider $\nu > 1/2$, $\mu \in (-\nu,1-\nu]$. Then $\kappa_{+1}, \kappa_{+2}\in\R$, $\mathrm{Re}(i \kappa_{-2}) = -\mathrm{Re} (i \kappa_{-1})$ and $\mathrm{Im}(i \kappa_{-2}) = \mathrm{Im} (i \kappa_{-1})$. We next show that this admits a reduction of the boundary value problem \eqref{e:bvp1}-\eqref{smoothness condition} to a system of algebraic equations involving the quantities 
	\begin{align} \label{eq: gamma's}
		\begin{split}
		\eta = & \mathrm{Re}(i \kappa_{-2}), \quad \beta = \mathrm{Im}(i \kappa_{-2}),\\
			\gamma_{31} =& (-1 + e^{2 \eta (L_+ -\pi)}) (\kappa_{+1}^2 - \kappa_{+2}^2)-2 \eta (1 + e^{2 \eta (L_+ -\pi)}) \kappa_{+1} \tan(\kappa_{+1} L_+) \\
			&  +2 \eta (1 + e^{2 \eta (L_+ -\pi)}) \kappa_{+2} \tan(\kappa_{+2} L_+), \\
			\gamma_{32} =& \eta (1 + e^{2 \eta (L_+ -\pi)}) (\kappa_{+1}^2 - \kappa_{+2}^2) + (-1 + e^{2 \eta (L_+ -\pi)}) \kappa_{+1} (\eta^2 - \beta^2 + \kappa_{+1}^2) \tan(\kappa_{+1} L_+) \\
			& - (-1 + e^{2 \eta (L_+ -\pi)}) \kappa_{+2} (\eta^2 - \beta^2 + \kappa_{+2}^2) \tan(\kappa_{+2} L_+), \\
			\gamma_{41} =& (3 \eta^2 - \beta^2)(-1 + e^{2 \eta (L_+ -\pi)})(\kappa_{+1}^2 - \kappa_{+2}^2) +2 \eta (1 + e^{2 \eta (L_+ -\pi)}) \kappa_{+1}^3 \tan(\kappa_{+1} L_+) \\
			& -2 \eta (1 + e^{2 \eta (L_+ -\pi)}) \kappa_{+2}^3 \tan(\kappa_{+2} L_+), \\
			\gamma_{42} =& \eta (\eta^2 -3\beta^2) (1 + e^{2 \eta (L_+ -\pi)}) (\kappa_{+1}^2 - \kappa_{+2}^2) \\
			& + (-1 + e^{2 \eta (L_+ -\pi)}) \kappa_{+1} \big( \eta^4 + \beta^4 -\beta^2 \kappa_{+1}^2 + \eta^2 (2\beta^2 + \kappa_{+1}^2) \big) \tan(\kappa_{+1} L_+) \\
			& - (-1 + e^{2 \eta (L_+ -\pi)}) \kappa_{+2} \big( \eta^4 + \beta^4 -\beta^2 \kappa_{+2}^2 + \eta^2 (2\beta^2 + \kappa_{+2}^2) \big) \tan(\kappa_{+2} L_+).
		\end{split}
	\end{align} 
In particular all these depend on $\mu,\nu$. 

	\begin{theorem}\label{theorem: reduced system nu>1/2}
For given $\nu > 1/2$, the reflection symmetric $2\pi$-periodic real valued solutions of \eqref{SH1} with two sign changes over one period with $\mu\in (-\nu,1-\nu)$  are in 1-to-1 correspondence with solutions $(\mu,L_+)\in (-\nu,1-\nu)\times [0,\pi]$ to 
		\begin{align}
			\begin{split}\label{reduced system}
				&f_\nu (\mu,L_+) := \gamma_{31} \gamma_{42} - \gamma_{41} \gamma_{32} = 0 \\
				&g_\nu (\mu,L_+) := \beta \cos\big(\beta (L_+ -\pi)\big) \gamma_{31} + \sin\big(\beta (L_+ - \pi)\big) \gamma_{32} = 0
			\end{split}
		\end{align}
		with $\gamma_{ij}, \beta, \eta$ defined in \eqref{eq: gamma's}.
\end{theorem}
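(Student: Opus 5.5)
Starting from Lemma~\ref{lem:symperBVP} with $L=\pi$, $L_-=\pi-L_+$, I would turn the boundary value problem \eqref{e:bvp1}--\eqref{smoothness condition} into a linear homogeneous system for the eight coefficients $a_{\pm j}$. The reduced system \eqref{reduced system} should then be the condition that this system has a nontrivial, real-valued solution.

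The first step is to use the Neumann conditions \eqref{neumann conditions} at the two endpoints to halve the unknowns. For $\mu\in(-\nu,1-\nu)$ the frequencies $\kappa_{+1},\kappa_{+2}$ are real and distinct, with $\kappa_{+1}^2-\kappa_{+2}^2=2\sqrt{\mu+\nu}\neq 0$. A real solution with $u_+'(L_+)=u_+'''(L_+)=0$ therefore has the form
\[
u_+(x)=b_1\cos\big(\kappa_{+1}(x-L_+)\big)+b_2\cos\big(\kappa_{+2}(x-L_+)\big),\qquad b_1,b_2\in\R .
\]
This holds because the $2\times2$ system for the sine coefficients has determinant $\kappa_{+1}\kappa_{+2}(\kappa_{+2}^2-\kappa_{+1}^2)$, which is nonzero as long as $\kappa_{+2}\neq0$; the case $\mu=1-\nu$ is excluded. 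On the negative side the exponents are $\pm(\eta\pm\rmi\beta)$ with $\eta\neq0$, and evenness about $x=-L_-$ forces
\[
u_-(x)=c_1\cosh\big(\eta(x+L_-)\big)\cos\big(\beta(x+L_-)\big)+c_2\sinh\big(\eta(x+L_-)\big)\sin\big(\beta(x+L_-)\big).
\]
Here the same determinant argument applies, since the odd part must vanish. Realness is imposed throughout, so the remaining unknowns are the four reals $b_1,b_2,c_1,c_2$.

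Next I impose the four remaining conditions: $u_+(0)=0$, $u_-(0)=0$ from \eqref{matching condition}, and the three derivative matchings from \eqref{smoothness condition}. That makes five homogeneous linear equations in four unknowns. I would use $u_+(0)=0$ to eliminate, say, $b_2=-b_1\cos(\kappa_{+1}L_+)/\cos(\kappa_{+2}L_+)$; after dividing by the cosines this is where the $\tan(\kappa_{+j}L_+)$ terms enter. I would then use the first-derivative matching together with $u_-(0)=0$ to express $c_1,c_2$ in terms of $b_1$. Rewriting $\cosh,\sinh$ via $e^{2\eta(L_+-\pi)}$ gives the factors $1\pm e^{2\eta(L_+-\pi)}$.

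This leaves two equations, the second and third derivative matchings, in effectively two unknowns. The coefficient matrix has rows $(\gamma_{31},\gamma_{32})$ and $(\gamma_{41},\gamma_{42})$ after suitable scaling. A nontrivial solution exists iff $f_\nu=0$, and the relation fixing the ratio of the two remaining unknowns, combined with $u_-(0)=0$, produces $g_\nu=0$. Here $\cos(\beta(L_+-\pi))$ and $\sin(\beta(L_+-\pi))$ come from evaluating $u_-$ at $0$, that is at distance $L_-=\pi-L_+$ from the symmetry point. I expect the bookkeeping of which equation yields $g_\nu$ to be the main obstacle: the $\gamma$'s have to be matched exactly, and they may have been normalised by common factors such as $\cos(\kappa_{+j}L_+)$ or $(1+e^{2\eta(L_+-\pi)})$. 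I would therefore first carry out the elimination symbolically in the order above, and only then identify the resulting expressions with \eqref{eq: gamma's}.

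Finally, for the 1-to-1 correspondence I would run the argument in reverse. Any solution of \eqref{reduced system} gives a one-dimensional kernel, so the solution is unique up to a positive scaling, consistent with positive homogeneity of \eqref{SH1}. Sign conditions and the degenerate points where cosines vanish (poles of $\tan$) would need to be addressed, either by clearing denominators or by remarking that the correspondence is understood up to these exceptional values.
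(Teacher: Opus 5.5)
There is a genuine gap: your reduction does characterise the boundary value problem by two scalar equations, but it does not reach the system \eqref{reduced system}, and the step where you claim it does fails. First, the bookkeeping is inconsistent. After eliminating $b_2$ via $u_+(0)=0$, and $c_1,c_2$ via the $u'$-matching and $u_-(0)=0$, only the single unknown $b_1$ remains. The $u''$- and $u'''$-matchings then just say that two scalar coefficients vanish. There is no $2\times2$ system with a free kernel whose determinant could be $f_\nu$. Moreover, $u_-(0)=0$ has already been used, so it cannot afterwards ``produce'' $g_\nu$.

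Second, and more seriously, with the Neumann conditions built into your ansatz and the matching at $0$ imposed, your two final equations are genuinely different functions from $f_\nu,g_\nu$, not row-rescalings of them. Imposing $u_-(0)=0$ on your even family gives $(c_1,c_2)\propto(\sinh(\eta L_-)\sin(\beta L_-),-\cosh(\eta L_-)\cos(\beta L_-))$. Then $u_-'(0)\propto \eta\sin(2\beta L_-)+\beta\sinh(2\eta L_-)$, so your conditions contain double-angle trigonometric terms added to hyperbolic ones. By contrast, the $\gamma_{ij}$ in \eqref{eq: gamma's} contain no trigonometric functions of $\beta(L_+-\pi)$, and $g_\nu$ is linear in $\cos(\beta(L_+-\pi))$ and $\sin(\beta(L_+-\pi))$. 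Both pairs of equations have (generically) the same common zero set, since they describe the same boundary value problem. But proving that equality is essentially the content of the theorem, and you defer it as the ``main obstacle''.

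The paper avoids this by reversing which conditions are built in. It writes $u_\pm$ as in \eqref{new u's}, so \eqref{matching condition} holds automatically, leaving six real unknowns. The two Neumann conditions at $L_+$ and the three conditions \eqref{smoothness condition} are solved explicitly for $r_{\pm j},s_-$ in terms of $s_+$; this is \eqref{eq:rs}. The two Neumann conditions at $-L_-$ then factor as $\frac{e^{\eta(\pi-L_+)}s_+}{4\eta\beta}\,\Gamma w=0$, with $\Gamma=(\gamma_{ij})_{i=3,4;\,j=1,2}$ and $w=(\beta\cos(\beta(L_+-\pi)),\sin(\beta(L_+-\pi)))^T$; see \eqref{eq:fs}. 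So the trigonometric factors come from evaluating at the left endpoint $-L_-$, not from $u_-(0)=0$, and $w$ is a vector fixed by the parameters rather than a free unknown.

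Since $\beta\neq0$, we have $w\neq0$, so $\Gamma w=0$ forces $f_\nu=\det\Gamma=0$, and its first row is exactly $g_\nu=0$. Conversely, $f_\nu=g_\nu=0$ yields the second row whenever $(\gamma_{31},\gamma_{32})\neq0$, and $s_+=0$ gives only $u\equiv0$. To repair your proof, either adopt this parametrisation, or add a separate argument that turns your two equations into $\Gamma w=0$.
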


		\begin{proof} 
		For $-\nu < \mu < 1-\nu$ and $\nu>1/2$ the solution $u_+$ can be written as
		\begin{align}
			\begin{split}
				u_+(x) =& (a_{+1} + a_{+2}) \cos (\kappa_{+1} x) + i ( a_{+1} - a_{+2} ) \sin (\kappa_{+1} x) \\
				&+ ( a_{+3} + a_{+4}) \cos (\kappa_{+2} x) + i ( a_{+3} - a_{+4}) \sin (\kappa_{+2} x) .
			\end{split}
		\end{align}
		Since we search for real solutions, $a_{+1} + a_{+2}$, $ a_{+3} + a_{+4}$ are real and $a_{+1} - a_{+2}$, $a_{+3} - a_{+4}$ are purely imaginary so that $\mathrm{Im}(a_{+1}) = -\mathrm{Im}(a_{+2})$, $\mathrm{Re}(a_{+1}) = \mathrm{Re}(a_{+2})$ and, $\mathrm{Im}(a_{+3}) = -\mathrm{Im}(a_{+4})$, $\mathrm{Re}(a_{+3}) = \mathrm{Re}(a_{+4})$. By the symmetry relations of the imaginary parts of $a_{\pm j}'s$, the matching condition in \eqref{matching condition} gives
		\begin{equation}
			u_+(0) = 2 \mathrm{Re} (a_{+1}) + 2 \mathrm{Re} (a_{+3}) = 0, 
		\end{equation}
		so that $\mathrm{Re}(a_{+1}) = -\mathrm{Re}(a_{+3})$.
		The analogous calculations can be done for $u_{-}(x)$  
		so that 
		\begin{equation}
			\begin{tabular}{p{0.25\linewidth} p{0.25\linewidth}}
				\parbox{\linewidth}{$a_{+1} = -i r_{+1} - s_+ $ \\ $a_{+2} = i r_{+1} - s_+ $ \\ $a_{+3} = -i r_{+2} + s_+ $ \\ $a_{+4} = i r_{+2} + s_+ $}&
				\parbox{\linewidth}{$a_{-1} = -i r_{-1} - s_- $ \\ $a_{-2} = i r_{-2} + s_- $ \\ $a_{-3} = -i r_{-2} + s_- $ \\ $a_{-4} = i r_{-1} - s_- $}
			\end{tabular}
		\end{equation}
		where $r_{\pm j}$ and $s_{\pm}$ are real numbers. Solutions \eqref{eq:u+-sol}  
		are thus of the form
		\begin{align}\label{new u's}
			\begin{split}
				u_+(x) =& 2 r_{+1} \sin(\kappa_{+1}x) + 2 r_{+2} \sin(\kappa_{+2}x)  -2 s_+ \cos(\kappa_{+1}x) + 2 s_+ \cos(\kappa_{+2}x)\\
				u_-(x) =& 2 e^{-\eta x} r_{-1} \sin(\beta x) + 2 e^{\eta x} r_{-2} \sin(\beta x) -2 e^{-\eta x} s_- \cos(\beta x) + 2 e^{\eta x} s_- \cos(\beta x),
			\end{split}
		\end{align} 
		where $i \kappa_{-1} = -\eta + i \beta$ and $i \kappa_{-2} = \eta + i \beta$. Notably, the matching conditions \eqref{matching condition} are satisfied. (For $\nu\geq 1/2$ this form is valid only in the parameter regime $-\nu < \mu < 1-\nu$.) 
		 
		 We are left with the 8 unknowns $L_+$, $\mu$ and $r_{\pm j}$, $s_{\pm}$, $1 \leq j \leq 2$, for the six remaining equations  \eqref{neumann conditions}, \eqref{smoothness condition}. It is convenient to write the latter in the form  
		\begin{equation}\label{system}
			\begin{pmatrix}
				\tilde{u}'_+(L_+) & \begin{matrix} 0 & 0 & 0\end{matrix} \\
				\tilde{u}'''_+(L_+) & \begin{matrix} 0 & 0 & 0\end{matrix} \\
				\begin{matrix} 0 & 0 & 0\end{matrix} & \tilde{u}'_-(-L_-) \\
				\begin{matrix} 0 & 0 & 0\end{matrix} & \tilde{u}'''_-(-L_-) \\
				\tilde{u}'_+(0) & -\tilde{u}'_-(0) \\
				\tilde{u}''_+(0) & -\tilde{u}''_-(0) \\
				\tilde{u}'''_+(0) & -\tilde{u}'''_-(0) 
			\end{pmatrix}
			\begin{pmatrix}
				r_{+1} \\ 
				r_{+2} \\
				s_+ \\
				r_{-1} \\
				r_{-2} \\
				s_-
			\end{pmatrix} = 0
		\end{equation} 
		with $1\times 3$ row vectors $u_+^{(k)}(0)$, $k=1,2,3$, resulting from substituting \eqref{new u's} into \eqref{neumann conditions}, \eqref{smoothness condition} (and removing common constant factors). One readily computes these vectors as 
		\begin{equation}
			\begin{tabular}{p{0.31\linewidth} p{0.5\linewidth}}
				\parbox{\linewidth}{$\tilde{u}'_+(0) = \begin{pmatrix}
						\kappa_{+1}, & \kappa_{+2}, & 0
					\end{pmatrix}$ \\
					$\tilde{u}''_+(0) = \begin{pmatrix}
						0, & 0, & \kappa_{+1}^2 - \kappa_{+2}^2  
					\end{pmatrix}$ \\
					$\tilde{u}'''_+(0) = \begin{pmatrix}
						-\kappa_{+1}^3, & -\kappa_{+2}^3, & 0 
					\end{pmatrix}$
				}&
				\parbox{\linewidth}{$\tilde{u}'_-(0) = \begin{pmatrix}
						\beta, & \beta, & 2\eta
					\end{pmatrix}$ \\
					$\tilde{u}''_-(0) = \begin{pmatrix}
						-2 \eta \beta, & 2 \eta \beta, & 0
					\end{pmatrix}$ \\
					$\tilde{u}'''_-(0) = \begin{pmatrix}
						3 \eta^2 \beta - \beta^3, & 3 \eta^2 \beta - \beta^3, & 2 \eta^3 -6 \eta \beta^2
					\end{pmatrix}$
				}	
			\end{tabular}
		\end{equation}
		as well as  
		\begin{align}
			\begin{split}
				\tilde{u}'_+(L_+) =& \begin{pmatrix}
					\kappa_{+1} \cos(\kappa_{+1} L_+), & \kappa_{+2} \cos(\kappa_{+2} L_+), & \kappa_{+1} \sin(\kappa_{+1} L_+) - \kappa_{+2} \sin(\kappa_{+2} L_+)
				\end{pmatrix} \\
				\tilde{u}'''_+(L_+) =& \begin{pmatrix}
					-\kappa_{+1}^3 \cos(\kappa_{+1} L_+), & -\kappa_{+2}^3 \cos(\kappa_{+2} L_+), & -\kappa_{+1}^3 \sin(\kappa_{+1} L_+) + \kappa_{+2}^3 \sin(\kappa_{+2} L_+)
				\end{pmatrix} \\
			\end{split}
		\end{align}
		and, using $L_- = \pi - L_+$, and abbreviating $c_+:=\cos (\beta(L_+ - \pi)), s_+:=\sin (\beta(L_+ - \pi))$, $e_\pm:=e^{\pm\eta (L_+ - \pi)}$, 
		\begin{align}
			\begin{split}
				\tilde{u}'_-(-L_-) =&\left(\beta e_- c_+-\eta e_- s_+, 
				 \beta e_+ c_++\eta e_+ s_+, 
				 (\eta e_-  +\eta e_+) c_+ + (\beta e_-  -\beta e_+) s_+\right) \\
				\tilde{u}'''_-(-L_-) =& \big(
				(3\eta^2 \beta e_-  - \beta^3 e_-) c_+ - ( \eta^3 e_-  - 3\eta \beta^2 e_-) s_+, 
				(3\eta^2 \beta e_+  -\beta^3 e_+) c_+ + (\eta^3 e_+ - 3\eta \beta^2 e_+) s_+,\\
				&
				(\eta^3 e_-  - 3\eta \beta^2 e_-  + \eta^3 e_+  - 3\eta \beta^2 e_+) c_+ 
				+ (3\eta^2 \beta e_-  - \beta^3 e_-  - 3\eta^2 \beta e_+  + \beta^3 e_+ ) s_+				
				\big).
			\end{split}
		\end{align}
		Solving first two and last three rows in \eqref{system} yields
		\begin{equation} \label{eq:rs}
		\begin{aligned}
			r_{+1} &= 
			\frac{
					\kappa_{+2} (1-\kappa_{+2}^2) r_{+2} \cos(\kappa_{+2} L_+) 
					+ \kappa_{+1} (1 - \kappa_{+1}^2) s_+ \sin(\kappa_{+1} L_+)  
					+ \kappa_{+2} (\kappa_{+2}^2-1) s_+ \sin(\kappa_{+2} L_+) 
			}{
				\kappa_{+1} (-1 + \kappa_{+1}^2) \cos(\kappa_{+1} L_+)
			},\\
			r_{+2} &= \frac{s_+ \sin(\kappa_{+2} L_+)}{\cos(\kappa_{+2} L_+)},\\
			r_{-1} &= -\frac{\kappa_{+1}^2 - \kappa_{+2}^2}{2 \eta \beta} s_+ + r_{-2},\\
			r_{-2} &= -
			\frac{
					4 \eta^2 s_- -\kappa_{+1}^2 s_+ + \kappa_{+2}^2 s_+ 
					+ 2\eta \kappa_{+1} s_+ \tan(\kappa_{+1} L_+) 
					- 2\eta \kappa_{+2} s_+ \tan(\kappa_{+2} L_+)
			}{4 \eta \beta},\\
			s_- &= -
			\frac{
				\begin{aligned}
					\kappa_{+1}(3\eta^2 -\beta^2 +\kappa_{+1}^2) s_+ \tan(\kappa_{+1} L_+) 
					-\kappa_{+2}(3\eta^2 -\beta^2 +\kappa_{+2}^2) s_+ \tan(\kappa_{+2} L_+)
				\end{aligned}				
			}{4\eta (\eta^2 + \beta^2)}.
		\end{aligned}\end{equation}		
The remaining third and fourth rows in \eqref{system} thus read
		\begin{equation}\label{eq:fs}
		\begin{aligned}
			\frac{e^{\eta (-L_+ + \pi)}s_+}{4 \eta \beta} \bigl( \beta \cos(\beta(L_+-\pi)) \gamma_{31} + \sin(\beta(L_+-\pi)) \gamma_{32} \bigl) =0\\
			\frac{e^{\eta (-L_+ + \pi)}s_+}{4 \eta \beta} \bigl( \beta \cos(\beta(L_+-\pi)) \gamma_{41} + \sin(\beta(L_+-\pi)) \gamma_{42} \bigl) =0
		\end{aligned}
		\end{equation}
		where $\eta$, $\beta$ and $e^{\eta(-L_+ + \pi)}$ are non-zero. 
		Hence, \eqref{eq:fs} equivalently reduces to \eqref{reduced system} as claimed. 
		If $s_+ = 0$, then $s_{-}$ and all $r_{\pm j}$ vanish, which implies $u(x)=0$ so that no nontrivial periodic solution exists in this case.
	\end{proof}
		
The approach used in the previous calculation for the region $-\nu < \mu < 1-\nu$ with $\nu>1/2$ can be applied to the remaining parameter regimes for an analogous reduction to algebraic equations. However, it seems difficult to prove the existence and uniqueness of solutions to the algebraic equation system, except for some regimes discussed below. 

Numerical computations suggest that no solutions exist for $\mu > 1-\nu$. 
In the range region $-\nu < \mu < 1-\nu$, $\nu>1/2$, and analogously elsewhere, we have numerically checked the minimum values, for various fixed $\nu$, of the mapping
		\begin{equation}
			H_\nu (\mu,L_+) = f_\nu^2 (\mu,L_+)+ g_\nu^2 (\mu,L_+),
		\end{equation}
with $f_\nu (\mu,L_+)$, $g_\nu(\mu,L_+)$ from \eqref{reduced system}. 
This suggests a unique solution $(\mu^*,L_+^*)$ for each fixed parameter $\nu>1/2$. In particular, for $\nu=1.6$, and within the parameter range $\mu \in (-\nu,1-\nu)$ and $L_+ \in (0,\pi)$, the unique global minimum of $H_\nu$ occurs at $\mu^* \approx -1.1$  
and $L_+^* \approx 2.1$.  
Figure \ref{figure: mu star and L plus star}(a,b) illustrate the corresponding numerical solutions of \eqref{reduced system}; here computed by numerical continuation using \textsc{Auto} \cite{DoedelOldeman2021}. We recall that the value of $\mu^*$ plotted in Figure \ref{figure: fold continuation}(a) has been obtained by continuing fold points of the associated steady state ODE system for $\alpha>1$, which yields the same value.
		
		 \begin{figure}
		 	\centering
		 	\begin{tabular}{ccc}
		 		\includegraphics[width=0.31\textwidth]{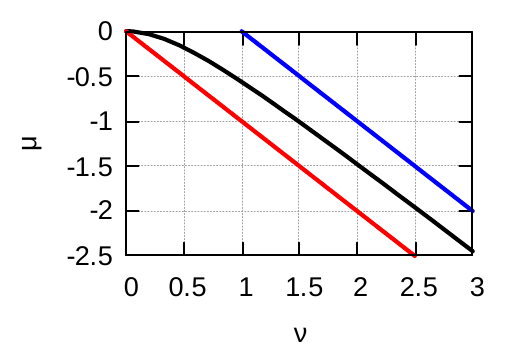} 
				& \includegraphics[width=0.31\textwidth]{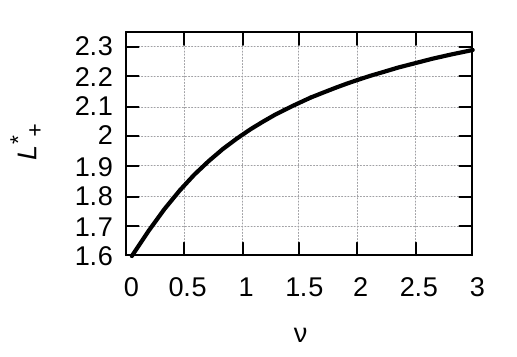} 
				& \includegraphics[width=0.32\textwidth]{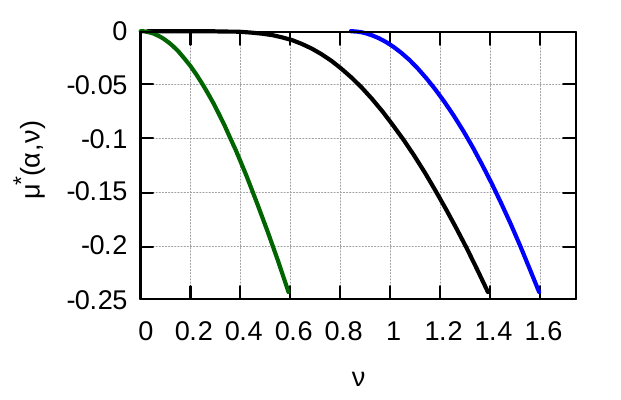}\\
		 		(a) & (b) & (c) 
		 	\end{tabular}
		 	\caption{ (a) Bifurcation point $\mu^*$ as a function of $\nu$ (black, $\alpha=1$), together with the reference lines $\mu=-\nu$ (red) and $\mu=1-\nu$ (blue). (b) Half-period $L_+^*$ as a function of $\nu$. Both quantities are computed from the reduced algebraic system \eqref{reduced system}.(c) Location of fold points $\mu^*(\alpha,\nu)$ as a function of the parameter $\nu$ for different values of the nonlinearity exponent in \eqref{e:spatialNSSHE}: $\alpha = 1.0001$ (green), $\alpha = 1.7$ (black), and $\alpha = 2$ (subcritical, blue). The plot illustrates how the position of the fold point shifts with $\nu$ and $\alpha$ for $2\pi$-periodic branch. }
		 	\label{figure: mu star and L plus star}
		 \end{figure}
	
As discussed in \S\ref{s:rollsalphabigger1} the bifurcation points $\mu^*=\mu^*(\nu)$ is the limit of fold points $\mu^*(\alpha,\nu)$ as $\alpha \searrow 1$. Figure \ref{figure: mu star and L plus star}(c) shows how the location of fold points $\mu^*(\alpha,\nu)$ varies with the parameters $\nu$ for a few values of $\alpha$. For $\alpha = 2$, the bifurcation is, as discussed after \eqref{e:bifeqal2}, subcritical for $\nu>\frac{27}{38}$ so that $\nu=\frac{27}{38}$  is an endpoint of this branch. For $\alpha = 1.7$, the branch connects to $\mu = 0$ along a nearly horizontal line, indicating the absence of a threshold for $\nu$ and uniform criticality across different values of $\nu$. Finally, the case $\alpha = 1.0001$ demonstrates that the transition to $\alpha = 1$ is consistent with the behaviour shown in Figure \ref{figure: mu star and L plus star}(a).

\medskip
The reduction to \eqref{reduced system} is valid also for $\nu=0$, where the system \eqref{SH1} is smooth and linear. In that case the $2\pi$-periodic solutions exist precisely at $\mu=0$. In the following we are interested in tracking these solutions analytically for $\nu\approx 0$, thus highlighting the transition to non-smoothness; there is some analogy to \cite{BettiniCenedeseHaller2024}. This transition occurs at the left boundary in the numerical computations shown in Figure~\ref{figure: mu star and L plus star}(a), which suggest a single branch of solutions exists for $\nu>0$ that has $\mu<0$ and $L_+>\pi/2$. 
		 Next we confirm this rigorously and denote $\ell:=L_+-\pi/2$. Recall that if $u$ is a solution to \eqref{SH1} for some $\mu,\nu$, then $-u$ is a solution for $\mu,-\nu$, i.e., we have the symmetry $(u,\nu,\mu) \to (-u,-\nu,\mu)$. Hence, any branch of solutions gives (typically) another branch upon reflection via this symmetry. Also recall that solutions to \eqref{SH1} come in linear half-spaces parameterised by an amplitude factor. 
		 \begin{theorem}\label{thm:smallnu}
		 	There is $C>0$ such that for all $\nu,\mu,\ell$ with $|\nu|+|\mu|+|\ell|< C$, solutions to \eqref{reduced system} require $-\nu<\mu<\nu$ and form a smooth curve with one endpoint at the origin, and satisfy $\nu>0$, $\mu<0$, $\ell>0$. More specifically, parameterised by $0<\rho_-\ll1$, these solutions have the form 
		 	\begin{align}\label{e:nu0bif}
		 		\begin{split}
		 			\nu &= \rho_-^2 - \frac{30+\pi^2}{48}\rho_-^4 + O(\rho_-^5)\\
		 			\mu &= - \frac{30+\pi^2}{48}\rho_-^4 + O(\rho_-^5),\\
		 			\ell &= \frac{3\pi}{16}\rho_-^2 - \frac{7\pi^2}{64}\rho_-^3 + O(\rho_-^4).
		 	\end{split}\end{align}
		 	Each of these corresponds to a linear half-space of symmetric periodic solutions with two sign changes over one period in the form \eqref{e:bvp1}-\eqref{smoothness condition} with $L_++L_-=\pi$, and there are no other such spaces for $|\nu|+|\mu|+|\ell| < C$.
		 \end{theorem}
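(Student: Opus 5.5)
We plan to treat \eqref{reduced system} near $(\nu,\mu,\ell)=(0,0,0)$ as a perturbation problem, after replacing $(\nu,\mu)$ by variables in which the square roots in \eqref{eigenvalues} become analytic. First, Lemma~\ref{lem:globaldecay} applied to \eqref{SH1} rules out $\mu\le-\nu$. The claim $\mu<\nu$ amounts to excluding the regime $\mu\ge\nu$, in which both $u_\pm$ have real frequencies. There we will redo the reduction of Theorem~\ref{theorem: reduced system nu>1/2} and check that, near the origin, the resulting determinant condition has no solution with $\nu\neq0$.

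In the main regime $|\mu|<\nu$ we set $\rho_+:=\sqrt{\mu+\nu}$ and $\rho_-:=\sqrt{\nu-\mu}$, so that $\nu=(\rho_+^2+\rho_-^2)/2$ and $\mu=(\rho_+^2-\rho_-^2)/2$. Then $\kappa_{+1,2}=\sqrt{1\pm\rho_+}$ and $\eta+\rmi\beta=\rmi\sqrt{1-\rmi\rho_-}$ are analytic in $(\rho_+,\rho_-)$. We have $\eta=\rho_-/2+O(\rho_-^3)$ and $\beta=1+O(\rho_-^2)$. The factor $1/\eta$ that appears in \eqref{eq:rs} has already been cleared in \eqref{eq:fs}, so we may work with $f_\nu,g_\nu$ directly.

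Next we expand $f_\nu,g_\nu$ in $(\rho_+,\rho_-,\ell)$ around $L_+=\pi/2$. At $\rho_\pm=0$ we have $\kappa_{+1}=\kappa_{+2}$, and the $\gamma_{ij}$ degenerate: the combinations $\kappa_{+1}\tan(\kappa_{+1}L_+)-\kappa_{+2}\tan(\kappa_{+2}L_+)$ and $\kappa_{+1}^2-\kappa_{+2}^2$ are divisible by $\rho_+$. The tangent also has a pole at $L_+=\pi/2$. We therefore first multiply through by $\cos(\kappa_{+1}L_+)\cos(\kappa_{+2}L_+)$, whose leading part is $\ell^2$ up to $\rho_+$-corrections. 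We then divide out the common trivial factors, namely powers of $\rho_+$ and $\eta$, to obtain desingularised functions $\tilde f,\tilde g$. We will carry out this expansion with computer algebra.

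The expected structure is that the leading parts of $\tilde f,\tilde g$ are linear in $(\rho_+,\ell)$ with $\rho_-$-dependent coefficients. Near $\rho_-=0$ the system would then give $\rho_+=O(\rho_-^2)$ and $\ell=\tfrac{3\pi}{16}\rho_-^2+\dots$. This would allow the implicit function theorem to be applied in the scaled unknowns $\rho_+=\rho_-^2 p$, $\ell=\rho_-^2 q$ with parameter $\rho_-$. It yields a unique smooth curve and forces $\rho_-\ne0$ along it. Translating back, $\rho_+^2=O(\rho_-^4)$ gives $\nu=\rho_-^2+O(\rho_-^4)$ and $\mu=-\rho_-^2/2+\rho_+^2/2+\dots$. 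The stated coefficients would then be matched by solving to next order. The sign conditions $\nu>0$, $\mu<0$, $\ell>0$ then follow from the leading terms.

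Finally, for the solutions themselves we use \eqref{eq:rs} with $s_+$ free. This gives a one-parameter linear family. Sign-changing exactly twice per period and the orientation $u_+>0>u_-$ restrict the family to a half-line, by checking that $u\approx c\cos(x-L_+)$-type profiles near $\nu=0$ have the right sign pattern. Uniqueness of the half-space follows from the uniqueness part of the implicit function theorem together with Lemma~\ref{lem:symperBVP}.

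The main obstacle is the desingularisation step. It requires identifying exactly which powers of $\rho_+$, $\rho_-$ and $\ell$ to factor out so that the reduced Jacobian is invertible. It also requires confirming that the apparent solution at $\rho_-=0$, where $\nu=\mu=0$ and the problem is linear with a two-dimensional kernel, does not produce spurious branches. I would first try a blow-up $\ell=\rho_-^2 q$, $\rho_+=\rho_-^2p$ and verify nondegeneracy of the resulting $2\times2$ system at $\rho_-=0$.
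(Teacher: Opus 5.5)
\textbf{Gap: the scaling you predict for the branch is wrong, so your blow-up chart contains no solutions.} Since $\nu=(\rho_+^2+\rho_-^2)/2$ and $\mu=(\rho_+^2-\rho_-^2)/2$ exactly, your ansatz $\rho_+=O(\rho_-^2)$ would give $\nu=\tfrac12\rho_-^2+\dots$ and $\mu=-\tfrac12\rho_-^2+\dots$. That is a branch hugging $\mu=-\nu$. It contradicts the expansion you are trying to prove, which forces $\rho_+^2=\nu+\mu=\rho_-^2-\tfrac{30+\pi^2}{24}\rho_-^4+\dots$, i.e.\ $\rho_+=\rho_-(1+O(\rho_-^2))$ and $\mu=O(\nu^2)$.

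The same inconsistency appears in $\ell$. To leading order the regularised second equation is, up to a factor, $-32\ell+\pi\rho_-^2+5\pi\rho_+^2$, so $\ell=\tfrac{\pi}{32}(\rho_-^2+5\rho_+^2)+\dots$. This equals $\tfrac{3\pi}{16}\rho_-^2$ only if $\rho_+\approx\rho_-$; your scaling would give $\tfrac{\pi}{32}\rho_-^2$.

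The root cause is your expectation that the desingularised equations are linear in $\rho_+$. They are not. The map $\rho_+\to-\rho_+$ swaps $\kappa_{+1}$ and $\kappa_{+2}$. Every $\gamma_{ij}$ is antisymmetric under this swap, while $\cos(\kappa_{+1}L_+)\cos(\kappa_{+2}L_+)$ is symmetric. Hence after dividing by $\rho_+$, both equations depend on $\rho_+$ only through $\rho_+^2=\nu+\mu$. So the Jacobian in $(\rho_+,\ell)$ is singular at the origin. In your chart $\rho_+=\rho_-^2p$, $\ell=\rho_-^2q$, the first equation after eliminating $\ell$ reads $\rho_-^2(1+O(\rho_-))=0$, and the implicit function theorem has nothing to continue.

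\textbf{What works instead.} First solve the second equation for $\ell=\ell(\rho_-,\rho_+)$ by the implicit function theorem; its $\ell$-derivative at the origin is $-32\neq0$. Insert this into the first equation. The Newton polygon of the result has a single segment, the homogeneous quadratic $\rho_-^2-\rho_+^2$. On $\rho_\pm>0$ this selects the ray $\rho_+=\rho_-$. The homogeneous blow-up $\rho_+=\rho_-\varphi$, division by $\rho_-^2$, and the implicit function theorem at $\varphi=1$ then give the unique curve.

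Note that $\mu<0$ is then not a leading-order fact. We have $\mu=\tfrac12\rho_-^2(\varphi^2-1)$. The cubic coefficients of the reduced equation cancel, so $\varphi=1+O(\rho_-^2)$, and the sign comes only from the quartic coefficients, which give $\varphi=1-\tfrac{30+\pi^2}{48}\rho_-^2+\dots$.

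\textbf{Secondary gap: the regime $\mu\ge\nu$.} Your exclusion of this regime is only asserted. After the same cosine regularisation, the leading part of both equations there is the indefinite form $-\tfrac{32}{\pi^2}\ell^2+\tau_+^2+\tau_-^2$, with $\tau_\pm=\sqrt{\mu\pm\nu}$. So the regularised system does have zeros near the origin. You must count them (e.g.\ via Weierstrass preparation) and identify them with the spurious zeros produced by the poles of the $\tan$-terms. The boundary cases $\mu=\pm\nu$ need separate treatment.
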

		 \begin{proof}
		 	Since Lemma~\ref{lem:globaldecay} also applies to the case without cubic term, there are no non-trivial periodic solutions with $\mu\leq-\nu$. For $-\nu<\mu<\nu$, recall from \eqref{eigenvalues} that $\kappa_{+1} = \sqrt{1 + \sqrt{\nu+\mu}}$ and $\kappa_{+2} = \sqrt{1 - \sqrt{\nu+\mu}}$ are real valued, while $\kappa_{-1} = \beta+ \rmi\eta=\sqrt{1 + \rmi\sqrt{\nu-\mu}}$, $\kappa_{-2} = \beta-\rmi\eta=\sqrt{1 - \rmi\sqrt{\nu-\mu}}$ with $\beta,\eta\in\R$.
		 	To study the bifurcation from $\nu=\mu=\ell=0$, where $\kappa_{+1}=\kappa_{+2}=1$, we regularize $\gamma_{ij}$, and thus \eqref{reduced system}, at $\nu=\mu=\ell=0$ by replacing $\gamma_{ij}$ with 
		 	\[\check{\gamma}_{ij}:=\cos(\kappa_{+1} L_+)\cos(\kappa_{+2}L_+)\gamma_{ij},\] 
		 	$i=3,4, j=1,2$. Next, we introduce $\rho_\pm:=\sqrt{\nu \pm \mu}$, as replacement of the parameters $\nu,\mu$ for the moment, so that $\check{\gamma}_{ij}$ smoothly depends on $\rho_\pm, \ell$ near zero.  In particular, inspecting the formulas for $\gamma_{ij}$ yields the factorisation $\check{\gamma}_{ij} = \rho_+\rho_- \tilde\gamma_{ij}$, where $\tilde\gamma_{ij}$ are smooth in $\rho_\pm$. Hence, for solutions to \eqref{reduced system} with $\nu\geq|\mu|>0$ we can equivalently study 
		 	\begin{align}
		 		\tilde\gamma_{31} \tilde\gamma_{42} - \tilde\gamma_{41} \tilde\gamma_{32} &= 0 \label{e:gammareduced1} \\
		 		\beta \cos\big(\beta (\ell -\pi/2)\big) \tilde\gamma_{31} + \sin\big(\beta (\ell-\pi/2)\big) \tilde\gamma_{32} &= 0.\label{e:gammareduced2}
		 	\end{align}
		 	Taylor expansion gives 
		 	$\eta= \frac 1 2 \rho_- - \frac 1{16} \rho_-^3 + O(\rho_-^5), \beta = 1 + \frac 1 8\rho_-^2  + O(\rho_-^4)$ 
		 	and \eqref{e:gammareduced2} expands (up to a constant factor) as 
		 	\begin{align}
		 		-32  \ell +   8 \pi \ell \rho_-+ \pi \rho_-^2 + 5 \pi  \rho_+^2 &= O(|\ell|^3 + |\rho_+|^3 + |\rho_-|^3),\label{e:gamredexp2} 
		 	\end{align}
		 	which can be solved for $\ell$ by the implicit function theorem as 
		 	\[
		 	\ell = \ell(\rho_-,\rho_+) := \frac{\pi}{32}(\rho_-^2+5 \rho_+^2) + O(|\rho_\pm|^3).
		 	\]
		 	Substituting into \eqref{e:gammareduced1} and expanding gives (up to a constant factor)
		 	
		 	\begin{align}\label{e:gammareduced3}\begin{split}
		 			\rho_-^2-\rho_+^2 
		 			-\frac{\pi}{2}\rho_-^3 
		 			+\frac{\pi}{2}  \rho_- \rho_+^2
		 			+\left(\tfrac{\pi ^2 }{6} -\tfrac{9}{32}\right)\rho_-^4
		 			+\left(\tfrac{\pi ^2}{8}-\tfrac{25}{32}\right)\rho_+^4
		 			-\left(\tfrac{3}{16}+\tfrac{\pi ^2}{3} \right) \rho_-^2 \rho_+^2
		 			= O(|\rho_+|^5 + |\rho_-|^5),
		 	\end{split}\end{align}
		 	whose Newton polygon has a single segment from the leading part $\rho_-^2-\rho_+^2$. The latter has the unique solution $\rho_+=\rho_-$ in the relevant region $\rho_\pm>0$ for solutions to \eqref{reduced system}. This solution can be found in the form $\rho_+=\rho_- \varphi(\rho_-)$ by the implicit function theorem and its expansion reads 
		 	\[
		 	\varphi(\rho_-) = 1 -\left(\frac{5}{8}+\frac{\pi^2}{48}\right)  \rho_-^2 + O(|\rho_-|^3).
		 	\]
		 	It turns out that the linear coefficient of $\varphi$ is the sum of the cubic coefficients in \eqref{e:gammareduced3} and thus vanishes, while the quadratic coefficient is the sum of the quartic coefficients times $1/2$. From this and $\nu = \left(\rho_+^2+ \rho_-^2\right)/2$, $\mu = \left(\rho_+^2- \rho_-^2\right)/2$ we recover the corresponding values $\nu, \mu$ and the relation to $\ell$ as claimed in \eqref{e:nu0bif}.
		 	
		 	\medskip
		 	Finally, we show that there are no solutions in the region $\mu>\nu$ with $\mu\pm\nu<1$. 
		 	In this case the spatial eigenvalues form two purely imaginary complex conjugate pairs and, compared to the previous discussion,  \eqref{eigenvalues} turns into
		 	\begin{equation}\label{eigenvalues2}
		 		\begin{tabular}{p{0.25\linewidth} p{0.25\linewidth}}
		 			\parbox{\linewidth}{$\kappa_{+1} = \sqrt{1 + \sqrt{\mu + \nu}}\in\R$ \\ $\kappa_{-1} = \sqrt{1 +  \sqrt{\mu - \nu}}\in\R$}&
		 			\parbox{\linewidth}{$\kappa_{+2} = \sqrt{1 - \sqrt{\mu + \nu}}\in\R$ \\ $\kappa_{-2} = \sqrt{1 - \sqrt{\mu - \nu}}\in\R$,}	
		 		\end{tabular}
		 	\end{equation} 
		 	and solutions are of the form 
		 	\begin{align*}
		 		u_+ &= s_+ \left(\cos(\kappa_{+1} x)-\cos(\kappa_{+2} x) \right) +  r_{+1}\sin(\kappa_{+1} x) + 
		 		r_{+2}\sin(\kappa_{+1} x)\\
		 		u_- &= s_-\left(\cos(\kappa_{-1} x)-\cos(\kappa_{-2} x)\right) + r_{-1}\sin(\kappa_{-1} x) + r_{-2}\sin(\kappa_{-2} x).
		 	\end{align*}
		 	Analogous to Theorem \ref{theorem: reduced system nu>1/2} we can reduce the boundary value problem for the relevant periodic orbits \eqref{eq:u+-sol}  to algebraic equations, for $s_+> 0$, $L_-=\pi - L_+$,
		 	\begin{align}\label{eq:reduction2}
		 		\begin{split}
		 			\sqrt{\mu-\nu} \left(\kappa_{+1} \tan(L_+ \kappa_{+1}) - \kappa_{+2} \tan(L_+ \kappa_{+2})\right) 
		 			- \sqrt{\mu+\nu} \left(\kappa_{-1} \tan(\kappa_{-1} L_-) - \kappa_{-2} \tan(\kappa_{-2} L_-)\right)&=0\\			
		 			\sqrt{\mu-\nu} \left(\kappa_{+1}^3 \tan(L_+ \kappa_{+1}) - \kappa_{+2}^3 \tan(L_+ \kappa_{+2})\right) 
		 			- \sqrt{\mu+\nu} \left(\kappa_{-1}^3 \tan(\kappa_{-1} L_-) - \kappa_{-2}^3 \tan(\kappa_{-2} L_-)\right)&=0.
		 	\end{split}\end{align}
		 	If $\mu=\nu$ the form of $u_-$ implies $L_-=\pi/2$ to satisfy the boundary conditions so that $L_+=\pi/2$. But then $u_+$ satisfies the boundary conditions for $\mu=-\nu$ only, which means $\nu=\mu=0$. Similarly for $\mu=-\nu$ so that these cases can be excluded, and we can assume $\kappa_{+1}\neq \kappa_{+2}$ and $\kappa_{-1}\neq \kappa_{-2}$.
		 	
		 	Regularizing \eqref{eq:reduction2} at $L_+=\pi/2$ by multiplication with the cosine denominators in the tan-terms, the resulting system possesses the two solutions (i) $\kappa_{+1}L_+=\pi/2$, $\kappa_{-2}L_-=-\pi/2$, and (ii) $\kappa_{+2}L_+=\pi/2$, $\kappa_{-1}L_-=-\pi/2$, which are singularities of \eqref{eq:reduction2}.  By the previous, no other such singularities yield solutions to the regularized equations. We next show that the regularized equations have precisely two solutions, which therefore must be (i), (ii) and thus there are no solutions to \eqref{eq:reduction2} in this region, which concludes the proof.
		 	
		 	We proceed analogous to the case above, and expand \eqref{eq:reduction2} in terms of $\tau_\pm:=\sqrt{\mu\pm\nu}\geq 0$ (replacing $\rho_\pm$) and $\ell=L_+-\pi/2$. After removing common factors $\tau_+\tau_-$, both equations have the same leading order terms given by the sign indefinite $-\frac{32}{\pi^2}\ell^2 + \tau_+^2 + \tau_-^2$. Being quadratic, by the Weierstrass preparation theorem, there are at most two solutions to each equation near zero in terms of, e.g. $\ell$. The scaling ansatz $\tau_\pm = \ell\cdot  t_\pm$ and subsequent division by $\ell^2$ indeed yields these two solutions for the second equation, which satisfy the requirement $\tau_\pm\geq0$ for positive and negative sign of $\ell$ only, respectively. Substituting these solutions into the first equation and ignoring the sign condition on $\ell$ again gives two solutions, where again the requirement $\tau_\pm\geq0$ selects the sign of $\ell$. Hence, when implementing the sign selection from solving the second equation, there are in total exactly two admissible solutions, which must be (i) and (ii) as noted above and thus do not yield solutions to \eqref{eq:reduction2}. 
		 	
		 	\medskip
		 	In conclusion, there is a unique branch of solutions for $\mu,\nu,\ell\approx 0$ as claimed. These solutions to the algebraic equations indeed give nontrivial solutions $u_\pm$ to \eqref{new u's}, and since any such solution must correspond to a solution to \eqref{e:gammareduced1}, \eqref{e:gammareduced2}, these are the only ones.  
		 \end{proof}

\subsection{Roll Solutions of Different Periods for $\alpha =1$} \label{subsec: different periods}
			
In this section we study, for $\alpha=1$, the bifurcation points corresponding to $\mu^*(\nu)$ for rolls of general period. Let us change the periodicity condition to
		\begin{equation}
			L_+ + L_- = l \pi
		\end{equation}
where $l>0$. 

For the analogue of Theorem \ref{theorem: reduced system nu>1/2} we define, for $\nu > 1/2$,  
		\begin{align} \label{eq: gammas_l}
			\begin{split}
				\gamma_{31} (l)=& (-1 + e^{2 \eta (L_+ -l \pi)}) (\kappa_{+1}^2 - \kappa_{+2}^2)-2 \eta (1 + e^{2 \eta (L_+ -l \pi)}) \kappa_{+1} \tan(\kappa_{+1} L_+) \\
				&  +2 \eta (1 + e^{2 \eta (L_+ -l \pi)}) \kappa_{+2} \tan(\kappa_{+2} L_+) \\
				\gamma_{32} (l) =& \eta (1 + e^{2 \eta (L_+ -l \pi)}) (\kappa_{+1}^2 - \kappa_{+2}^2) + (-1 + e^{2 \eta (L_+ -l \pi)}) \kappa_{+1} (\eta^2 - \beta^2 + \kappa_{+1}^2) \tan(\kappa_{+1} L_+) \\
				& - (-1 + e^{2 \eta (L_+ -l \pi)}) \kappa_{+2} (\eta^2 - \beta^2 + \kappa_{+2}^2) \tan(\kappa_{+2} L_+) \\
				\gamma_{41} (l) =& (3 \eta^2 - \beta^2)(-1 + e^{2 \eta (L_+ -l \pi)})(\kappa_{+1}^2 - \kappa_{+2}^2) +2 \eta (1 + e^{2 \eta (L_+ -l \pi)}) \kappa_{+1}^3 \tan(\kappa_{+1} L_+) \\
				& -2 \eta (1 + e^{2 \eta (L_+ -l \pi)}) \kappa_{+2}^3 \tan(\kappa_{+2} L_+) \\
				\gamma_{42} (l)=& \eta (\eta^2 -3\beta^2) (1 + e^{2 \eta (L_+ -l \pi)}) (\kappa_{+1}^2 - \kappa_{+2}^2) \\
				& + (-1 + e^{2 \eta (L_+ -l \pi)}) \kappa_{+1} \big( \eta^4 + \beta^4 -\beta^2 \kappa_{+1}^2 + \eta^2 (2\beta^2 + \kappa_{+1}^2) \big) \tan(\kappa_{+1} L_+) \\
				& - (-1 + e^{2 \eta (L_+ -l \pi)}) \kappa_{+2} \big( \eta^4 + \beta^4 -\beta^2 \kappa_{+2}^2 + \eta^2 (2\beta^2 + \kappa_{+2}^2) \big) \tan(\kappa_{+2} L_+)
			\end{split}
		\end{align}
		where $\eta = \mathrm{Re}(i \kappa_{-2}) = -\mathrm{Re} (i \kappa_{-1}) $, $\beta = \mathrm{Im}(i \kappa_{-2}) = \mathrm{Im} (i \kappa_{-1})$
and proceed as in the proof of Theorem \ref{theorem: reduced system nu>1/2}. Omitting details, the reduced system analogous to \eqref{reduced system} becomes 
		\begin{align}\label{e:reduced_l}
			\begin{split}
				& f_\nu (l; \mu,L_+) =  \gamma_{31}(l) \gamma_{42}(l) - \gamma_{41}(l) \gamma_{32}(l) = 0 \\
				& g_\nu(l; \mu,L_+) =  \beta \cos\big(\beta (L_+ -l \pi)\big) \gamma_{31} (l) + \sin\big(\beta (L_+ - l \pi)\big) \gamma_{32} (l) = 0.
			\end{split}
		\end{align} 
We plot numerical solutions of this system in Figure \ref{figure: different periodic rolls}. 
		\begin{figure}
			\centering
			\begin{tabular}{cc}
				\includegraphics[width=0.44\linewidth]{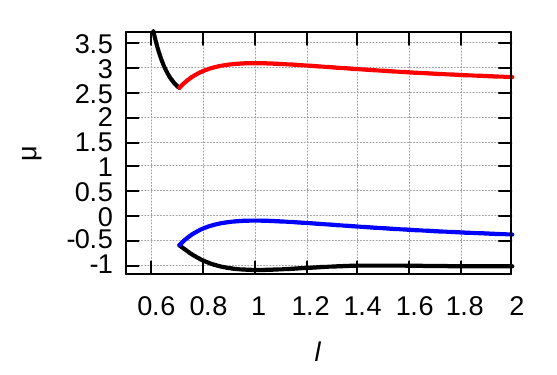} &
				\includegraphics[width=0.44\linewidth]{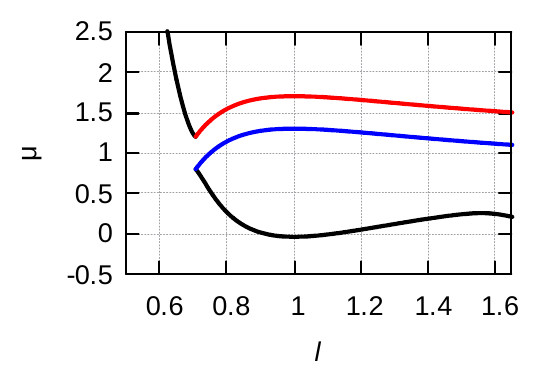} \\
				(a) & (b)
			\end{tabular}
			\caption{ $\mu$-coordinate of the solutions of the reduced algebraic system \eqref{e:reduced_l} as a function of $l$, corresponding to the bifurcation points of $2l\pi$-periodic roll solutions for (a) $\nu=1.6$ and (b) $\nu=0.2$. The black branches correspond to sign-changing solutions within $2l\pi$-period. The blue branch corresponds to the positive-sign solution of \eqref{eq:u_+0 char eq} with $k=1/l$, while the red branch corresponds to the negative-sign solution of \eqref{eq:u_+0 char eq} with $\tilde{\mu}=\mu-\nu$. In both panels, the black branches exhibit a discontinuity at $l=1/\sqrt{2}$: the left branch terminates at $\mu=1+\nu$, while the right branch emerges at $\mu=1-\nu$, cf. Theorem~\ref{thm:lowl}. 
			}
			\label{figure: different periodic rolls}
		\end{figure}		

Let us denote the associated $\mu$-value as $\mu^*_l(\nu)$. The global minimum of $l\mapsto\mu^*_l(1.6)$ corresponds to wavelength $2\pi$ so that, for increasing $\mu$ these bifurcate first; we find this also for other values of $\nu$. This special role of wavelength $2\pi$ is clear in the differentiable case $\alpha>1$ at $\mu=0$ due to the spatial eigenvalues of the linearization. But for $\alpha=1$ there are no (unique) spatial eigenvalues, and it would be interesting to find an analytical explanation. 
More specifically, in case $\alpha>1$ the characteristic equation \eqref{e:spatialNSSHEchar} in $u=0$ gives $\mu=(1-k^2)^2$ and thus $k=1/ l$ yields the analogue of $l\mapsto \mu^*_l(\nu)$. Indeed, the graph of the latter shares qualitative features with that in Figure \ref{figure: different periodic rolls}: a global minimum at $l=1$ and the approach of a constant value as $l$ is increasing; in Figure \ref{figure: different periodic rolls} (a) this is $\mu\approx -1.1$ for which it would likewise be interesting to provide an analytical explanation. 
However, differences are that the behaviour for increasing $l$ is monotone and for $l$ below $1$ it features a singularity at $l=0$, while the black curves in Figure \ref{figure: different periodic rolls} each exhibit a discontinuity at $l\approx 1/\sqrt{2}$, and that for larger $l$ is non-monotone. This suggests that for $\alpha=1$ the possible periods of the primary bifurcating branch of rolls from  $u=0$ are bounded from below by $\sqrt{2}\pi$.
In contrast, in the smooth situation for $\alpha>1$ no such discontinuity occurs, and the red and blue branches lie on top of each other. The following theorem provides a partial explanation. 
 
		\begin{theorem}\label{thm:lowl}
			Let $\nu > 1/2$, $\mu = 1-\nu$ and consider \eqref{e:reduced_l} with coefficients given in \eqref{eq: gammas_l}, where the complex frequencies $\kappa_{\pm j}$ are defined in \eqref{eigenvalues}, and $L_+ \in [0,l\pi]$ for $l>0$. Then \eqref{e:reduced_l} has a unique solution $(l, L_+)$ that gives a $2\pi l$-periodic roll to \eqref{SH1} with \eqref{e:bvp1}-\eqref{smoothness condition} with one zero in the interval $[-L_-,L_+]$. This solution is given by
			\begin{equation}
				l = \frac{1}{\sqrt{2}}, \quad L_+ = \pi/\sqrt{2}.
			\end{equation}
			In particular, it is independent of $\nu$, has $L_- = 0$ and the corresponding roll $u_*$ has period $\sqrt{2}\pi$ and satisfies $u_*(x)>0$ for $x\in(0,\sqrt{2}\pi)$ and $u_*(0)=u_*(\sqrt{2}\pi)=0$.
		 \end{theorem}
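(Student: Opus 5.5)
At $\mu=1-\nu$ we have $\mu+\nu=1$, so $\kappa_{+1}=\sqrt2$ and $\kappa_{+2}=0$. On the negative side $\nu-\mu=2\nu-1>0$, so $\eta,\beta$ are real with $\eta>0$. The plan is to exploit this degeneracy. The positive part simplifies to
\[
u_+(x)=2r_{+1}\sin(\sqrt2 x)-2s_+\bigl(\cos(\sqrt2 x)-1\bigr),
\]
since the $\kappa_{+2}$-sine term disappears. In particular $r_{+2}$ drops out of $\tilde u'_+(0)$ and $\tilde u'''_+(0)$, and the $\tan(\kappa_{+2}L_+)$ terms in \eqref{eq: gammas_l} vanish because $\kappa_{+2}=0$. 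First I substitute $\kappa_{+1}^2-\kappa_{+2}^2=2$ and $\kappa_{+2}=0$ into $\gamma_{ij}(l)$. These then reduce to affine expressions in $T:=\sqrt2\tan(\sqrt2 L_+)$ with coefficients built from $E:=e^{2\eta(L_+-l\pi)}=e^{-2\eta L_-}$, $\eta$ and $\beta$.

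Next I reduce $f_\nu$. Writing $\gamma_{3j}=a_{3j}+b_{3j}T$ and $\gamma_{4j}=a_{4j}+b_{4j}T$, the determinant $\gamma_{31}\gamma_{42}-\gamma_{41}\gamma_{32}$ is a quadratic in $T$. I expect it to factor so that $f_\nu=0$ forces either $E=1$ (that is, $L_-=0$) or a relation that is incompatible with $g_\nu=0$. The identities to use are $\eta^2-\beta^2=-1$ and $2\eta\beta=\sqrt{2\nu-1}$ (from $(\beta+\rmi\eta)^2=1+\rmi\sqrt{\nu-\mu}$ with the paper's sign conventions), together with $\eta^2+\beta^2=\sqrt{1+(2\nu-1)}=\sqrt{2\nu}$.

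Both branches of $f_\nu=0$ are then fed into $g_\nu=0$. On the branch $L_-=0$ we have $E=1$, and the $\cos(\beta(L_+-l\pi))$ and $\sin(\beta(L_+-l\pi))$ factors become $1$ and $0$. So $g_\nu=\beta\gamma_{31}=\beta\bigl(-4\eta\sqrt2\tan(\sqrt2L_+)\bigr)$, which forces $\tan(\sqrt2 L_+)=0$. The regularized form, multiplied by $\cos(\kappa_{+1}L_+)$ as in the proof of Theorem~\ref{thm:smallnu}, gives $\sin(\sqrt2 L_+)=0$. Solutions with $L_+=0$ are excluded because they are trivial, so $\sqrt2 L_+=\pi$ (restricting to the primary solution with one zero in the interval), giving $L_+=\pi/\sqrt2=l\pi$ and $l=1/\sqrt2$.

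Finally I check the roll itself. With $L_+=\pi/\sqrt2$, the Neumann conditions at $L_+$ give $r_{+1}=0$, so $u_+=2s_+(1-\cos\sqrt2 x)$. This is positive on $(0,\sqrt2\pi)$, vanishes to second order at $x=0$ and $x=\sqrt2\pi$, and its derivatives match $u_-\equiv0$ up to the required order. Reflecting gives the $\sqrt2\pi$-periodic nonnegative roll $u_*$, and it is independent of $\nu$.

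The main obstacle is the uniqueness step: excluding the non-$E=1$ factor of $f_\nu$, and excluding other zeros of $\sin(\sqrt2L_+)$ with $L_+\le l\pi$ on the $E=1$ branch. For the first, I would show that this factor forces $\tan(\sqrt2 L_+)$ to be a specific negative multiple involving $\eta/(\eta^2+\beta^2)$. Inserting this into $g_\nu$ then gives $\tan(\beta L_-)=c(\eta,\beta)\tanh(\eta L_-)$-type equations, which have no solution for $L_->0$ by a monotonicity comparison: the derivative at $0$ mismatches and there is convexity. If that fails, I would fall back on the energy identity from Lemma~\ref{lem:globaldecay} applied on the negative part to rule out $L_->0$. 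For the second, the statement's requirement of exactly one zero in $[-L_-,L_+]$ rules out larger multiples of $\pi/\sqrt2$.
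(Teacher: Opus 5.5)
\textbf{The gap: excluding $L_->0$.} Your treatment of $L_-=0$ is the same as the paper's case $L_+=l\pi$, and it is correct. With $\kappa_{+1}=\sqrt2$, $\kappa_{+2}=0$ and $E=1$ you get $g_\nu=\beta\gamma_{31}=-4\sqrt2\beta\eta\tan(\sqrt2L_+)$. This gives $\sin(\sqrt2L_+)=0$, and the one-zero requirement selects $L_+=\pi/\sqrt2=l\pi$. Then $r_{+1}=0$ gives $u_*=2s_+(1-\cos\sqrt2x)$.

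The gap is the step that carries the real content, namely ruling out $L_->0$. There your plan rests on a false expectation: $E-1$ is not a factor of $f_\nu$. Write $X=E-1$, $Y=E+1$, $t=\tan(\sqrt2L_+)$, and use $\beta^2-\eta^2=1$ and $\eta^2+\beta^2=\rho:=\sqrt{2\nu}$. Then the paper's coefficients collapse to
\[
f_\nu=-4\eta\rho\bigl(\rho XY\,t^2-\sqrt2\,\eta\,(X^2+Y^2)\,t+2XY\bigr).
\]
At $E=1$ this equals $16\sqrt2\,\eta^2\rho\,t$, so $E=1$ alone does not solve $f_\nu=0$. For $0<E<1$ close to $1$ it has two negative real roots $t$, and these depend on $E$. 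So $f_\nu=0$ neither splits into an ``$E=1$ factor'' plus an incompatible factor, nor pins $\tan(\sqrt2L_+)$ to a fixed multiple of $\eta/(\eta^2+\beta^2)$.

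More fundamentally, \eqref{e:reduced_l} does not encode the sign condition $u_-<0$. The paper itself has to discard candidate solutions with $L_->0$ by counting zeros. So trying to show that the algebraic system has no solutions at all with $L_->0$ (your $\tan(\beta L_-)$ versus $\tanh(\eta L_-)$ comparison) is the wrong target. Your fallback does not close the gap either. Multiplying by $u$ as in Lemma~\ref{lem:globaldecay} gives at $\mu=1-\nu$ only
\[
\|(1+\partial_x^2)u\|_2^2=\|\max(u,0)\|_2^2+(1-2\nu)\|\max(-u,0)\|_2^2,
\]
which is not contradictory.

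\textbf{The missing idea: test \eqref{SH1} against $1$, not $u$.} Integrate over $[-L_-,L_+]$. The conditions \eqref{neumann conditions}, together with the continuity of $u'''$ at $x=0$ from \eqref{smoothness condition}, kill $\int u''$ and $\int u''''$. What remains is $(\mu-1)\int u\,\rmd x+\nu\int|u|\,\rmd x=0$. At $\mu=1-\nu$ this reads $\nu\int(|u|-u)\,\rmd x=0$, so $u\geq0$ on $[-L_-,L_+]$. That contradicts $u_-<0$ on $[-L_-,0)$ unless $L_-=0$, and it also disposes of $L_+=0$. Equivalently, $z=u'+\tfrac12u'''$ satisfies $z'=\nu\max(-u,0)$ and vanishes at both endpoints. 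The same identity shows that \eqref{SH1} has no nontrivial periodic solutions for $|\mu-1|<\nu$.

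With this one line in place of your factorisation step, your $L_-=0$ analysis completes the proof. This route is also cleaner than the paper's, which substitutes $L_+=ml\pi$ into $f_\nu,g_\nu$ and discards the resulting candidates by counting zeros. Note that the formulas the paper displays for $m\in(0,1)$ are just the $X=0$, $Y=2$ specialisations, even though $X\neq0$ in that case.
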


We emphasize that this result proves that one segment of the branch of sign-changing roll solutions for any $\nu$ contains $(l,L_+)=\left(1,\frac{\pi}{\sqrt{2}}\right)$, but it does not provide information about the nature of the bifurcation at this point or about the other segment. However, by the same arguments, one can show that the other segment contains $(l,L_-)=\left(1,\frac{\pi}{\sqrt{2}}\right)$, with $L_+=0$. 
The fact that $u_*(x)>0$ for $x\in(0,\sqrt{2}\pi)$ is natural for a termination point of  bifurcations from $u=0$, as it suggests the branch transitions to rolls that accompany $u_{+0}$ in a reversible Hopf-zero bifurcation, cf.\ \S\ref{sec:homostatesevals}. Indeed, the spatial frequency $k_+$ from \eqref{e:kpm} evaluated in $u=u_{+0}$ converges to $\sqrt{2}$ as $\alpha\searrow 1$, which gives the limiting period $\sqrt{2}\pi$ for accompanying rolls in $\{u>0\}$.

		\begin{proof}			
			Since $\nu > 1/2$ and $\mu = 1-\nu$, the complex frequencies $\kappa_{\pm j}$ given in \eqref{eigenvalues} yield
			\begin{equation}\label{eq: eigenvalues_1-nu}
				\kappa_{+1}= \sqrt{2}, \quad \kappa_{+2} = 0, \quad \kappa_{-2} = \sqrt{1 - \sqrt{1-2\nu}},
			\end{equation}
			where $i\kappa_{-2} = \eta + i\beta$. For notational convenience, we set 
			\begin{equation}
				X = -1 + \exp\bigl(2\eta(L_+ - l\pi)\bigr)
				\text{ and }
				Y = 1 + \exp\bigl(2\eta(L_+ - l\pi)\bigr).
			\end{equation}
			The coefficients $\gamma_{ij}(l)$ in \eqref{eq: gammas_l} may then be expressed in terms of $X$ and $Y$. Substituting \eqref{eq: eigenvalues_1-nu} and collecting like terms yields
			\begin{align} \label{eq: f_1-nu}
				\begin{split}
					f_\nu (l; \mu,L_+) =\;& XY(-8\beta^2\eta - 8\eta^3) \\
					&+ X^2(-8\sqrt{2}\eta^2 + 12\sqrt{2}\beta^2\eta^2 - 4\sqrt{2}\eta^4)\tan(\sqrt{2}L_+) \\
					&+ Y^2(-8\sqrt{2}\eta^2 + 12\sqrt{2}\beta^2\eta^2 - 4\sqrt{2}\eta^4)\tan(\sqrt{2}L_+) \\
					&+ XY(-16\eta + 16\beta^2\eta - 4\beta^4\eta - 16\eta^3 - 8\beta^2\eta^3 - 4\eta^5)
					\tan^2(\sqrt{2}L_+).
				\end{split}
			\end{align}
			Similarly, we find
			\begin{align}\label{eq: g_1-nu}
				\begin{split}
					g_\nu (l; \mu,L_+) = &Y\Bigl(2\eta \sin\bigl((L_+ - l\pi)\beta\bigr)
					- 2\sqrt{2}\,\beta\eta \cos\bigl((L_+ - l\pi)\beta\bigr)\tan(\sqrt{2}L_+)\Bigr) \\
					&+ X\Bigl(2\beta \cos\bigl((L_+ - l\pi)\beta\bigr)
					+ 2\sqrt{2}\sin\bigl((L_+ - l\pi)\beta\bigr)\tan(\sqrt{2}L_+) \\
					&\qquad - \sqrt{2}\beta^2 \sin\bigl((L_+ - l\pi)\beta\bigr)\tan(\sqrt{2}L_+)
					+ \sqrt{2}\eta^2 \sin\bigl((L_+ - l\pi)\beta\bigr)\tan(\sqrt{2}L_+)\Bigr).
				\end{split}
			\end{align}
			
			 Let us consider a fixed $l>0$ and seek possible solutions $L_+\in [0,l\pi]$. We first consider the boundary cases $L_+=0, l\pi$. For $L_+ = 0$ we have $X,Y\neq 0$ and \eqref{eq: f_1-nu} yields
			\begin{equation*}
				f_\nu (l; \mu,0)= -2 X Y \eta (\beta^2 + \eta ^2),
			\end{equation*}  
			which vanishes if and only if $\beta^2 + \eta ^2 = 0$ or $\eta =0$. We have $\beta^2 + \eta ^2 = (i \kappa_{-2}) (-i \kappa_{-2}) = \kappa_{-2}^2 = 1-\sqrt{1-2\nu} =0$ if and only if $\nu =0$, which contradicts the assumption $\nu > 1/2$. Independent of $L_+$, that $\nu > 1/2$ implies $\kappa_{-2}^2 =1-\sqrt{1-2\nu}$ is non-real, so that from $\kappa_{-2}^2 = (\beta - i \eta)^2 = \beta^2 - \eta^2 - 2i \beta \eta$ it follows that $\eta \neq 0$ and $\beta \neq 0$. Therefore, for $L_+ =0$, no solution exists.
			
			Next, we set $L_+ = l\pi$ so that $L_-=0$.  Substituting this into \eqref{eq: f_1-nu} and \eqref{eq: g_1-nu}, we obtain
			\begin{align*}
				f_\nu (l; \mu,L_+) 
				&= -16\sqrt{2}\eta^2 (2 - 3\beta^2 + \eta^2)\tan(\sqrt{2}l\pi),\\
				g_\nu (l; \mu,L_+) 
				&= -4\sqrt{2}\beta\eta\tan(\sqrt{2}l\pi) .
			\end{align*}
			Since $\beta\eta\neq 0$ as shown above, $f_\nu(l;\mu,L_+)$ and $g_\nu(l;\mu,L_+)$ vanish if and only if $\tan(\sqrt{2}l\pi)=0$. 
			Since $l>0$, the smallest positive admissible value is $l=1/\sqrt{2}$. Checking the coefficients in \eqref{eq:rs},  
			we obtain a nontrivial solution to the boundary value problem. The larger solutions $l=j/\sqrt{2}$, $j\in\mathbb{N}$ correspond to extending the roll solution $u$, which thus has additional zeros in $[0,l\pi]$.
			
			Now, we consider $L_+\in(0,l\pi)$ and set $L_+ = m l \pi$ for some $m\in(0,1)$ so that $L_- \neq 0$. In this case, \eqref{eq: f_1-nu} and \eqref{eq: g_1-nu} reduce to  
			\begin{equation*} 
				\begin{aligned}
					f_\nu (l; \mu,m l\pi) &=-16\sqrt{2}\eta^2  \left(2 -3\beta^2 + \eta^2\right) \tan\big(\sqrt{2} l m \pi\big)
					 \\
					g_\nu (l; \mu,m l\pi) &= 4\eta \sin\big(\beta l(-\pi + m\pi)\big)
					-4\sqrt{2}\beta\eta \cos\big(\beta l(-\pi + m\pi)\big)\tan\big(\sqrt{2}lm\pi\big).
				\end{aligned}
			\end{equation*}
			which  
			simultaneously vanish if and only if $\tan\big(\sqrt{2} l m \pi\big) =0$ and $\sin\big((-l\pi + lm\pi)\beta\big)=0$; again recall $\eta\beta\neq 0$.  
			Consequently, the admissible values of $l>0$ must satisfy either
			\begin{equation} \label{eq: case_1}
				l = \frac{c_1}{\sqrt{2}m} \text{ and } l = \frac{2 c_2}{\beta(1-m)} \text{ for } c_1,c_2 \in \mathbb{N}
			\end{equation}
			or 
			\begin{equation}\label{eq: case_2}
				l = \frac{c_1}{\sqrt{2}m} \text{ and } l = \frac{1+2c_2}{\beta(1-m)} \text{ for } c_1,c_2 \in \mathbb{N}.
			\end{equation}
			
			In case \eqref{eq: case_1}, we obtain $m=m (c_1,c_2,\beta) := \frac{c_1 \beta}{c_1 \beta +2 \sqrt{2} c_2}$ for any $c_1 > 0$ and $c_2 > 0$.
			For this value of $m$ we have $l = \frac{c_1}{\sqrt{2}} + \frac{2c_2}{\beta}$, which attains its minimum for $c_1 = c_2 = 1$. However, for these parameter values,  the corresponding solution $u$ attains  additional zeros in $[0,l\pi]$ and hence does not yield an admissible solution. 		
				
			In case \eqref{eq: case_2}, we find that $m=m (c_1,c_2,\beta) := \frac{c_1 \beta}{c_1 \beta +\sqrt{2} +2 \sqrt{2}c_2 }$ for any $c_1 > 0$. 
			We have 
			$l = \frac{c_1}{\sqrt{2}} + \frac{1}{\beta} + \frac{2 c_2}{\beta}$, is minimized when $c_1 =1$ and $c_2 =0$. Nevertheless,  the corresponding solution $u$ again attains  additional zeros in $[0,l\pi]$.  Consequently, $L_+=ml\pi$ does not give rise to an admissible solution in either case.

			Therefore, for $\nu>1/2$ and $\mu=1-\nu$, we conclude that $l =1/\sqrt{2}$  at $L_+=l\pi$. 
		\end{proof}

An analogous argument applied to \eqref{eq:reduction2} shows that the same conclusion holds for $\nu<1/2$;  cf. Figure~\ref{figure: different periodic rolls}(b).

Based on the numerical results we conjecture that for any fixed $\nu$ and any period, the values of $\mu$ for bifurcation points of rolls from $u=0$ are constrained to $(-\nu, 1-\nu)\cup(1+\nu,\infty)$.

\section{Homoclinic Snaking}\label{sec:snaking}
	
Beyond periodic roll solutions, in this section we numerically investigate how the non-smoothness affects spatially localised stationary solutions $u$ to \eqref{nonsmooth SHE}, i.e., solutions that spatially asymptote to zero. These correspond to  
homoclinic trajectories in the spatial ODE phase space $\R^4$ of \eqref{e:spatialNSSHE}. 
 Homoclinic snaking \cite{BurkeKnobloch2006,BurkeKnobloch2007} refers to the oscillatory behaviour of a branch of homoclinic solutions that approaches a heteroclinic cycle containing a periodic orbit corresponding to a roll solution (and its translates). 
During oscillations of the branch, the corresponding localised solutions $u$ successively gain additional rolls in their spatial profile. 
For $\alpha=2$, in order for such localised solutions to be stable in \eqref{nonsmooth SHE}, the stable zero state should co-exist with rolls at $\mu<0$, which requires a subcritical bifurcation of rolls. 
	
		\begin{figure}
			\centering
			\begin{tabular}{cc}
				\includegraphics[width=0.4\linewidth]{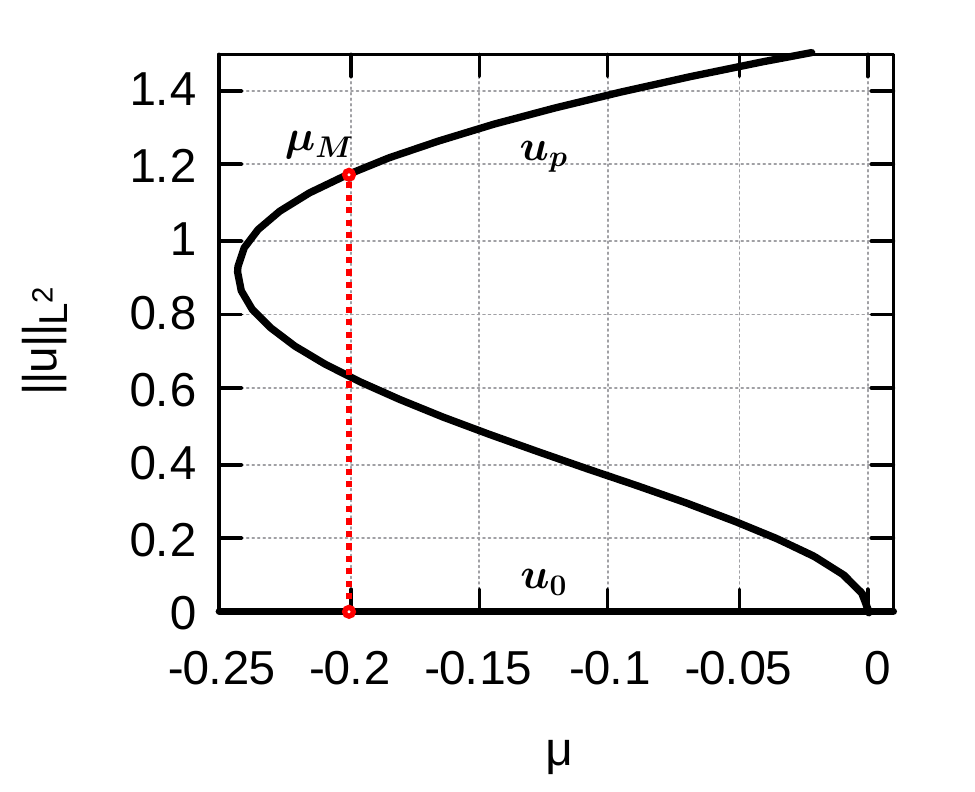} &
				\includegraphics[width=0.44\linewidth]{images/snakingmaxu.pdf} \\
				(a) & (b)
			\end{tabular}
			\caption{ For $\nu = 1.6$: (a) Location of Maxwell point in \eqref{eq:energy_func} for $\alpha=2$. (b) Snaking branches of \eqref{e:spatialNSSHE} for $\alpha = 2$ (magenta), $1.5$ (orange), $1.1$ (green), $1.05$ (blue), $1.04126$ (red), $1.03242$ (black) and the scaled $L^2$-norm of the rolls at the Maxwell points $\mu=\mu_M(\alpha,1.6)$ (purple) for $\alpha \in [1,2]$.}          
			\label{figure: maxwell and snakings}
		\end{figure} 
		
\subsection{Non-smooth birth of homoclinic snaking}\label{subsec: homoclinic snaking}
The variational structure of the Swift-Hohenberg equation \eqref{nonsmooth SHE} yields a Hamiltonian for \eqref{e:spatialNSSHE} given by the energy functional
\begin{equation}\label{eq:energy_func}
E[u] = \int _{\Omega} \frac{1}{2} \bigl( (1+\partial_x^2) u \bigl)^2 - \frac{1}{2} \mu u^2 - \frac{\nu u|u|^\alpha}{\alpha+1} + \frac{1}{4}u^4 \,dx\ .
\end{equation}
Hence, a heteroclinic orbit connecting $u=u_h=0$ and a roll $u_p$ in \eqref{e:spatialNSSHE} requires equal energy, $E[u_p] = E[u_h]$. The associated  parameter value $\mu=\mu_M(\alpha,\nu)$ is referred to as the Maxwell point. The heteroclinic bifurcation structure generically implies that such a heteroclinic connection persists for nearby values of $\mu\approx \mu_M(\alpha,\nu)$, which can be viewed as the result of an energy barrier. This region of $\mu$-values around $\mu_M(\alpha,\nu)$ is generically bounded by fold points and yields the so-called pinning region for the bifurcating homoclinic orbits and their oscillating branch. We refer to \cite{BurkeKnobloch2006,BurkeKnobloch2007,Beck2009} for more details. 
Since the trivial solution satisfies $E[u_h = 0] =0$, we thus search for $\mu$ at which the roll solutions have zero energy. For $\alpha = 2$, $\nu=1.6$ we numerically find that $\mu_M(2,1.6) \approx -0.2$. 
See Figure~\ref{figure: maxwell and snakings}(a). 
		
Concerning \eqref{nonsmooth SHE}, we are interested in the changes to snaking as $\alpha$ decreases to $1$. In Figure~\ref{figure: maxwell and snakings}(b) we plot the resulting modified snaking branches for several values of such $\alpha$. Notably, the width of the snaking region and the norm decrease as $\alpha$ decreases. Indeed, we know from \S\ref{sec:rolls}, cf.\ Figure~\ref{figure: periodic for different alphas}, that the bifurcation of periodic rolls at $\alpha=1$ is supercritical, so that one would expect the snaking disappears at $\alpha=1$. 
This is corroborated by the curve resulting from the Maxwell points plotted in Figure~\ref{figure: all at once}(a) in purple, which connects to zero amplitude at $\mu_M(1,1.6) \approx -1.1$,  
coinciding with the bifurcation point $\mu^*$ from \S\ref{subsec: Bifurcation of periodic rolls}, at which the onset of the $2\pi$-periodic solution occurs. 
Here we note that localised solutions are computed on a bounded interval of length $L = 400$ and the vertical axis is the $L^2$-norm normalised by this interval length. Therefore, the Maxwell point curve lies above the snaking curve.

As mentioned, the disappearance of homoclinic snaking at $\alpha = 1$ can be viewed as a consequence of the switch from subcritical for $\alpha\in(1,2)$ to supercritical roll bifurcation for $\alpha=1$. Such a change in criticality also occurs for fixed $\alpha\in(1,2)$ when $\nu$ changes sign from positive, and for $\alpha=2$ when $\nu$ decreases below $\sqrt{\tfrac{27}{38}}$. However, in the latter case, the location of the bifurcation point is fixed at $\mu=0$ while in the transition from $\alpha>1$ to $\alpha=1$ this point discontinuously jumps from $\mu=0$ to $\mu=\mu^*\in (-\nu,0)$ as discussed in \S\ref{subsec: Bifurcation of periodic rolls}. 

In terms of the branch geometry, we have a transition from a supercritical half-pitchfork of rolls for $\alpha=1$  
to two branches for $\alpha>1$: one folded and one straight. (While this is superficially reminiscent of an imperfect pitchfork, the latter has two branches before and after.)  
The transition from $\alpha=1$ to $\alpha>1$ is a form of singular bifurcation that would be interesting to associate with a classification. It leads to the simultaneous bifurcation of rolls from $u=0$ for all values of $\mu$ in the interval $(\mu^*,0)$. In this sense, the appearance of snaking when increasing $\alpha$ from $\alpha=1$ may be interpreted as a non-smooth birth of homoclinic snaking that is not associated with a local switch from super- to subcriticality. 

\subsection{Homoclinic snaking at $\alpha=1$}
\label{subsec: homoclinic snaking at 1}

 The previous discussion highlights the relevance of the criticality of the roll bifurcation at $\alpha=1$ for homoclinic snaking. In this short section we illustrate how a modification to \eqref{nonsmooth SHE} can maintain subcriticality at $\alpha=1$ and thus prevent the disappearance of snaking. 

Recall the homogeneous steady state equation \eqref{e:homeqal1a}, which yielded a supercritical half-pitchfork at $1-\nu$. For this equation one readily sees that criticality can be changed to a subcritical half-transcritical bifurcation by adding a quadratic term $\rho u^2$ with any $\rho>0$ to the right hand side; for larger $|u|$ the branch then folds according to the dominant cubic term. This motivates adding such an additional quadratic term to \eqref{nonsmooth SHE} and we thus consider 
		\begin{equation}\label{additional rho}
			\partial_t u= -(1+\partial_x^2)^2 u + \mu u + \nu |u|^{\alpha} + \rho u^2 -u^3 ,
		\end{equation}
		with additional parameter $\rho$. The energy associated with the modified equation reads
		\begin{equation*}
			E[u] = \int _{\Omega} \frac{1}{2} \bigl( (1+\partial_x^2) u \bigl)^2 - \frac{1}{2} \mu u^2 - \frac{\nu u|u|^\alpha}{\alpha+1} - \frac{\rho u^3}{3}    + \frac{1}{4}u^4 \,dx,
		\end{equation*}
which admits computing the Maxwell points.  Indeed, for $\rho>0$ we numerically find that the quadratic term also yields subcriticality for rolls, and thus maintains homoclinic snaking at $\alpha=1$. See Figure~\ref{figure: all at once}(b). Notably, there is no threshold on $\rho>0$ to maintain subcriticality just as for the homogeneous steady states mentioned above. We also note that the bifurcation point $\mu^*$ for $\alpha=1$ does not depend on $\rho$. 

		\begin{figure}
			\centering
			\begin{tabular}{cc}
				\includegraphics[width=0.45\textwidth]{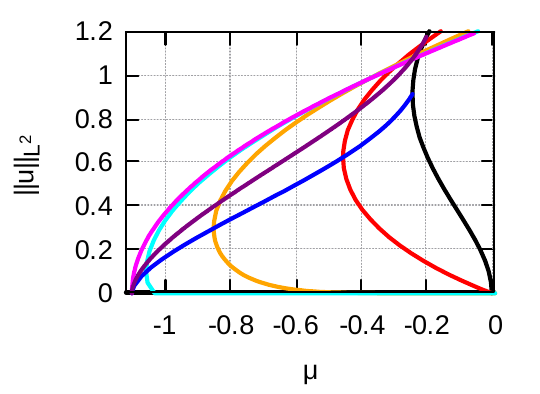} & \includegraphics[width=0.44\textwidth]{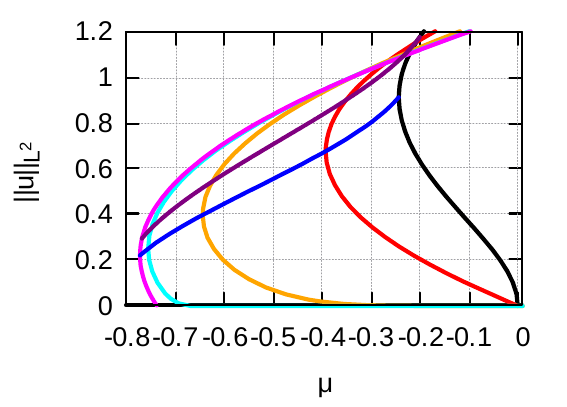} \\
				(a) & (b)
			\end{tabular}
			\caption{$2\pi$-periodic rolls for $\alpha = 2 \text{ (black)}, 1.5 \text{ (red)}, 1.1 \text{ (orange)}, 1.01 \text{ (cyan)}, 1 \text{ (magenta)}$ and the positions of fold points (blue), and Maxwell points (purple) of SHE with nonlinearity (a) $\nu |u|^\alpha - u^3$ with $\nu =1.6$, and (b) $\nu |u|^\alpha +\rho u^2- u^3$ with $\nu = 1.2$ and $\rho = 0.4$, showing that subcriticality persists at $\alpha=1$.}
			\label{figure: all at once}
		\end{figure}

\section{Discussion}\label{sec:discuss}

	We have investigated the one-dimensional generalized quadratic-cubic Swift-Hohenberg equation \eqref{nonsmooth SHE} with quadratic term replaced by $\nu |u|^\alpha$  for $\alpha \in [1,2]$, except in \S\ref{subsec: homoclinic snaking at 1}, where both even terms are present. Our analysis focused on three fundamental classes of steady state solutions: spatially homogeneous states, roll solutions, and homoclinic orbits that interact with rolls in snaking structures. The results reveal how the transition from the classical smooth case $(\alpha=2)$ to the non-differentiable case $(\alpha=1)$ affects the bifurcation structure and the existence of localized states.
	
	Considering the spatially homogeneous solutions in \S\ref{sec:homrolls} has been an instructive, yet straightforward problem. Here the classical transcritical bifurcation for $\alpha=2$ at $\mu=1$ persists for all $\alpha \in (1,2]$. As $\alpha$ decreases from $2$ towards $1$, the nontrivial branch becomes progressively flatter. In the limiting case $\alpha=1$, the transcritical bifurcation is replaced by two supercritical ``half-pitchfork'' bifurcations occurring at $\mu = 1 \pm \nu$. In all cases, the left part of the branch folds back to $\mu\geq 0$. In this transition from $\alpha>1$ to $\alpha=1$, the bifurcation point thus splits into two and jumps, and part of the non-trivial branch exhibits a discontinuous breaking; a signature of the non-differentiable nature of the nonlinearity at $\alpha=1$.
	
	We have found in \S\ref{sec:rolls} that the bifurcation of roll solutions undergoes a similar, but more subtle change. For the classical case $\alpha=2$, rolls with critical wavenumber $k=1$ bifurcate from $\mu=0$ with criticality depending on the sign of $\nu^2-\frac{27}{38}$. In contrast, the bifurcation is subcritical for all $\alpha \in (1,2)$, but still at $\mu=0$, and the non-trivial branch switches from being locally concave to convex for $\nu>0$ below $3/2$. This part becomes increasingly flat as $\alpha$ approaches $1$, and in all cases folds back to $\mu\geq 0$. 
	At $\alpha=1$, the bifurcation point jumps from $\mu=0$ to a value  $\mu^*(\nu)\in(-\nu,0)$, and the flattened part of the non-trivial branch merges with the trivial one, thus yielding a supercritical ``half-pitchfork'' bifurcation of rolls.  The non-smooth setting does not readily permit a standard analytical treatment, in particular at $\alpha=1$, which precludes an explicit formula of $\mu^*$. However, we obtain an expansion for $0<\nu \ll1$ and generally predict the local geometry for $\alpha\in(1,2]$ using suitable estimates, and reduce the problem to a system of algebraic equations. In addition, energy estimates show that the trivial state is stable for $\mu<-\nu$ and unstable for $\mu>0$ so that $\mu^*(\nu)>0$. Moreover, numerical results consistently indicate that the primary bifurcating branch, when increasing $\mu$ from $-\nu$, corresponds to $2\pi$-periodic roll solutions.
	
	The differences in roll bifurcation criticality (subcritical for $\alpha\in(1,2)$, supercritical for $\alpha=1$) directly impact the occurrence of homoclinic snaking in \S\ref{sec:snaking}. In the classical case $\alpha=2$, snaking occurs only when the roll branch is subcritical, i.e., $\nu^2>\frac{27}{38}$, giving rise to the well-known hysteresis mechanism between uniform and patterned states. Consistent with this, homoclinic snaking persists for every $\alpha \in (1,2)$, where the roll branch bifurcation is subcritical, and snaking disappears at $\alpha=1$, where the non-smooth transition to supercritical bifurcation occurs. Further corroborating this geometric relation of subcriticality and snaking, we find that snaking persists at $\alpha=1$ when enforcing a subcritical roll bifurcation via the additional term $\rho u^2$, cf.\ \S\ref{subsec: homoclinic snaking at 1}.

\medskip	
	Several directions are thus motivated for future investigation. 
	Directly connecting to the results, it would be interesting to extend the rigorous analysis and prove $\mu^*(\nu)<0$ (as suggested by the destabilisation from Remark~\ref{rem:energynondecay}),  
	the uniqueness of this bifurcation point for any $\nu$, and the supercriticality. 
	 It would be particularly interesting to understand why for $\alpha=1$, despite the lack of differentiability, the Fourier-symbol still correctly predicts the critical wavenumber $1$ at the primary bifurcation. 
		
	A natural extension of the nonlinearity would be to consider the form 	$\nu |u|^\alpha - u|u|^\beta$, $\alpha \in [1,2]$, $\beta \in [0,2]$, which allows for a broader study of how non-smoothness influences bifurcation structure, pattern formation, and homoclinic snaking through the competition of $\alpha$ and $\beta$. 
	
	Another relevant direction would be the stability analysis of the solutions. In the differentiable range $\alpha\in(1,2)$ this can be done via the dispersion relation, but again at $\alpha=1$ this is much more challenging. In the context of stability, it would be relevant to determine in what way the additional ``ladders''-structure from classical homoclinic snaking behaves. Moreover, how do invasion waves of patterns behave?

	Last but not least, it would interesting to study pattern forming bifurcations in higher space dimensions, in particular squares and hexagons, and their transitions for $\alpha\in[1,2)$.

\section*{Acknowledgement} The authors acknowledge the support by the Deutsche Forschungsgemeinschaft (DFG) within the Research Training Group GRK 2583 "Modeling, Simulation and Optimization of Fluid Dynamic Applications", as well as from funds from the University of Hamburg.

\bibliographystyle{alpha}
\bibliography{sample}

\newcommand{\etalchar}[1]{$^{#1}$}
\begin{thebibliography}{SMCGK23}

\bibitem[AAR99]{andrews1999special}
G.~E. Andrews, R.~Askey, and R.~Roy.
\newblock {\em Special Functions}, volume~71 of {\em Encyclopedia of
  Mathematics and Its Applications}.
\newblock Cambridge University Press, Cambridge, 1999.

\bibitem[AP93]{AmbrosettiProdi1993}
A.~Ambrosetti and G.~Prodi.
\newblock {\em A Primer of Nonlinear Analysis}, volume~34 of {\em Cambridge
  Studies in Advanced Mathematics}.
\newblock Cambridge University Press, Cambridge, 1993.

\bibitem[AW96]{AulbachWanner1996}
B.~Aulbach and T.~Wanner.
\newblock Integral manifolds for {C}arath\'eodory type differential equations
  in {B}anach spaces.
\newblock In {\em Six lectures on dynamical systems ({A}ugsburg, 1994)}, pages
  45--119. World Sci. Publ., River Edge, NJ, 1996.

\bibitem[Bak03]{Bakanas2003}
R.~Bakanas.
\newblock Travelling fronts in a piecewise-linear bistable system.
\newblock {\em Nonlinearity}, 16(1):313--325, 2003.

\bibitem[BK06]{BurkeKnobloch2006}
J.~Burke and E.~Knobloch.
\newblock Localized states in the generalized {S}wift-{H}ohenberg equation.
\newblock {\em Phys. Rev. E (3)}, 73(5):056211, 15, 2006.

\bibitem[BK07]{BurkeKnobloch2007}
J.~Burke and E.~Knobloch.
\newblock Snakes and ladders: localized states in the {S}wift-{H}ohenberg
  equation.
\newblock {\em Phys. Lett. A}, 360(6):681--688, 2007.

\bibitem[BKL{\etalchar{+}}09]{Beck2009}
M.~Beck, J.~Knobloch, D.~J.~B. Lloyd, B.~Sandstede, and T.~Wagenknecht.
\newblock Snakes, ladders, and isolas of localized patterns.
\newblock {\em SIAM J. Math. Anal.}, 41(3):936--972, 2009.

\bibitem[BKPS23]{BelykhKuskePorfiriSimson2023}
I.~Belykh, R.~Kuske, M.~Porfiri, and D.~J.~W. Simpson.
\newblock Beyond the bristol book: Advances and perspectives in non-smooth
  dynamics and applications.
\newblock {\em Chaos: An Interdisciplinary Journal of Nonlinear Science},
  33(1):010402, 01 2023.

\bibitem[CG09]{CrossGreenside2009}
M.~Cross and H.~Greenside.
\newblock {\em Pattern Formation and Dynamics in Nonequilibrium Systems}.
\newblock Cambridge University Press, 2009.

\bibitem[CH93]{CrossHohenberg1993}
M.~C. Cross and P.~C. Hohenberg.
\newblock Pattern formation outside of equilibrium.
\newblock {\em Reviews of Modern Physics; (United States)}, 65:3, 07 1993.

\bibitem[dBBCK08]{diBernardo2007}
M.~di~Bernardo, C.~J. Budd, A.~R. Champneys, and P.~Kowalczyk.
\newblock {\em Piecewise-Smooth Dynamical Systems: Theory and Applications},
  volume 163 of {\em Applied Mathematical Sciences}.
\newblock Springer, London, 2008.

\bibitem[Dev76]{MR402815}
R.~L. Devaney.
\newblock Reversible diffeomorphisms and flows.
\newblock {\em Trans. Amer. Math. Soc.}, 218:89--113, 1976.

\bibitem[DO21]{DoedelOldeman2021}
E.~J. Doedel and B.~E. Oldeman.
\newblock {\em AUTO-07P: Continuation and Bifurcation Software for Ordinary
  Differential Equations}.
\newblock Concordia University and McGill HPC Centre, Montreal, Canada, August
  2021.

\bibitem[HI11]{HaragusIooss}
M.~Haragus and G.~Iooss.
\newblock {\em Local Bifurcations, Center Manifolds, and Normal Forms in
  Infinite-Dimensional Dynamical Systems}.
\newblock Universitext. Springer, London, 2011.

\bibitem[Hos18]{Hosham2018}
H.~A. Hosham.
\newblock Bifurcations in four-dimensional switched systems.
\newblock {\em Advances in Difference Equations}, 2018(1):388, 2018.

\bibitem[KH11]{KuepperHosham2011}
T.~K{\"u}pper and H.~A. Hosham.
\newblock Reduction to invariant cones for non-smooth systems.
\newblock {\em Mathematics and Computers in Simulation}, 81(5):980--995, 2011.

\bibitem[KHW13]{KuepperHoshamWeiss2013}
T.~K{\"u}pper, H.~A. Hosham, and D.~Weiss.
\newblock Bifurcation for non-smooth dynamical systems via reduction methods.
\newblock In Andreas Johann, Hans-Peter Kruse, Florian Rupp, and Stephan
  Schmitz, editors, {\em Recent Trends in Dynamical Systems}, pages 79--105,
  Basel, 2013. Springer Basel.

\bibitem[KMM{\etalchar{+}}23]{KatsimigaMistakidis2023}
G.~C. Katsimiga, S.~I. Mistakidis, B.~A. Malomed, D.~J. Frantzeskakis,
  R.~Carretero-Gonzalez, and P.~G. Kevrekidis.
\newblock Interactions and dynamics of one-dimensional droplets, bubbles and
  kinks.
\newblock {\em Condensed Matter}, 8(3), 2023.

\bibitem[LBH24]{BettiniCenedeseHaller2024}
M.~Cenedese L.~Bettini and G.~Haller.
\newblock Model reduction to spectral submanifolds in piecewise smooth
  dynamical systems.
\newblock {\em International Journal of Non-Linear Mechanics}, 163:104753,
  2024.

\bibitem[PRV22]{PonceRosVela2022}
E.~Ponce, J.~Ros, and E.~Vela.
\newblock {\em Bifurcations in continuous piecewise linear differential
  systems---applications to low-dimensional electronic oscillators}, volume~7
  of {\em RSME Springer Series}.
\newblock Springer, Cham, 2022.

\bibitem[PRY23]{PruggerRademacherYang2023}
A.~Prugger, J.~D.~M. Rademacher, and J.~Yang.
\newblock Rotating shallow water equations with bottom drag: bifurcations and
  growth due to kinetic energy backscatter.
\newblock {\em SIAM J. Appl. Dyn. Syst.}, 22(3):2490--2526, 2023.

\bibitem[SH77]{SwiftHohenberg1977}
J.~Swift and P.~C. Hohenberg.
\newblock Hydrodynamic fluctuations at the convective instability.
\newblock {\em Physical Review A}, 15(1):319--328, 1977.

\bibitem[Sim22]{Simpson2022}
D.~J.~W. Simpson.
\newblock Twenty {H}opf-like bifurcations in piecewise-smooth dynamical
  systems.
\newblock {\em Phys. Rep.}, 970:1--80, 2022.

\bibitem[SMCGK23]{SaqlainMithun2023}
S.~Saqlain, T.~Mithun, R.~Carretero-Gonz\'alez, and P.~G. Kevrekidis.
\newblock Dragging a defect in a droplet bose-einstein condensate.
\newblock {\em Phys. Rev. A}, 107:033310, Mar 2023.

\bibitem[SZR20]{ZazoRdemacher2020}
M.~Steinherr~Zazo and J.~D.~M. Rademacher.
\newblock Lyapunov coefficients for {H}opf bifurcations in systems with
  piecewise smooth nonlinearity.
\newblock {\em SIAM J. Appl. Dyn. Syst.}, 19(4):2847--2886, 2020.

\bibitem[SZR23]{ZazoRademacher2023}
M.~Steinherr~Zazo and J.~D.~M. Rademacher.
\newblock Bifurcation control for a ship maneuvering model with nonsmooth
  nonlinearities.
\newblock {\em SIAM J. Control Optim.}, 61(1):225--251, 2023.

\bibitem[WC99]{WoodsChampneys1999}
P.~D. Woods and A.~R. Champneys.
\newblock Heteroclinic tangles and homoclinic snaking in the unfolding of a
  degenerate reversible {H}amiltonian-{H}opf bifurcation.
\newblock {\em Phys. D}, 129(3-4):147--170, 1999.

\bibitem[ZTH19]{ZemskovTsyganov2019}
E.~P. Zemskov, M.~A. Tsyganov, and W.~Horsthemke.
\newblock Multifront regime of a piecewise-linear fitzhugh-nagumo model with
  cross diffusion.
\newblock {\em Phys. Rev. E}, 99:062214, Jun 2019.

\end{thebibliography}

\end{document}